\documentclass[acmsmall]{acmart}

\usepackage{ifthen}

\newcommand{\version}{full} %

\usepackage{algorithm}
\usepackage[noend]{algpseudocode}
\algrenewcommand{\algorithmiccomment}[1]{\textcolor{gray}{\# #1}}%
\algrenewcommand\algorithmicdo{\textbf{\texttt{:}}}%
\algrenewcommand\algorithmicthen{\textbf{\texttt{:}}}%
\algrenewcommand\algorithmicfunction{\textbf{\texttt{def}}}%
\algrenewcommand\algorithmicindent{1.5ex}%

\usepackage{varwidth}
\usepackage{wrapfig}

\usepackage{amsmath}

\usepackage{amssymb}
\usepackage{array}

\usepackage{stmaryrd}

\usepackage{caption}
\usepackage{subcaption}
\usepackage{listings}
\usepackage{mathpartir} %

\usepackage{xcolor}

\usepackage{xspace} %
\usepackage{paralist} %

\usepackage{tikz}
\usepackage{pgf}
\usepackage{pgfplots, pgfplotstable}
\usetikzlibrary{patterns}

\newcounter{groupcount}
\pgfplotsset{
    draw group line/.style n args={5}{
        after end axis/.append code={
            \setcounter{groupcount}{0}
            \pgfplotstableforeachcolumnelement{#1}\of\datatable\as\cell{%
                \def\temp{#2}
                \ifx\temp\cell
                    \ifnum\thegroupcount=0
                        \stepcounter{groupcount}
                        \pgfplotstablegetelem{\pgfplotstablerow}{X}\of\datatable
                        \coordinate [yshift=#4] (startgroup) at (axis cs:\pgfplotsretval,0);
                    \else
                        \pgfplotstablegetelem{\pgfplotstablerow}{X}\of\datatable
                        \coordinate [yshift=#4] (endgroup) at (axis cs:\pgfplotsretval,0);
                    \fi
                \else
                    \ifnum\thegroupcount=1
                        \setcounter{groupcount}{0}
                        \draw [
                            shorten >=-#5,
                            shorten <=-#5
                        ] (startgroup) -- node [anchor=base, yshift=-2ex] {#3} (endgroup);
                    \fi
                \fi
            }
            \ifnum\thegroupcount=1
                        \setcounter{groupcount}{0}
                        \draw [
                            shorten >=-#5,
                            shorten <=-#5
                        ] (startgroup) -- node [anchor=base, yshift=-2ex] {#3} (endgroup);
            \fi
        }
    }
}

\usepackage{multirow}

\usepackage{circledsteps}
\tikzset{/csteps/outer color=blue}

\usetikzlibrary{positioning}
\usetikzlibrary{trees}
\usetikzlibrary{calc}
\usetikzlibrary{tikzmark}
\usetikzlibrary{matrix}
\usepgfplotslibrary{statistics} %

\ifthenelse{\equal{\version}{conf}}{
  \usepackage[appendix=strip,bibliography=common]{apxproof}
}{
  \usepackage[appendix=append,bibliography=common]{apxproof}
}

\usepackage{xpatch}
\usepackage{textcase}
\makeatletter
\xpatchcmd{\@sect}{\uppercase}{\MakeTextUppercase}{}{}
\xpatchcmd{\@sect}{\uppercase}{\MakeTextUppercase}{}{}
\makeatother

\theoremstyle{plain}
\newtheoremrep{theorem}{Theorem}[section]
\newtheoremrep{lemma}[theorem]{Lemma}
\newtheoremrep{proposition}[theorem]{Proposition}
\newtheoremrep{claim}[theorem]{Claim}
\newtheoremrep{corollary}[theorem]{Corollary}
\newtheoremrep{observation}[theorem]{Observation}
\newtheoremrep{remark}[theorem]{Remark}

\ifthenelse{\equal{\version}{conf}}{
  \newcommand{\inConfVersion}[1]{#1} %
  \newcommand{\inFullVersion}[1]{}   %
}{
  \ifthenelse{\equal{\version}{full}}{
    \newcommand{\inConfVersion}[1]{} %
    \newcommand{\inFullVersion}[1]{#1} %
  }{
    \newcommand{\inConfVersion}[1]{#1} %
    \newcommand{\inFullVersion}[1]{#1} %
  }
}

\newcommand{\epiql}{EpiQL\xspace}

\newcommand{\card}[1]{\ensuremath{|#1|}}
\newcommand{\db}{\textit{db}}
\DeclareMathOperator{\bigo}{\mathcal{O}}

\newcommand{\insize}{\ensuremath{\size{\db}}}
\newcommand{\samplesize}{\ensuremath{k}}

\newcommand{\idxnaive}{\textsc{Bern}\xspace} %
\newcommand{\bbinom}{\textsc{Binom}\xspace} %
\newcommand{\drawgeom}{\textsc{DrawGeo}\xspace} %
\newcommand{\geom}{\textsc{Geo}\xspace} %
\newcommand{\hybrid}{\textsc{Hybrid}\xspace} %
\newcommand{\pertuple}[1]{\textsf{PT}{#1}} %

\newcommand{\binsampling}{\textsf{M-BJ}\xspace} 
\newcommand{\syasampling}{\textsf{M-SYA}\xspace} 
\newcommand{\syasamplingC}{\textsf{M-CSYA}\xspace} 
\newcommand{\syasamplingU}{\textsf{M-USYA}\xspace} 
\newcommand{\geomsampling}{\textsf{I-{\sf \sc Geo}}\xspace}
\newcommand{\geomsamplingC}{\textsf{I$_{\mathsf{C}}$-{\sc\sf Geo}}\xspace}
\newcommand{\geomsamplingU}{\textsf{I$_{\mathsf{U}}$-{\sc\sf Geo}}\xspace}
\newcommand{\nugeomsampling}{\textsf{I-\pertuple{\geom}}\xspace}

\newcommand{\naivesampling}{\textsf{I-{\sf \sc Bern}}\xspace} %
\newcommand{\naivesamplingC}{\textsf{I$_{\mathsf{C}}$-{\sf \sc Bern}}\xspace} %
\newcommand{\naivesamplingU}{\textsf{I$_{\mathsf{U}}$-{\sf \sc Bern}}\xspace} %
\newcommand{\nunaivesampling}{\textsf{I-\pertuple{\idxnaive}}\xspace} %

\newcommand{\pertuplesampling}{\textsf{I-\pertuple{\hybrid}}\xspace}
\newcommand{\pertuplesamplingC}{\textsf{I$_{\mathsf{C}}$-\pertuple{\hybrid}}\xspace}
\newcommand{\pertuplesamplingU}{\textsf{I$_{\mathsf{U}}$-\pertuple{\hybrid}\xspace}}

\newcommand{\chainedNoOpt}{\textsf{chained}}
\newcommand{\chainedOpt}{\textsf{chained-opt}}
\newcommand{\unchainedNoOpt}{\textsf{unchained}}
\newcommand{\unchainedOpt}{\textsf{unchained-opt}}

\newcommand{\ya}{\textsf{YA}\xspace}

\newcommand{\nsa}{\textsf{NSA}\xspace}
\newcommand{\twonsa}{\textsf{2NSA}\xspace}

\newcommand{\sya}{\textsf{SYA}\xspace}
\newcommand{\syafull}{Shredded Yannakakis\xspace}

\newcommand{\seq}[1]{\ensuremath{\overline{#1}}}

\newcommand{\restr}[2]{\ensuremath{{#1}[{#2}]}}

\newcommand{\x}{\ensuremath{\seq x}}
\newcommand{\y}{\ensuremath{\seq y}}
\newcommand{\z}{\ensuremath{\seq z}}

\newcommand{\job}{JOB\xspace}

\newcommand{\stats}{STATS-CEB\xspace}

\newcommand{\low}{\texttt{low}\xspace}
\newcommand{\medium}{\texttt{medium}\xspace}
\newcommand{\high}{\texttt{high}\xspace}

\newcommand{\jobtitle}{\textit{Title}\xspace}
\newcommand{\users}{\textit{Users}\xspace}

\newcommand{\multileft}{\ensuremath{\{\!\!\{}}
\newcommand{\multiright}{\ensuremath{\}\!\!\}}}
\newcommand{\bag}[1]{\multileft #1 \multiright}
\newcommand{\bagof}[2]{\bag{#1 \mid #2}}

\newcommand{\defeq}{\stackrel{\text{def}}{=}}

\newcommand{\size}[1]{\ensuremath{|#1|}}

\DeclareMathOperator{\bern}{\beta}

\newcommand{\full}[1]{\ensuremath{\hat{#1}}}

\newcommand{\mas}{M\&S\xspace}
\newcommand{\iap}{I\&P\xspace}

\newcommand{\idx}{\textit{idx}}
\newcommand{\access}{\textsc{get}}
\newcommand{\pos}{\textit{pos}}

\newcommand{\flatten}{\ensuremath{\mu^*}}
\DeclareMathOperator{\group}{\gamma}

\newcommand{\nsemijoinsymb}{\tikz[baseline=-2pt]{\draw (0,-0.6ex)--(0.5ex,0)--(0,0.6ex)--(0,-0.6ex);\draw[->, >=stealth,draw] (0.5ex,0)--(1.75ex,0);}}

\DeclareMathOperator{\nsemijoin}{\ltimes_{\nu}}

\DeclareMathOperator{\flatsub}{\textit{attr}}
\DeclareMathOperator{\rsub}{\textit{sch}}

\DeclareMathOperator{\weightval}{\textit{weight}}

\newcommand{\phys}[1]{\ensuremath{\mathtt{#1}}}

\newcommand{\store}[1]{\ensuremath{\Sigma_{#1}}}

\newcommand{\len}[1]{\ensuremath{\card{#1}}}

\newcommand{\hd}{\textsf{hd}}
\newcommand{\hol}[1]{\ensuremath{\textsf{hd}_{#1}}}
\newcommand{\holshort}[1]{\textsf{hd}#1}
\newcommand{\weight}[1]{\ensuremath{\textsf{w}_{#1}}}

\newcommand{\nxt}{\textsf{nxt}}

\newcommand{\llen}[1]{\ensuremath{\textsf{len}_{#1}}}

\DeclareMathOperator{\imod}{mod}
\DeclareMathOperator{\idiv}{div}

\newcommand{\csr}{CSR\xspace}
\newcommand{\csrs}{CSRs\xspace}

\newcommand{\usr}{USR\xspace}
\newcommand{\usrs}{USR\xspace}

\newcommand{\perm}{\textsf{perm}}
\newcommand{\start}[1]{\ensuremath{\textsf{start}_{#1}}}
\newcommand{\prefix}{\ensuremath{\textsf{pref}}}
\newcommand{\degree}{deg}

\newcommand{\contactquery}{Q_c}

\setcopyright{cc}
\setcctype{by-nc-nd}
\acmJournal{PACMMOD}
\acmYear{2026} \acmVolume{4} \acmNumber{3 (SIGMOD)} \acmArticle{224}
\acmMonth{6} \acmDOI{10.1145/3802101}

\received{October 2025}
\received[revised]{January 2026}
\received[accepted]{February 2026}

\ccsdesc[300]{Theory of computation~Database query processing and optimization (theory)}
\ccsdesc[300]{Theory of computation~Database query languages (principles)}
\ccsdesc[100]{Information systems~Main memory engines}
\ccsdesc[500]{Information systems~Query optimization}
\ccsdesc[300]{Information systems~Join algorithms}
\ccsdesc[500]{Theory of computation~Sketching and sampling}

\keywords{Poisson sampling, acyclic joins, Yannakakis, semijoin, join, database, query processing, nested relational model}

\title{Poisson Sampling over Acyclic Joins}

\author{Liese Bekkers}
\orcid{0000-0002-2725-2131}
\affiliation{
  \institution{UHasselt, Data Science Institute}
  \country{Belgium}
}
\email{liese.bekkers@uhasselt.be}

\author{Frank Neven}
\orcid{0000-0002-7143-1903}
\affiliation{
  \institution{UHasselt, Data Science Institute}
  \country{Belgium}
}
\email{frank.neven@uhasselt.be}

\author{Lorrens Pantelis}
\orcid{0009-0004-4882-2326}
\affiliation{
  \institution{UHasselt, Data Science Institute}
  \country{Belgium}
}
\email{lorrens.pantelis@student.uhasselt.be}

\author{Stijn Vansummeren}
\orcid{0000-0001-7793-9049}
\affiliation{
  \institution{UHasselt, Data Science Institute}
  \country{Belgium}
}
\email{stijn.vansummeren@uhasselt.be}

\begin{document}

\begin{abstract}
  We introduce the problem of Poisson sampling over joins: compute a sample of
  the result of a join query by conceptually performing a Bernoulli trial for
  each join tuple, using a non-uniform and tuple-specific probability. We
  propose an algorithm for Poisson sampling over acyclic joins that is nearly
  instance-optimal, running in time $\bigo(\insize + k \log \insize)$ where
  $\insize$ is the size of the input database, and $k$ is the size of the
  resulting sample. Our algorithm hinges on two building blocks: (1) The
  construction of a random-access index that allows, given a number $i$, to
  randomly access the $i$-th join tuple without fully materializing the
  (possibly large) join result; (2) The probing of this index to construct the
  result sample. We study the engineering trade-offs required to make both
  components practical, focusing on their implementation in column stores, and
  identify best-performing alternatives for both.  Our experiments show that this pair of alternatives significantly outperforms the
  repeated-Bernoulli-trial algorithm for Poisson sampling while also
  demonstrating that the random-access index by itself can be used to
  competively implement Yannakakis' acyclic join processing algorithm when no
  sampling is required. This shows that, as far a query engine design is
  concerned, it is possible to adopt a common basis for both classical acyclic
  join processing and Poisson sampling, both without regret compared to
  classical join and sampling algorithms.
\end{abstract}

\maketitle

\section{Introduction}
\label{sec:intro}

Drawing a fixed-size sample from the result of a query has numerous  interesting
applications including online
aggregation~\cite{DBLP:conf/sigmod/AcharyaGPR99a,DBLP:conf/sigmod/HaasH99,DBLP:journals/tods/LiWYZ19},
large-scale analytics~\cite{DBLP:conf/sigmod/ZhaoC0HY18} and query optimization
in general.  Formally, the sampling problem is usually phrased as follows: given
a query $Q$, input database $\db$ and desired sample size $\samplesize$, draw
$\samplesize$ tuples uniformly at random from $Q(\db)$, without replacement.  A
naive 
solution is to first materialize $Q(\db)$ in a table, and then randomly access
the table to perform the sampling. However, this naive method is inefficient
when $Q$ is a join query since it requires to compute the full join
result---which can be orders of magnitude larger than both the input database
$\db$ and the sample size $\samplesize$---and hence wastes time computing tuples
that do not contribute to the desired output.  Alternative techniques that avoid
computing the full join exist and are typically based on designing an index
structure that can be used to guide the sampling
process~\cite{DBLP:conf/sigmod/AcharyaGPR99a,DBLP:conf/sigmod/ChaudhuriMN99,DBLP:conf/sigmod/ZhaoC0HY18}. 
Specifically,
for \emph{acyclic joins} (formally defined in Section~\ref{sec:problem}) index structures
have been proposed that can be computed in $\bigo(\card{\db})$ time, which can
then be used to sample a single result tuple in $\bigo(\log \card{\db})$
time---yielding overall complexity $\bigo(\card{\db} + k \log \card{\db})$~\cite{DBLP:journals/tods/CarmeliZBCKS22,DBLP:journals/sigmod/DaiHY25}.

In this paper, instead of drawing a fixed-size (uniform) sample, 
which requires each output tuple to be sampled with the same uniform probability,
we study the
following more general problem, which we call \emph{Poisson sampling}. In
this problem, each join tuple output by $Q$ specifies the (not necessarily
uniform) probability with which it is to be included in the sample, and the
sample is to be taken by conceptually doing a Bernoulli trial for each tuple
with its specified probability. 

Poisson sampling has applications in Markov-chain based simulations~\cite{caiSimulationDatabasevaluedMarkov2013,jampaniMonteCarloDatabase2011}. In
particular, we are motivated  by our efforts of
designing \epiql, a declarative language and accompanying high-performance data
engine that focuses on simulation of discrete agent-based
infectious disease models~\cite{Willem2017IBMsReview,Hoang2019SocialContactReview}. Such models are used by epidemiologists to
predict the evolution of transmissible diseases such as measles, influenza, or
covid in various situations.

\newcommand{\personmember}{\texttt{Person}}
\newcommand{\contactprob}{\texttt{ContactProb}}
\newcommand{\contact}{\texttt{Contact}}
\newcommand{\contactprobability}{\texttt{prob}}
\newcommand{\poolid}{\texttt{pool}}
\newcommand{\ageone}{\texttt{age1}}
\newcommand{\agetwo}{\texttt{age2}}
\newcommand{\age}{\texttt{age}}
\newcommand{\personid}{\texttt{pers}}
\newcommand{\personidone}{\texttt{per1}}
\newcommand{\personidtwo}{\texttt{per2}}
\newcommand{\newlyinfected}{\texttt{Infected}}
\newcommand{\infectious}{\texttt{Infectious}}
\newcommand{\susceptible}{\texttt{Susceptible}}
\newcommand{\transmissionprobability}{\ensuremath{p_t}}

\begin{example}
  \label{ex:epiql-contacts}
  \label{ex:epiql-transmission}
  \it
Infectious disease models simulate contact events between individuals in a population in order to calculate, at each simulation timestep, which subset of the population is susceptible to infection, which subset is infectious, which has recovered from the infection, and so on. In EpiQL, the following conjunctive query rules can be used to define the transition from the $\susceptible$ to $\newlyinfected$  disease state. Other rules (not shown)  compute from the $\newlyinfected$ population the subpopulation that becomes $\infectious$, and which can hence cause further spread in later simulation steps.
{\small
  \begin{align*}
    \newlyinfected(\personidone) & \gets \susceptible(\personidone),  \infectious(\personidtwo), \\ &  \contact(\personidone, \personidtwo),  Bernoulli(\transmissionprobability)  \\
    \contact(\personidone, \personidtwo) & \gets \personmember(\personidone, \ageone, \poolid), \\ & \personmember(\personidtwo, \agetwo, \poolid),\\ &  \contactprob(\poolid, \ageone, \agetwo, \contactprobability),\\ & Bernoulli(\contactprobability) 
  \end{align*}}
The first query states that susceptible person $\personidone$ may become infected with probability $p_t$ if there was a contact event with  infectious person $\personidtwo$. Disease models simulate contact events by dividing  the population into \emph{contact pools} (e.g., family households, schools, workplaces) and modeling  contact within a pool according  to statistics derived from real-world diary studies~\cite{Hoang2021ContactFlanders,Willem2012WeatherContacts}. In line with this, the second query above uses a relation $\contactprob(\poolid, \ageone, \agetwo, \contactprobability)$ that indicates that in $\poolid$ there is a $\contactprobability$ chance that two people aged $\ageone$ and $\agetwo$ meet, and a relation $\personmember(\personid, \age, \poolid)$ listing each individual’s age and pool memberships. In particular, the $\contact$ query  joins the \personmember\xspace and \contactprob\xspace relations. From the resulting join, each tuple is independently sampled with probability \contactprobability.  
The sampled tuples are then projected onto the participating persons.
We note that both conjunctive queries are acyclic. 
\end{example}%

Similar to the case for fixed-size sampling, it is important to design
algorithms for Poisson sampling  \emph{without} first
materializing the full join result. To illustrate, in \epiql we found that, even
for a small country with a population of $\pm 10^{7}$ people using \contactprob\xspace data collected from real-world diary studies~\cite{Hoang2021ContactFlanders,Willem2012WeatherContacts}, the full join output
size of the contact query is $\pm 10^{10}$ whereas the expected sample size is only $\pm 10^{8}$, two orders of magnitude less.

In this paper, we design algorithms for Poisson
sampling over acyclic joins that avoid materialization of the full join result and which are nearly instance-optimal from an asymptotic complexity viewpoint.
Furthermore, we study the engineering trade-offs required to make these algorithms practical in column stores. Our contributions are as follows:

\smallskip
\noindent
(1) We introduce the problem of Poisson sampling over join queries, which
generalizes fixed-size sampling. 

\smallskip
\noindent
(2) For acyclic joins, we show that Poisson sampling can be solved in the same time as fixed-size sampling: $\bigo(\insize + \samplesize \log \insize)$, where $\insize$ is the size of the input database, and $\samplesize$ is the size of the resulting sample (Theorem~\ref{thm:poisson-sampling-complexity}). From an asymptotic complexity viewpoint, this is \emph{instance-optimal} up to a $\log \insize$ factor since any correct algorithm will need to read the input and produce the sample.
Conceptually, we follow an Index-and-Probe strategy:
\emph{(i)} we construct a random-access index for a join query $Q$ that avoids materializing the full result $Q(\db)$ while still allowing efficient retrieval of the $j$-th join tuple for each position $j$; \emph{(ii)} we determine the sequence of tuple positions in $Q(\db)$ that define the output sample (referred to as position sampling); 
\emph{(iii)} we generate the sample by repeatedly probing this index.

\smallskip
\noindent
(3) Because asymptotic complexity may hide crucial constant
factors, we next investigate the engineering trade-offs involved in making this
conceptual algorithm efficient in practice, focusing on its implementation in
column stores.

(a) \emph{Random-access index construction. }
For acyclic joins it is known that a random-access index can be constructed in $\bigo(\insize)$ time that allows random access to a single tuple in $\bigo(\log \insize)$ time~\cite{DBLP:journals/tods/CarmeliZBCKS22, DBLP:conf/csl/Brault-Baron12}. Note that, crucially, the query result $Q(\db)$ may be much larger than $\insize$. Constructing a random-access index is closely tied to Yannakakis' seminal
algorithm (\ya) for processing acyclic joins~\cite{yannakakisAlgorithmsAcyclicDatabase1981}, but with additional logic to
ensure $\bigo(\log \insize)$  access time. There has recently been
extensive interest in implementing YA 
 in a manner that is also competitive with traditional binary join algorithms for inputs with  no or little dangling join tuples---situations where YA has
traditionally been slower than binary
joins~\cite{DBLP:journals/pvldb/BekkersNVW25,DBLP:conf/cidr/YangZYK24,Debuking2025Zhao,Aggregate2025VLDB,Yannakakis+,hu2024treetracker}. We observe that the column store implementation approach of
Bekkers et al~\cite{DBLP:journals/pvldb/BekkersNVW25}, which is called Shredded
Yannakakis (\sya), already provides a
linear-time-constructable random-access index. This implementation is based on a \emph{chained shredded representation} (\csr) but has access time 
$\bigo(\log \insize + d)$ instead of $\bigo(\log \insize)$ where $d$ is the
largest join-degree in the input database.
To recover a $\bigo(\log \insize)$ access time, we implement 
an \emph{unchained shredded representation} (\usr) 
within the shredded Yannakakis framework
which lifts the concept of a random-access index structure as proposed by Carmeli et al~\cite{DBLP:journals/tods/CarmeliZBCKS22} to column stores.
However, we empirically find that \csr is faster to build and, suprisingly, also faster to probe which causes \csr-based sampling to outperform \usr-based sampling.

(b) \emph{Position sampling.} 
We distinguish between the uniform and non-uniform cases. For uniform sampling, we compare three strategies for generating the sequence of probe positions.
We demonstrate that these have complementary advantages depending on the specific sampling probability that is being used. Based on this observation, we design a hybrid position-sequence generation algorithm that dynamically adapts to the observed data distribution.
For the non-uniform case, we reduce the problem to a series of uniform sampling steps over groups of tuples sharing the same sampling probability, applying the hybrid method to each group. Further details are provided in Section~\ref{sec:position-sampling}.

\smallskip 
\noindent
(4) We implement all methods inside of Apache Datafusion, a high-performance main-memory-based columnar query engine written in Rust, and experimentally compare these methods based on established join query benchmarks over real-world data as well as the disease transmission use case. We find that:

(a)
The combination of \csr with our hybrid position-sampling method is most efficient across all benchmarks, even though \csr has worse asymptotic complexity than \usr.

(b)
Compared to the naive approach that first materializes the join result and then performs a Bernoulli trial per join tuple, our method is up {to $6.08$x faster}.

(c)
Finally, we observe that \csr can also be used for normal join processing (without sampling), and there we find that \csr is competitive with \usr on a wide variety of benchmarks. This illustrates that, as far as query engine design is
concerned, it suffices to adopt a \emph{single} strategy for implementing
Yannakakis without regret in a column store,\footnote{That is, in a manner that is competitive with traditional binary 
join algorithms for \emph{all} kind of inputs and not just for inputs with many dangling join tuples
for which YA is known to be efficient.} one based on \csr, as this allows for both normal join processing
and sampling.  We believe that this insight is relevant because we know of no current column store that implements the known efficient fixed-size sampling algorithms introduced in \cite{DBLP:journals/tods/CarmeliZBCKS22,DBLP:conf/sigmod/ZhaoC0HY18,DBLP:journals/sigmod/DaiHY25}, presumably because it requires intricate changes to the engine's internals. By adopting \csr we achieve two goals at once: enable efficient sampling and ensure robust acyclic join processing.

\smallskip
\noindent{\bf Organization.}
Section~\ref{sec:problem} defines the Poisson sampling problem, and Section~\ref{sec:solution-overview} outlines our 
approach and background. Sections~\ref{sec:indexing} and~\ref{sec:position-sampling} resp.\ cover indexing and probing. Section~\ref{sec:experiments} presents experiments, Section~\ref{sec:related-work} discusses related work, and Section~\ref{sec:conclusions} concludes.

\section{Problem Statement}
\label{sec:problem}

For a natural number $l > 0$, we denote the set $\{1,\dots,l\}$ by $[l]$.  We are
interested in processing \emph{Poisson sampling queries}, which are  queries of the following form:
\begin{equation}
  \label{eq:tir-query}
  Q = \bern_y \left(R_1(\x_1) \Join \dots \Join R_{l}(\x_{l})\right).
\end{equation}
Here, $l \geq 1$; each $R_i$ is a (not necessarily distinct) relation symbol;  each $\x_i$ is a set of
pairwise distinct attributes  that denotes the schema of
$R_i$, for $i \in [l]$; and $y \in \x_1 \cup \dots \cup \x_{l}$. Expressions of the form $R_i(\x_i)$ are called
\emph{atoms}. The expression $R_1(\x_1) \Join \dots \Join R_{l}(\x_{l})$ is called the \emph{(full) join query} underlying $Q$, which we denote by $\full{Q}$.

Throughout the paper, we adopt a bag-based semantics for relations and
queries. Formally, tuple $t$ over $\x_1 \cup \dots \cup \x_{l}$ occurs
in the join result of $\full{Q}(\db)$ of $\full{Q}$ on database $\db$ if for every
$i \in [l]$ the tuple $t[\x_i]$ (i.e., $t$ projected on $\x_i$), occurs with
multiplicity $m_i > 0$ in input relation $R_i$. The result multiplicity of $t$
in the full join is then $m_1 \times \dots \times m_{l}$.

The operator $\bern$ has the following semantics. If the values of the
$y$-attribute in $\full{Q}(\db)$ are in the range $[0,1]$, then
$\bern_y$ performs a Bernoulli trial with probability
$t[y]$ for each tuple $t \in \full{Q}(\db)$, keeping the tuple if the trial 
succeeds and discarding it otherwise. The output of $\bern_y$ is hence
non-deterministic, as it depends on the stochastic random choices made. In other
words, $\bern$ is a randomized relational operator. In statistics, a process
where each element of a population is subjected to an independent Bernoulli
trial that determines whether the element becomes part of a sample is called
\emph{Poisson sampling}. We hence refer to $\bern$ as the \emph{Poisson sampling
  operator}. For simplicity, we assume throughout the paper that the $y$-values are always in  the range $[0,1]$.

\begin{example}\it
  \label{ex:poisson-query}
  The rule defining $\contact(\personidone, \personidtwo)$
  in Example~\ref{ex:epiql-contacts} can be formalized as the Poisson sampling query 
  \begin{align*}
\contactquery =     
\bern_{\contactprobability} \big( &  
 \personmember(\personidone, \ageone, \poolid) \\ & {} \Join
 \personmember(\personidtwo, \agetwo, \poolid)\\ 
 &{} \Join
 \contactprob(\poolid, \ageone, \agetwo, \contactprobability)
\big).
\end{align*}
\end{example}

\smallskip\noindent\textbf{Acyclicity.}  Throughout the paper, we focus on
Poisson sampling queries for which $\full{Q}$ is acyclic. A join query $\full{Q}$ is
\emph{acyclic} if it admits a join
tree~\cite{DBLP:conf/stoc/BeeriFMMUY81,DBLP:journals/jacm/Fagin83}. A \emph{join
  tree for $\full{Q}$} is a rooted undirected tree $J$ in which each node is an
atom of $\full{Q}$. To be correct under bag semantics, it is required that each
atom in $\full{Q}$ appears exactly as many times in $J$ as it does in $\full{Q}$.  Join trees
are required to satisfy the \emph{connectedness property}: for every attribute
$x$, all the atoms containing $x$ form a connected subtree of $J$.  To
illustrate, Figure~\ref{fig:join-tree} shows a join tree for the join query in
Example~\ref{ex:poisson-query}. The triangle query $R(x,y) \Join S(y,z) \Join T(z,x)$ is the prototypical example of a join query that is \emph{cyclic}, i.e., not acyclic. Checking whether a join query is acyclic and constructing a join tree if it
exists can be done in linear time w.r.t. the size of the query by means of the
GYO
algorithm~\cite{DBLP:conf/compsac/YuO79,graham-gyo,DBLP:journals/siamcomp/TarjanY84}.

\begin{figure}[tbp]
\centering
\begin{tikzpicture}[scale=0.8, every node/.style={transform shape}]
    \node (R) at (0,0) {$\contactprob(\poolid, \ageone, \agetwo, \contactprobability)$};
    \node (S) at (-2,-1) {$\personmember(\personidone, \ageone, \poolid)$};
    \node (T) at (2,-1) {$\personmember(\personidtwo, \agetwo, \poolid)$};

    \draw[-] (R) -- (S);
    \draw[-] (R) -- (T);
\end{tikzpicture}
 \caption{\label{fig:join-tree} A join tree for the join query of Example~\ref{ex:poisson-query}.}
\end{figure}

\smallskip\noindent\textbf{Processing.}  We are interested in the efficient
processing of acyclic Poisson sampling queries.  Just like a deterministic query
may have multiple algorithms (i.e., physical query plans) that implement the
query---differing in their efficiency---a Poisson sampling query may have
multiple randomized algorithms implementing it.

The naive way to process $Q(\db)$ is to first materialize the full join
$\full{Q}(\db)$, and then iterate over the elements $t$ of the resulting
bag---flipping a coin with probability $t[y]$ to see if $t$ needs to be included
in the output. This naive method, which we will refer to as
\emph{Materialize-and-Scan (\mas for short)}, clearly has complexity
$\Omega(\size{\db} + \size{\full{Q}(\db)})$. Note that $\size{\full{Q}(\db)}$
may be much larger than the size of the returned sample.  We are interested in
developing practical alternate algorithms of lower complexity.

In what follows, we analyze our algorithms in the RAM model of computation with
the unit cost model. We assume that  that drawing random values from the uniform
distribution takes constant time; that building hash tables is in linear time; and
that probing a hash tables takes constant time. We consider the query itself
fixed, and hence focus on data complexity.

\smallskip\noindent\textbf{Discussion.}
While we restrict $y$ to be a single attribute in \eqref{eq:tir-query}, our approach naturally extends to the setting where the sampling probability is computed from a number of attributes $\overline{y}$ that all belong to the same relation $R_i$. Also note that we allow the relation symbols $R_i$ to be equal. Hence, our approach also applies to self-joins. 

Our approach also applies to queries with so-called free-connex projection. We denote bag-based projection by $\pi_A$ and set-based projection by $\delta \pi_A$, where $\delta$ is the duplicate elimination operator. 
Specifically, consider queries of the modified form
\begin{equation}
  \label{eq:free-connex-query}
  Q = \bern_y \left( \pi_A \left( \hat{Q} \right)\right) \quad \text{or} \quad Q = \bern_y \left( \delta \pi_A \left( \hat{Q} \right)\right),
  \end{equation}
  where $\hat{Q} =  R_1(\x_1) \Join \dots \Join R_l(\x_l)$, $A \subseteq \bigcup_{i \in [l]} \x_i$, and  $y \in A$. 
In Section~\ref{sec:position-sampling}, we show that our approach and complexity results extend to such queries when $\pi_A (\hat{Q})$ is \emph{free-connex acyclic}, which is defined as follows.
 Let $Q'$ be the full join query obtained by extending $\hat{Q}$ with an additional atom having $A$ as its variables. Then $ \pi_A (\hat{Q})$ is  \emph{free-connex acyclic} if both $\hat{Q}$ and $Q'$ are acyclic.

\section{Solution Overview}
\label{sec:solution-overview}

Let $\full{Q}$ be a join query. A \emph{random-access} index for $\full{Q}$ on input $\db$ is a data structure $\idx$ that is equipped with a method $\access$. The index is free to fix an order on the tuples in $\full{Q}(\db)$. Given positions $\pos = [i_1,\dots,i_k]$ with  $0 \leq i_j < \card{Q(\db)}$ for every $j$, $\idx.\access(\pos)$ returns the bag $\bag{t_{i_1},\dots, t_{i_k}}$, where $t_{i_j}$ is the $(i_j+1)$-th tuple of $\full{Q}(\db)$ in the chosen order. The \emph{access time} refers to the time required to construct this result, given $\pos$.

To efficiently process a Poisson sampling query $Q$ on database $\db$ we will adopt the following  \emph{Index-and-Probe} (\iap) strategy:
\begin{enumerate}[(1)]
\item Compute random-access index $\idx$ of $\full{Q}$ on $\db$.
\item Determine $\pos = [i_1,\dots,i_k]$, the sequence of tuple positions in $Q(\db)$ that define the output sample.
\item Return $\idx.\access(\pos)$.
\end{enumerate}
The sequence $\pos$ computed in step (2) is called the  \emph{probe sequence}. Determining this sequence is called \emph{position sampling}.

Clearly, the running time of \iap is determined by (1) the index construction time; (2) the probe sequence construction time; and (3) the random access time.

We base index construction on the approach recently taken by Bekkers et al.
~\cite{DBLP:journals/pvldb/BekkersNVW25} for processing acyclic joins.
Concretely, Bekkers et al. show that acyclic joins can be expressed in the
\emph{Nested Semijoin Algebra} (\nsa for short) as a sequence of \emph{nested}
semijoin operations, followed by a single flatten operation. Logically, nested
semijoin operations produce \emph{nested relations}, i.e., relations where a
tuples' attribute value may contain an entire relation instead of a scalar value
as is typically the case. The flatten operator turns a nested relation back into
an ordinary flat relation. While nested semijoin and flatten are logical
operators, Bekkers et al. propose a specific physical representation of nested
relations in main-memory column stores, based on so-called query
shredding~\cite{DBLP:journals/pvldb/BekkersNVW25, DBLP:journals/tcs/Bussche01}.
The implementation of nested semijoin and flatten based on this representation
always evaluates a sequence of nested semijoins followed by a flatten in
$\bigo(\insize + \card{\full{Q}(\db)})$ time, recovering Yannakakis' seminal
result that acyclic joins can be evaluated instance-optimally. What is more,
this way of implementing Yannakakis is competitive with classical binary joins
even when the input relations have no or only few dangling tuples---when
traditional Yannakakis implementations are observably
slower~\cite{DBLP:journals/pvldb/BekkersNVW25}.

In Section~\ref{sec:indexing}, we will show that the shredded representation of
Bekkers et al., which we will refer to as the \emph{chained shredded
  representation} (\csr) already provides a random-access index for
$\full{Q}$. In other words, we can see nested semijoins as a logical operator
that physically constructs a random-access index structure. In
Section~\ref{sec:position-sampling}, we will further show how this representation
can be used for efficient position sampling.

Previously, Carmeli et al.~\cite{DBLP:journals/tods/CarmeliZBCKS22} have shown
that acyclic joins admit a random-access index with single-tuple access time
$\bigo(\log \card{\db})$. As we will see, the access time of the \csr index is
asymptotically more expensive, and therefore not optimal. We therefore also
consider in Section~\ref{sec:indexing} a second representation, called the
\emph{unchained shredded representation} (\usr) that engineers the idea of the
random-access index structure of Carmeli et
al~\cite{DBLP:journals/tods/CarmeliZBCKS22} in column stores using the shredded
Yannakakis framework.

Before turning to describe these index structures, however, we introduce the nested semijoin algebra of ~\cite{DBLP:journals/pvldb/BekkersNVW25}, focusing on the nested semijoin ($\nsemijoin$) and flatten ($\flatten$) operators. We will use doubly curly braces $\bag{\dots}$ to denote bags as well as bag comprehension.

\smallskip\noindent\textbf{Schemes and Nested Relations.} Just like flat relations have flat schemes, nested relations have nested schemes. 
We refer to the
attributes that appear in the scheme of classical flat relations as \emph{flat}
attributes.  A \emph{(nested) scheme} is a finite set $X$, like $\{x, \{y\}, \{u,\{v\}\}\}$, that consists of flat attributes ($x$ in this case) and other schemes (i.e., $\{y\}$ and $\{u,\{v\}\}$). No flat attribute is allowed to occur  more than 
once, so 
$\{x, \{y\}, \{u,\{x\}\}\}$ is not a valid scheme. Schemes are also called \emph{nested}
attributes.  We range over flat
attributes by lowercase letters $(x,y,\dots)$;  over schemes by uppercase letters ($X$, $Y$, \dots), both from the end of the
alphabet; and over finite sets of flat attributes by $\x$. 
We write $\flatsub(X)$ for the set of all flat attributes occurring
somewhere in $X$ (either directly or in some inner nested scheme); and $\rsub(X)$
for the set of all schemes occurring in $X$, including $X$ itself (again, either directly or in some
inner nested scheme). So for $X = \{x, \{y\}, \{u,\{v\}\}\}$,
$\flatsub(X) = \{x,y,u,v\}$ and $\rsub(X) = \{ \{y\}, \{u,\{v\}\}, \{v\}, X \}$.

Nested relations and nested tuples are defined mutually recursively as follows. 
A \emph{relation} over a scheme $X$ is a finite bag of tuples over $X$. Here, a \emph{tuple over} $X$ is a mapping $t$ on $X$ such that $t(x)$ is a  scalar data value  for each flat attribute $x \in X$, and $t(Y)$ is a non-empty relation over $Y$ for each nested attribute $Y \in X$. Note that if $X$ is \emph{flat}, i.e., if
$X$ consists of flat attributes only, then this definition of a relation over $X$ coincides with the usual one. We call $R$ a \emph{flat relation} in that case. We restrict inner nested relations to be non-empty as in this paper we always start from flat relations and the operators that we consider will never introduce empty inner nested relations.
We write $R\colon X$ and $t\colon X$ to denote that $R$ is a relation (resp. $t$
is a tuple) over scheme $X$. We write $\card{R}$ to denote the total number of
tuples in $R$. Note that this only refers to the number of tuples in the
outer-most bag of $R$, and does not say anything about the cardinality of the
inner-nested relations appearing in those tuples.
Figure~\ref{fig:R-semijoin-S} shows a nested relation with
cardinality 4 and scheme $\{x, y, p, \{u,a\}\}$.

We adopt the following notation on tuples. If $s\colon X$ and $t\colon Y$ are tuples over disjoint schemes then $s \uplus t$ denotes their concatenation, which is a tuple over $X \cup Y$. 
Furthermore, if $Z \subseteq X$ then $\restr{s}{Z}$ denotes the restriction
(i.e., projection) of $s$ to the attributes in $Z$. The standard relational operators $\pi$ and $\sigma$ are extended to nested relations in the obvious manner. For example,  $\pi_{Z}(R) = \bagof{t[Z]}{t \in R}$ and $\sigma_{\y = s}(R) = \bagof{ t \in R}{t[\y] = s}$.

\smallskip\noindent\textbf{Nested Semijoins and Flatten.} The \emph{nested
  semijoin} operator $\nsemijoin$ takes two arguments, $R\colon X$ and
$S\colon Y$.\footnote{In~\cite{DBLP:journals/pvldb/BekkersNVW25}, the nested
  semijoin operator is defined as the composition of two other operators,
  $\nsemijoinsymb$ and $\group$; here we combined them into the single operator
  $\nsemijoin$ for convenience.}  For a valid application of the nested semijoin
operator, it is required that $X \cap Y$ contains only flat attributes, and that
$X \cup Y$ is again a scheme. Hence, $X = \{x,y, \{v\} \}$ and
$Y = \{ y, \{u, w\}\}$ is allowed. But $X = \{x, y, \{v\} \}$ and
$Y = \{y, \{x,u\}\}$ is not since $X \cup Y$ contains $x$ multiple times.

Let $\z = X \cap Y$ be the flat attributes shared between $X$ and $Y$, and let $Z = Y \setminus X$. Observe that for $\z$-tuple $t$, $\sigma_{\z = t}(S)$ is the subbag of $S$ containing those tuples having $t$ as join key.
Then $R \nsemijoin S$ extends the tuples in $R$ with one nested attribute, $Z$, containing the non-empty subbag of tuples in $S$ that join:
\begin{equation}
  \label{eq:sem-nsemijoin}
  R \nsemijoin S \defeq \bagof{r \uplus \{Z \mapsto \pi_{Z}(\sigma_{\z = r[\z]}(S) \} }{r \in R, \sigma_{\z = r[\z]}(S) \not = \emptyset}.
\end{equation}

To illustrate, Figure~\ref{fig:R-semijoin-S} shows the result of $R \nsemijoin S$ on the relations $R$ and $S$ shown in Figure~\ref{fig:input-relations} where $\z = \{x\}$ and $Z=\{u,a\}$.

\begin{figure}
  {\setlength{\tabcolsep}{3pt}
  \begin{minipage}[t]{0.35\columnwidth} \centering
    \subcaptionbox{Input relations \label{fig:input-relations}}{

\addtolength{\tabcolsep}{-2pt}
\small 

\begin{tikzpicture}[node distance=0.5cm]

\newcommand{\nodedistx}{0.2cm}
\newcommand{\nodedisty}{0.2cm}
\newcommand{\g}[1]{\textcolor{gray}{#1}}
\newcommand{\mybot}{0}

\tikzstyle{detailnode}=[inner sep=0pt, text centered]

\node [detailnode] (R) {%
	\scriptsize
\begin{tabular}{cccc}
    \multicolumn{4}{c}{$R$}\\ 
    \cline{2-4} 
    \cline{2-4}
              &  $x$   & $y$   &  $p$ \\ 
    \cline{2-4}
     \g{$0$}  &  $x_1$ & $y_1$ & $p_1$ \\
     \g{$1$}  &  $x_1$ & $y_2$ & $p_2$ \\
     \g{$2$}  &  $x_4$ & $y_3$ & $p_3$ \\
     \g{$3$}  &  $x_2$ & $y_1$ & $p_4$ \\
     \g{$4$}  &  $x_2$ & $y_2$ & $p_5$ \\
     \g{$5$}  &  $x_4$ & $y_3$ & $p_6$ \\
\end{tabular}
};

\node [detailnode,anchor=north west] (S) 
  at ($ (R.north east) + (\nodedistx, 0em) $){%
	\scriptsize
\begin{tabular}{ccc}
    \multicolumn{3}{c}{$S$}\\ \hline %
      $u$   & $a$   &  $x$  \\ \hline
      $u_1$ & $a_1$ &  $x_1$ \\
      $u_1$ & $a_1$ &  $x_2$ \\
      $u_2$ & $a_1$ &  $x_1$ \\
      $u_3$ & $a_2$ &  $x_1$ \\
      $u_3$ & $a_2$ &  $x_3$ \\
      $u_4$ & $a_3$ &  $x_2$ \\
\end{tabular}
};

\node [detailnode,anchor=north west] (T) 
  at ($ (S.north east) + (\nodedistx, 0em) $){%
	\scriptsize
\begin{tabular}{cc}
    \multicolumn{2}{c}{$T$}\\ \hline %
    $v$   & $y$  \\ \hline
    $v_1$ & $y_4$ \\
    $v_2$ & $y_2$ \\
    $v_3$ & $y_1$ \\
    $v_4$ & $y_2$ \\
    $v_5$ & $y_1$ \\
    $v_6$ & $y_2$ \\
\end{tabular}
};
\end{tikzpicture}

     } 
  \end{minipage}
  \begin{minipage}[t]{0.25\columnwidth} \centering
    \subcaptionbox{$N_1$ \label{fig:R-semijoin-S}}{
\addtolength{\tabcolsep}{-2pt}
\small 

\begin{tikzpicture}[node distance=0.5cm]

\newcommand{\nodedistx}{0.2cm}
\newcommand{\nodedisty}{0.2cm}
\newcommand{\g}[1]{\textcolor{gray}{#1}}
\newcommand{\mybot}{0}

\tikzstyle{detailnode}=[inner sep=0pt, text centered]

\node [detailnode] (R) {%
	\scriptsize
\begin{tabular}{cccc}
    \multicolumn{4}{c}{
     }\\[2pt] \hline %
     $x$   & $y$   &  $p$  & $\{u,a\}$\\[1pt] \hline %
     \\[-5pt]
     $x_1$ & $y_1$ & $p_1$ &  \begin{tabular}{|cc|}
                                    \hline 
                                    $u_1$&$a_1$\\
                                    $u_2$&$a_1$\\
                                    $u_3$&$a_2$\\
                                    \hline 
                               \end{tabular}\\
    \\[-5pt]
     $x_1$ & $y_2$ & $p_2$ &  \begin{tabular}{|cc|}
                                    \hline 
                                    $u_1$&$a_1$\\
                                    $u_2$&$a_1$\\
                                    $u_3$&$a_2$\\
                                    \hline 
                               \end{tabular}\\
    \\[-5pt]
     $x_2$ & $y_1$ & $p_4$ &  \begin{tabular}{|cc|}
                                    \hline 
                                    $u_1$&$a_1$\\
                                    $u_4$&$a_3$\\
                                    \hline 
                               \end{tabular}\\
    \\[-5pt]
     $x_2$ & $y_2$ & $p_5$ &  \begin{tabular}{|cc|}
                                    \hline 
                                    $u_1$&$a_1$\\
                                    $u_4$&$a_3$\\
                                    \hline 
                            \end{tabular}\\
\end{tabular}
};

\end{tikzpicture}     } 
  \end{minipage}
  \begin{minipage}[t]{0.30\columnwidth} \centering
    \subcaptionbox{$N_2$
    \label{fig:R-semijoin-S-semijoin-T}}{%
\addtolength{\tabcolsep}{-2pt}
\small 

\begin{tikzpicture}[node distance=0.5cm]

\newcommand{\nodedistx}{0.2cm}
\newcommand{\nodedisty}{0.2cm}
\newcommand{\g}[1]{\textcolor{gray}{#1}}
\newcommand{\mybot}{0}

\tikzstyle{detailnode}=[inner sep=0pt, text centered]

\node [detailnode] (R) {%
	\scriptsize
\begin{tabular}{ccccc}
    \multicolumn{5}{c}{
     }\\[2pt] \hline 
     $x$   & $y$   &  $p$  & $\{u,a\}$ & $\{v\}$\\[1pt] \hline %
     \\[-5pt]
     $x_1$ & $y_1$ & $p_1$ &  \begin{tabular}{|cc|}         
                                    \hline 
                                    $u_1$&$a_1$\\        
                                    $u_2$&$a_1$\\        
                                    $u_3$&$a_2$\\        
                                    \hline         
                               \end{tabular} 
                          &    \begin{tabular}{|c|}         
                                    \hline 
                                    $v_3$\\        
                                    $v_5$\\        
                                    \hline         
                               \end{tabular}\\       
    \\[-5pt]        
     $x_1$ & $y_2$ & $p_2$ &  \begin{tabular}{|cc|}
                                    \hline 
                                    $u_1$&$a_1$\\
                                    $u_2$&$a_1$\\
                                    $u_3$&$a_2$\\
                                    \hline 
                               \end{tabular}
                               & \begin{tabular}{|c|}
                                    \hline 
                                    $v_2$ \\                                    
                                    $v_4$\\
                                    $v_6$\\
                                    \hline 
                               \end{tabular}\\
    \\[-5pt]
     $x_2$ & $y_1$ & $p_4$ &  \begin{tabular}{|cc|}
                                    \hline 
                                    $u_1$&$a_1$\\
                                    $u_4$&$a_3$\\
                                    \hline 
                               \end{tabular}
                                &    \begin{tabular}{|c|}         
                                    \hline 
                                    $v_3$\\        
                                    $v_5$\\        
                                    \hline         
                               \end{tabular}\\       
    \\[-5pt]
     $x_2$ & $y_2$ & $p_5$ &  \begin{tabular}{|cc|}
                                    \hline 
                                    $u_1$&$a_1$\\
                                    $u_4$&$a_3$\\
                                    \hline 
                            \end{tabular}
                            & \begin{tabular}{|c|}
                                    \hline 
                                    $v_2$\\
                                    $v_4$\\
                                    $v_6$\\
                                    \hline 
                               \end{tabular}\\
\end{tabular}
};

\end{tikzpicture}     } 
  \end{minipage}

  \vspace{.5cm}

\begin{minipage}[t]{0.99\columnwidth} \centering
    \subcaptionbox{ Chained shredded representation of $N_2$
    \label{fig:CSR}}{%
\addtolength{\tabcolsep}{-2pt}
\small 

\begin{tikzpicture}[node distance=0.5cm]

\newcommand{\nodedistx}{0.2cm}
\newcommand{\nodedisty}{0.2cm}
\newcommand{\g}[1]{\textcolor{gray}{#1}}
\newcommand{\bl}[1]{\textcolor{blue}{#1}}
\newcommand{\mybot}{-1}

\tikzstyle{detailnode}=[inner sep=0pt, text centered]

\node [detailnode] (R) { %
    \scriptsize
    \vspace{1cm}
\begin{tabular}{ccc|cc|cc|c}
   \multicolumn{8}{c}{$\store{}(\{x,y,p, \{u,a\}, \{v\}\})$} \\ \hline \hline 
   $x$ &  $y$  & $p$   &\multicolumn{2}{c|}{$\{u,a\}$} & \multicolumn{2}{c|}{$\{v\}$}& ${\prefix{}}$\\
       &       &       & $\holshort{}$ & $w$ & $\holshort{}$ & $w$ & \\   \hline
  $x_1$& $y_1$ & $p_1$ &  3& 3 &4 & 2 & \bl{6}\\
  $x_1$& $y_2$ & $p_2$ &  3& 3 &5 & 3 & \bl{15}\\
  $x_2$& $y_1$ & $p_4$ &  5& 2 &4 & 2 & \bl{19}\\
  $x_2$& $y_2$ & $p_5$ &  5& 2 &5 & 3 & \bl{25}\\
\end{tabular}
\vspace{1cm}
};

\node [detailnode, anchor=north west] (ua)
      at ($ (R.north east) + (\nodedistx, 0em) $)  {%
	\scriptsize
\begin{tabular}{ccc|c}
    \multicolumn{4}{c}{$\store{}(\{u,a\})$}\\ \hline \hline
              &  $u$   &  $a$ & $\nxt$ \\ \hline
     \g{$0$}  &  $u_1$ &  $a_1$ & $\mybot$\\          
     \g{$1$}  &  $u_1$ &  $a_1$ & $\mybot$\\
     \g{$2$}  &  $u_2$ &  $a_1$ & $0$\\
     \g{$3$}  &  $u_3$ &  $a_2$ & $2$ \\
     \g{$4$}  &  $u_3$ &  $a_2$ & $\mybot$ \\
     \g{$5$}  &  $u_4$ &  $a_3$ & $1$\\
\end{tabular}
};

\node [detailnode, anchor=north west] (v)
      at ($ (ua.north east) + (\nodedistx, 0em) $)  {%
	\scriptsize
\begin{tabular}{cc|c}
    \multicolumn{3}{c}{$\store{}(\{v\})$}\\ \hline \hline
       & $v$ & $\nxt$ \\ \hline
     \g{$0$}  &  $v_1$ & $\mybot$\\
     \g{$1$}  &  $v_2$ & $\mybot$\\
     \g{$2$}  &  $v_3$ & $\mybot$ \\
     \g{$3$}  &  $v_4$ & $1$\\
     \g{$4$}  &  $v_5$ & $2$\\
     \g{$5$}  &  $v_6$ & $3$\\
\end{tabular}
};
\end{tikzpicture}

     } 
  \end{minipage}

  \vspace{.5cm}

\begin{minipage}[t]{0.9\columnwidth} \centering
    \subcaptionbox{Unchained shredded representation of $N_2$
    \label{fig:UCSR}}{%
\addtolength{\tabcolsep}{-2pt}
\small 

\begin{tikzpicture}[node distance=0.5cm]

\newcommand{\nodedistx}{0.2cm}
\newcommand{\nodedisty}{0.2cm}
\newcommand{\g}[1]{\textcolor{gray}{#1}}
\newcommand{\bl}[1]{\textcolor{blue}{#1}}
\newcommand{\mybot}{0}

\tikzstyle{detailnode}=[inner sep=0pt, text centered]

\node [detailnode] (R) { %
    \scriptsize
\begin{tabular}{ccc|ccc|ccc|c}
    \multicolumn{10}{c}{$\store{}(\{x,y,p, \{u,a\}, \{v\}\})$} \\ \hline \hline
   $x$ &  $y$  & $p$   &\multicolumn{3}{c|}{$\{u,a\}$} & \multicolumn{3}{c|}{$\{v\}$}& ${\prefix{}}$\\
       &       &       & $\start{}$ & $\llen{}$ & $w$ & $\start{}$ & $\llen{}$ & $w$ & \\   \hline
        $x_1$& $y_1$ & $p_1$ &  0& 3 & 3 & 0 & 2 & 2 & \bl{6}\\
        $x_1$& $y_2$ & $p_2$ &  0& 3 & 3 & 2 & 3 & 3 & \bl{15}\\
        $x_2$& $y_1$ & $p_4$ &  3& 2 & 2 & 0 & 2 & 2 & \bl{19}\\
        $x_2$& $y_2$ & $p_5$ &  3& 2 & 2 & 2 & 3 & 3 & \bl{25}\\
\end{tabular}
};

\node [detailnode, anchor=north west] (ua)
      at ($ (R.north east) + (\nodedistx, 0em) $)  {%
	\scriptsize
\begin{tabular}{ccc|cc}
    \multicolumn{4}{c}{$\store{}(\{u,a\})$}\\ \hline \hline
              &  $u$   &  $a$   & $\perm$ & $\prefix{}$\\ \hline
     \g{$0$}  &  $u_1$ &  $a_1$ &    0     &  \bl{1}       \\
     \g{$1$}  &  $u_1$ &  $a_1$ &    2     &  \bl{2}       \\
     \g{$2$}  &  $u_2$ &  $a_1$ &    3     &  \bl{3}       \\
     \g{$3$}  &  $u_3$ &  $a_2$ &    1     &  \bl{1}       \\
     \g{$4$}  &  $u_3$ &  $a_2$ &    5     &  \bl{2}       \\
     \g{$5$}  &  $u_4$ &  $a_3$ &    4     &  \bl{1}       \\
\end{tabular}
};

\node [detailnode, anchor=north west] (v)
      at ($ (ua.north east) + (\nodedistx, 0em) $)  {%
	\scriptsize
\begin{tabular}{cc|cc}
    \multicolumn{4}{c}{$\store{}(\{v\})$}\\ \hline \hline
              & $v$    & $\perm$ & $\prefix{}$\\ \hline
     \g{$0$}  &  $v_1$ &    2    &   \bl{1}     \\ 
     \g{$1$}  &  $v_2$ &    4    &   \bl{2}     \\ 
     \g{$2$}  &  $v_3$ &    1    &   \bl{1}     \\ 
     \g{$3$}  &  $v_4$ &    3    &   \bl{2}     \\ 
     \g{$4$}  &  $v_5$ &    5    &   \bl{3}     \\ 
     \g{$5$}  &  $v_6$ &    0    &   \bl{1}     \\ 
\end{tabular}
};      
\end{tikzpicture}

     } 
  \end{minipage}
  \vspace{.5cm}

  \caption{Illustration of nested relations and nested semijoins with $N_2=$ $(R(x,y,p) \nsemijoin S(u,a,x))\nsemijoin T(v,y)$ and $N_1 = R(x,y,p)$ $\nsemijoin S(u,a,x)$.\label{fig:inputs-and-R-nsemijoin-S}} }
\end{figure}

The flatten operator $\flatten$ converts a nested relation $N\colon X$ or nested
tuple $t\colon X$ into a flat relation with scheme $\flatsub(X)$. Specifically,
assume that $X$ consists of flat attributes $\x$ and nested attributes
$Y_1,\dots, Y_m$. Then
\begin{align}
  \label{eq:flatten-tuple}
  \flatten(t) & \defeq \bagof{ t[\x] \uplus t_1 \uplus \dots \uplus t_m}{t_j \in
                \flatten\left(t(Y_j)\right), j \in [m]}; \\
  \label{eq:flatten-rel}
  \flatten(N) & \defeq \bigcup \bagof{\flatten(t)}{t \in N}. 
\end{align}
That is, $\flatten(t)$ pairs the flat
attributes of $t$ with all combinations of tuples obtained by flattening the
nested relations $t(Y_1), \dots, t(Y_m)$ and $\flatten(N)$ is the bag union of flattening all of its tuples. Note that when $t$ and $N$ are flat, i.e, $m = 0$ then $\flatten(t) = \bag{t}$ and $\flatten(N) = N$. 

To illustrate, reconsider Figure~\ref{fig:input-relations}. When we flatten the nested relation in Figure \ref{fig:R-semijoin-S} we obtain the same relation as the one that results from $R \Join S$. When we flatten the nested relation in Figure~\ref{fig:R-semijoin-S-semijoin-T} we obtain $R \Join S \Join T$.

\smallskip\noindent\textbf{Weights.}  The \emph{weight} of a nested relation $R$ (resp. tuple $r$)
is the total number of tuples produced when flattening $R$ (resp $r$), i.e.
$\weightval(R) = \card{\flatten(R)}$ (resp. $\weightval(r) = \card{\flatten(r)}$).

\smallskip\noindent\textbf{\nsa.} While Bekkers et al. also consider additional
operators on nested relations, for the purpose of this paper we define the
Nested Semijoin Algebra (\nsa) to be all expressions that are built from flat
relation symbols using valid application of $\nsemijoin$ and $\flatten$. Such an
expression is called \emph{2-phase} (\twonsa) if it is of the form $\flatten(E)$
with $E$ itself not containing any further application of $\flatten$. We call
$E$ \emph{$\flatten$-free} in this case. Bekkers et al.\@ observe that from a join
tree $J$ for an acyclic join query $\full{Q}$ we may compute a \twonsa expression
$\flatten(E)$ as follows. First, construct the $\flatten$-free expression $E$ by
traversing $J$ bottom up. For leaf nodes, $E$ is simply the node's atom. For
interior nodes, $E$ is the result
$(\dots (R(\x) \nsemijoin E_1) \nsemijoin \dots ) \nsemijoin E_n$ of
taking the semijoin of node's atom $R(\x)$ with the expressions recursively
obtained for each of the children. Finally, add $\flatten$ to the expression
obtained for the root. 
For later use we strengthen this observation as follows.

\begin{proposition}
  \label{prop:probability-attribute-at-root}
  For every acyclic join query $\full{Q}$ and any attribute $y$ of $\full{Q}$ we
  can compute an equivalent two-phase \nsa expression $\flatten(E)$ such that
  $y$ is a flat attribute in the output scheme of $E$.
\end{proposition}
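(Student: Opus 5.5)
The plan is to reroot the join tree. Because $y$ is an attribute of $\full{Q}$, there is an atom $R_i(\x_i)$ with $y \in \x_i$. Take any join tree $J$ for $\full{Q}$, which exists because $\full{Q}$ is acyclic. Rerooting $J$ at the node $R_i(\x_i)$ gives another join tree $J'$. The connectedness property concerns only the underlying undirected tree, so it is unaffected by the choice of root. For the same reason, each atom still occurs in $J'$ exactly as often as in $\full{Q}$. We then apply the construction of Bekkers et al.\ recalled above to $J'$. This yields $E = (\dots(R_i(\x_i) \nsemijoin E_1)\nsemijoin \dots)\nsemijoin E_n$, where $E_1,\dots,E_n$ are the expressions built recursively for the children of the root.

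Next we check that $y$ is a flat attribute of the output scheme of $E$. By induction on the nesting, $R \nsemijoin S$ with $R\colon X$ and $S\colon Y$ has output scheme $X \cup \{Y\setminus X\}$. Thus every flat attribute of the left operand stays a flat attribute of the result. The root atom has flat scheme $\x_i \ni y$, so $y$ is flat in the scheme of $E$.

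It then remains to verify that each nested semijoin application is valid and that $\flatten(E) = \full{Q}$ under bag semantics. This is exactly the correctness claim of Bekkers et al.\ for an arbitrary join tree, and $J'$ is a join tree. If we want to be self-contained, we prove both parts by bottom-up induction on $J'$.

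\emph{Validity.} Let $E_c$ be the expression for a child subtree. Any flat attribute shared between the current left operand and $E_c$ must, by connectedness, occur in both the parent atom and the child atom. Hence it occurs flat in both operands, because the root atom of each subexpression is kept flat. Attributes from different child subtrees that are not in the parent cannot coincide, again by connectedness. So $X\cup Y$ contains each flat attribute at most once.

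\emph{Semantics.} By induction, $\flatten(E_c)$ equals the bag join of the atoms in the subtree of $c$. Using the definitions of $\nsemijoin$ and $\flatten$, flattening a tuple $r$ of the root atom extended with its groups produces $r$ joined with all combinations of matching child-subtree tuples. Multiplicities multiply correctly, and $r$ is dropped exactly when some child has no match, which agrees with the join.

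The main obstacle is this last bag-semantics check. One has to argue that the join keys of child $c$ with the parent are exactly the flat attributes shared with the parent atom, so that selecting on $r[\z]$ is equivalent to joining with the whole subtree. This again follows from connectedness.
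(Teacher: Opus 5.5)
Your proposal is correct and follows the paper's proof. You pick an atom mentioning $y$, reroot a join tree of $\full{Q}$ at it, and apply the bottom-up nested-semijoin construction to the rerooted tree, so $y$ stays a flat attribute of the root's scheme. The paper simply relies on that construction being correct for any join tree; your additional sketch of the validity and bag-semantics induction is sound but goes beyond what the paper spells out.
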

\begin{proof}
  First, compute a join tree $J$ of \full{Q}. Pick any node in $J$ that mentions
  $y$, and reroot $J$ into a join tree $J'$ that has this node as the root.
  Then the above-mentioned procedure applied to $J'$ yields the desired \twonsa
  expression.
\end{proof}

\section{Shredded Random-Access Indexing}
\label{sec:indexing}

\smallskip\noindent\textbf{Columnar storage.} We assume that we are
working in main memory, and that a flat relation $R(x_1,\dots,x_n)$ is
physically represented as a tuple
$\phys{R} = (\phys{R}.x_1, \dots, \phys{R}.x_n)$ of vectors $\phys{R}.x_i$, all
of length $\card{R}$. It is understood that values at the same offset in these
vectors encode a complete tuple, i.e.,
$R = \bagof{ (\phys{R}.x_1[i],\dots,\phys{R}.x_n[i])}{0 \leq i < \card{R}}$.
In particular, it is possible to refer to tuples positionally, i.e., the tuple at offset $0$ in $\phys{R}$, the tuple at offset $1$, and so on.  We will
refer to $\phys{R}$ as a \emph{physical relation}, and denote the number of tuples in
$\phys{R}$ by $\len{\phys{R}}$. If
$0 \leq i < \len{\phys{R}}$  and $\y = y_1,\dots,y_k$ is a subset of
$\{x_1,\dots, x_n\}$, then we write $\phys{R}[i](\y)$ for the tuple
$(\phys{R}.y_1[i],\dots,\phys{R}.y_k[i])$. A \emph{position vector for} $\phys{R}$ is a vector of natural numbers, all between $0$ and $\len{\phys{R}} - 1$.

\smallskip\noindent\textbf{Shredded representations.}
\emph{Shredding}~\cite{DBLP:journals/tcs/Bussche01,DBLP:conf/sigmod/CheneyLW14,DBLP:conf/pods/Wong93}
refers to the representation of a nested relation by means of a collection of
flat relations. In our setting, the latter will be physical relations, as the
ability to refer to the individual columns and individual positions of tuples is
important for the efficiency of processing the representation.  Since we will
consider two distinct shredded representations, we first define the concept
generically, and then introduce the concrete representations.  Generically, a
\emph{shredded representation} of a nested relation $R\colon X$ is a collection
$\store{}$ of physical (flat) relations, one physical relation $\store{}(Y)$ for
every $Y \in \rsub(X)$. Each $\store{}(Y)$ contains columns for at least all
flat attributes of $Y$, but also has additional columns to encode the
hierarchical relationships of the nested tuples in $R$. We note that
when $X$ is a flat scheme, $\rsub(X) = \{X\}$, and $\store{}$ hence consists
then only of a single flat table, namely $\store{}(X)$, which is a physical
representation of $R$. 
Define the \emph{size} $\size{\store{}}$ of a shredded representation to be the total number of tuples that it contains, in all of its flat relations.

\subsection{Chained Shredding}
\label{sec:chained}

We next introduce the shredding representation of Bekkers et
al~\cite{DBLP:journals/pvldb/BekkersNVW25}, referred to as the \emph{chained
shredded representation} (\csr). Let $N\colon X$ be a nested relation. A \csr $\store{}$
of $N$ has one physical relation $\store{}(Y)$ for every $Y \in \rsub(X)$. This
relation contains one column for each flat attribute in $Y$. Moreover, for every
nested attribute $Z \in Y$, $\store{}(Y)$ has two extra columns $\hol{Z}$ and
$\weight{Z}$, both of which hold position vectors. Finally, if $Y$ is a strict
subscheme of $X$, i.e. $Y \not =X$, then $\store{}(Y)$ also has a column
$\nxt$. To illustrate, Figure~\ref{fig:CSR} shows a \csr for the nested relation
in Figure~\ref{fig:R-semijoin-S-semijoin-T}.

In a \csr, the $\nxt$ column is used to encode a linked list of tuples: for all offsets $0 \leq i < \len{\store{}(Y)}$, if $\store{}(Y).\nxt[i] < 0$ then the tuple at offset $i$ in $\store{}(Y)$ is the final tuple in the list; otherwise its successor in the list is the tuple at offset $\store{}(Y).\nxt[i]$. Hence, $\nxt$ is used to chain tuples together, see Figure~\ref{fig:CSR}.

Chained shredding works as follows: every tuple $t \in N$ is represented by
exactly one tuple in $\store{}(X)$. Let $i$ be the index of the tuple in
$\store{}(X)$ representing $t$. For every flat $x \in X$ we have
$t(x) = \phys{R}.x[i]$. For every nested attribute $Y \in X$,
$\store{}(X).\weight{Y}[i]$ stores the weight of $t(Y)$. Furthermore,
$\store{}(X).\hol{Y}[i]$ stores the head index of the linked list of tuples in
$\store{}(Y)$ that together represent the tuples occurring in $t(Y)$.  Note that
the tuples in $t(Y)$ may themselves contain further nested relations, and the
shredding hence proceeds recursively.

\begin{example}
  \label{ex:csr}
  To illustrate, Figure~\ref{fig:CSR} shows a \csr for the nested relation $N_2:X$ of
  Figure~\ref{fig:R-semijoin-S-semijoin-T}. Here, 
  $X = \{ x, y, p, \{u,a\}, \{v\}\}$. The $i$-th tuple in
  $N_2$ is represented by the $i$-th tuple in
  $\store{}(X)$. When we look at the first tuple of $N_2$ we see that
  $t(\{v\})= \bag{ v_3,v_5}$. In $\store{}$, this bag is encoded by the linked
  list in $\store{}(\{v\}).\nxt$ starting at offset $\hd = 4$.
\end{example}

It is important to note that multiple tuples in $\store{}$ may repeat the same
$\hol{Y}$ value; this effectively means that these tuples share the same nested
relation in the $Y$-value. See the $\{u,a\}$-value of the first and second tuple
of Figure~\ref{fig:R-semijoin-S-semijoin-T} for an illustration.  This sharing
is important to ensure that, starting from a number of flat relations we can
build a nested relation resulting from their semijoin whose physical
representation size is linear in the input size.

In what follows, we assume a method \texttt{weight\_of} on \csrs that, given a \csr $\store{}$ and one if its schemes $Y$ and an offset $0 \leq i < \card{\store{}(Y)}$, returns the weight  of the nested tuple represented at offset $i$ in $\store{}(Y)$. This is straightforward to compute: when $Y$ is flat, the weight is always one; otherwise it is the product of the weights of the nested attributes.

\newcommand{\csrgroup}{\textsf{csr-group}}
\begin{figure}
          \tt \small
          \begin{algorithmic}[1]
\Function{\csrgroup}{\store{S}}\algorithmicdo
            \State $\nxt = [0, \dots, 0]$ \Comment{$\len{\store{S}(X_S)}$ times}
            \State h = \{\} \Comment{maps keys -> (pos, weight)}
            \For{$0 \leq i < \len{\store{S}(X_S)}$}
            \State key = $\store{S}(X_S)[i](\z)$; w = \texttt{weight\_of}$(\store{S}, X_S, i)$
            \If{h.contains(key)}
            \State (j, prev\_w) = h[key]
            \State nxt[i] = j
            \State h[key] = (i, prev\_w + w)
            \Else
            \State{h[key] = (i, w)}
            \EndIf
            \EndFor
            \State \Return (h, \nxt)
            \EndFunction
          \end{algorithmic}
\caption{\csr grouping algorithm.}
\label{algo:csr-grouping}
\end{figure}

\smallskip\noindent\textbf{Building the representation.}  Bekkers et al
~\cite{DBLP:journals/pvldb/BekkersNVW25} show that, given \csrs $\store{R}$ and $\store{S}$ for nested relations $R\colon X_R$ and $S\colon X_S$, respectively, we can construct a \csr $\store{}$ for $R \nsemijoin S$ in time linear in the size of $\store{R}$ and $\store{S}$ as follows. 
The process is  akin to a hash join. Let $\z = X_R \cap X_S$ be the join attributes and $Z = X_S \setminus X_R$.
\begin{compactenum}[(1)]
\item First, group $S$ on the join attributes $\z$ by computing a pair
  $(h, \nxt)$ consisting of (i) a hash table $h$ that maps the join keys
  occurring in $S$ to pairs of the form $(i, w)$, and (ii) a position vector
  $\nxt$, encoding a collection of linked lists. For every join key $k$, if we
  start traversing $\nxt$ in a linked-list fashion starting at offset $i$, then
  we traverse the offsets of all tuples in $S$ that have join key
  $k$. Moreover, $w$ is the sum of weights of these tuples.
  Figure~\ref{algo:csr-grouping} shows the pseudo-code to compute $(h,\nxt)$
  in a single pass over $\store{S}$.
\item Second, drop all join columns from $\store{S}(X_S)$, and add the $\nxt$  vector computed in the previous step as an additional column.
\item Use  $\store{R}(X_R)$ to obtain $\store{}(X_R \cup \{Z\})$ by 
  looping over the tuples of $\store{R}(X_R)$, probing the hash table $h$ to see if a joining tuple in $\store{S}(Y)$ exists and if so, use the $(i,w)$ value stored in $h$ to populate the $\hol{Z}$ and $\weight{Z}$-values. Tuples without joining tuples in $\store{S}(Y)$ are dropped.
\item Now, $\store{R}$, $\store{S}$ {and $\store{}(X_R \cup \{Z\})$} together form the \csr $\store{}$ for $R \nsemijoin S$.
\end{compactenum}

The interested reader is invited to apply this procedure on the input relations of Figure~\ref{fig:input-relations} to obtain \csr for first $R \nsemijoin S$ and then $(R \nsemijoin S) \nsemijoin T$. Initially, $\store{R}$ consists only of $R$ itself, and similarly for $\store{S}$ and $\store{T}$.

\smallskip\noindent\textbf{Random access.} Given \csr $\store{}$ for nested
relation $N$, we can turn $\store{}$ into a random-access index for
$\flatten(N)$. That is, given an offset $0 \leq i < \len{\flatten(N)}$, we can
extract the $(i+1)$-th tuple of $\flatten(N)$ directly from $\store{}$ (i.e.,
directly from the representation of $N$) without first needing to materialize
$\flatten(N)$.

Conceptually, we order the tuples of $\flatten(N)$ as follows: the tuple $t$ of
$N$ that occurs first in $\store{}$ produces the first $\weightval(t)$ tuples of
$\flatten(N)$, the tuple $t$ that occurs second in $\store{}$ produces the next
$\weightval(t)$ tuples, and so on. Here, each tuple $t$ produces $\weightval(t)$
flat tuples, which cf. Equation \eqref{eq:flatten-tuple}, are obtained by combining the
flat attribute values of $t$ with the result of recursively flattening the
nested attributes $Y_1,\dots, Y_m$ of $X$.  We order the tuples of $\flatten(t)$
itself as follows. Let $0 \leq i < \weightval(t)$.  To be able to
unambiguously identify which combination of the tuples in
$\flatten\left(t(Y_j)\right)$ produces the tuple of $\flatten(t)$ at offset $i$, we view
$i$ as a number in a mixed-radix numeral system where the bases are the
weights of the $t(Y_j)$. That is, let $w_j$ abbreviate $\weightval(t(Y_j))$ and
let $(i_1,\dots, i_m)$ be offsets of tuples in
$(\flatten(t(Y_1),\dots, \flatten(t(Y_m))$, respectively. These tuples
together will produce the tuple of $\flatten(t)$ at offset
\[  i = i_1 + i_2w_1 + i_3 w_1 w_2 + \dots + i_m w_1\dots w_{m-1} \]
We note that, given $i$ it is straightforward to compute the offsets $i_j$ inductively as follows
\begin{align}
  \label{eq:mixed-radix-direct-access}
  i_1 & = i \imod w_1 & q_1 & = i  \idiv w_1 \\
  i_j & = (q_{j-1} \imod w_{j}) & q_j & = (q_{j-1} \idiv w_j) & j > 1
\end{align}
Then the tuple of $\flatten(t)$ at offset $i$ is obtained by combining the $i_j$-th tuple of $\flatten(t(Y_j))$ for every $j \in [m]$.

\begin{example}\it
  Consider Figure~\ref{fig:R-semijoin-S-semijoin-T}, which shows a nested relation
  with scheme $X = \{x,y,p, Y_1, Y_2\}$ where $Y_1 = \{u,a\}$ and $Y_2 =
  \{v\}$. Let $t$ be its first nested tuple,
  which will produce $3 \times 2 = 6$ tuples when flattened.  To identify the
  tuple that will be produced at offset $i = 4$ (i.e., the $5$-th tuple), we
  compute
  \[ i_1 = 4 \imod 3 = 1 \quad q_1 = 4 \idiv 3 = 1 \quad i_2 = (1 \imod 2) =
    1 \] Since $Y_1$ and $Y_2$ are flat schemes,
  $\flatten(t(Y_1)) = t(Y_1)$ and $\flatten(t(Y_2)) = t(Y_2)$. As such, $i_1$
  and $i_2$ give offsets directly in $t(Y_1)$ and $t(Y_2)$, which are
  represented at $\store{}(Y_1)$ and $\store{}(Y_2)$ respectively.  In other
  words, the $5$-th tuple obtained while flattening $t$ has $u = u_2$,
  $a = a_1$, $v = v_5$.
\end{example}

Given this order, we turn $\store{}$ into a random-access index by first adding
one extra column to $\store{}(X)$: the \emph{prefix vector} $\prefix{}$. This
vector contains, for every offset $0 \leq i < \len{N}$, the sum of the
weights of the nested tuples up and including offset $i$. To
illustrate, Figure~\ref{fig:CSR} shows this prefix vector in blue. The prefix
vector can clearly be computed in linear time.

\newcommand{\csraccess}{\textsf{csr-get}}
\newcommand{\csrsubaccess}{{\sc csr-sub}}
\begin{figure}
          \tt \small
          \begin{algorithmic}[1]
\Function{\csraccess}{$\store{},\, X,\, i$}\algorithmicdo
            \State result = $\{\}$ \Comment{result tuple, initially empty}
            \State find smallest $j$ in 0..\card{\store{}(X)} s.t. $i <  \prefix{}[j]$
            \State \csrsubaccess($\store, X, j, i - \prefix{}[j-1]$) \Comment{assume \prefix{}[-1] = 0}
            \State return result
            \EndFunction
            \smallskip
            \Function{\csrsubaccess}{$\store{}, X, j, i$}\algorithmicdo
            \State result = result$\uplus  \{ a \mapsto \store{}(X).a[j] \mid a \in X\}$
            \For{each nested attr $Y \in X$}
            \State $(j', w) = \store{}(X)[j](\hol{Y},\weight{Y})$
            \State $i' = i$ mod $w$; $i = i$ div $w$
            \State curr\_weight = weight\_of$(\store{}, Y, j')$
            \While{$j' \geq 0$ and $i' \geq$ curr\_weight}
            \State $i' = i' - curr\_weight$
            \State $j' = \store{}(Y).\nxt[j']$
            \State curr\_weight = weight\_of$(\store{}, Y, j')$
            \EndWhile
            \State \csrsubaccess($\store{}, Y, j', i'$)
            \EndFor
            \EndFunction
          \end{algorithmic}
\caption{\csr random access.}
\label{algo:csr-access}
\end{figure}

Given an offset $i$ into $\flatten(N)$, we can access the tuple at offset $i$
directly from the \csr $\store{}$ of $N$ as shown in
Figure~\ref{algo:csr-access}:
\begin{compactenum}[(1)]
\item Allocate space to store the result tuple.
\item Use binary search to locate the smallest index
  $0 \leq j < \card{\store{}(X)}$ for which $i < \prefix{}[j]$. This
  takes time $\bigo(\log \len{\store{}}))$ and identifies the nested tuple that produces the tuple at offset $i$ when flattened. 
\item Call \csrsubaccess\xspace to focus on the nested tuple $t$ represented at offset $j$ in $\store{}(X)$, and to construct from $t$ the tuple that is produced at offset  $(i - \prefix{}[j-1])$ when flattening $t$.
\item \csrsubaccess\xspace does this by first copying the values of all flat
  attributes, which are at offset $j$ in $\store{}(X)$. It remains to get the
  correct values for the nested attributes. This is done by iterating through
  all nested attributes $Y$. For each $Y$, we compute the offset $i'$ of the
  tuple that needs to be retrieved from $\flatten(t(Y))$ by means of the
  mixed-radix indexing scheme~(equation \eqref{eq:mixed-radix-direct-access}). Then, starting at $\store{}(Y).\hol{Y}[j]$, we iterate through $\store{Y}.\nxt$'s linked list of tuples, keeping track of how many tuples these would produce when flattened. As soon as this number exceeds $i'$ we have found the correct position, and we call \csrsubaccess\xspace recursively to populate the result values for $\flatsub(Y)$.

\end{compactenum}

\csrsubaccess\xspace will be called $h$ times, where $h =\card{\rsub(X)}$ is
the number of subschemes of $X$. Each call it linearly traverses a linked list.
The overall complexity is $\bigo(\log \size{\store{}(X)} + h \times d)$ where
$d$ is the maximum length of any of the linked lists that we need to traverse.
When $\store{N}$ was constructed using
nested semijoins starting from flat relations, then it is not difficult to see
that $\len{\store{}(X)} \leq \card{\db}$ and that $d$ can be at most the degree
of any join key in $\db$, which is formally defined as follows.

\begin{definition}
  Assume that $R\colon \x$ is a flat relation, let $\y \subseteq \x$ and let $t$ be a $\y$-tuple. The degree of $t$ in $R$ is the number of times that $t$ occurs in $\pi_y(R)$. In other words, it is the cardinality of $\sigma_{\y = t}(R)$. The degree of $\y$ in $R$, denoted $\degree_{\y}(R)$  is the maximum degree of any $y$-tuple in $R$. The \emph{degree} of a join query 
$\full{Q} = R_1(\x_1) \Join \dots \Join R_k(\x_k)$ in a database $\db$ is defined as
\[ \degree_{\full{Q}}(\db) \defeq \max \{ \degree_{\x_i \cap \x_j}(R_i) \mid i \not = j, \x_i \cap \x_j \not = \emptyset \} \]
\end{definition}

We may hence conclude:

\begin{proposition}
  Using nested semijoins it is possible to construct in $\bigo(\size{\db})$ time
  a \csr that can be used as a random-access index for an acyclic join query $\full{Q}$,
  with access time $\bigo(\log \size{\db} + \degree_{\full{Q}}(\db))$.
\end{proposition}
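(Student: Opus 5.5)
We combine the \twonsa construction of Proposition~\ref{prop:probability-attribute-at-root} (or simply the bottom-up construction from a join tree, computable by GYO in time independent of $\db$) with the linear-time \csr building procedure for nested semijoins, and then analyse \csraccess. The argument has three steps: construction time, correctness of the access order, and access time.

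\textbf{Construction.} Fix a join tree $J$ for $\full{Q}$ and let $\flatten(E)$ be the associated \twonsa expression. We evaluate $E$ bottom-up, starting from the trivial \csrs of the input relations. By induction over $E$ we maintain two invariants. First, every physical relation in the current \csr has length at most $\card{R}$ for the input relation $R$ it originates from; the only new relation, $\store{}(X_R\cup\{Z\})$, is a filtered copy of $\store{R}(X_R)$. Second, the total size is at most $\card{\db}$ times the number of atoms. Each application of $\nsemijoin$ then costs time linear in the sizes of its two argument \csrs: one pass of \csrgroup over $\store{S}(X_S)$ with constant-time \texttt{weight\_of}, since the number of nested attributes is fixed by the query, plus one probing pass over $\store{R}(X_R)$. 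Because the query is fixed there are only constantly many semijoins, so the total time is $\bigo(\size{\db})$. The prefix vector on the root relation adds one linear pass.

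\textbf{Correctness.} We show by induction on the scheme depth that \csrsubaccess$(\store{},Y,j,i)$ returns the $i$-th tuple of $\flatten(t)$ in the order fixed above, where $t$ is the nested tuple at offset $j$ of $\store{}(Y)$. Two facts drive the induction. The mixed-radix decomposition picks the correct component offsets $i_j$. Walking the $\nxt$ list while subtracting weights locates the tuple of $t(Y)$ whose flattening contains offset $i'$, because the list enumerates exactly $t(Y)$ and the weights are correct by construction. The binary search over $\prefix{}$ handles the top level. Since $\flatten(E)$ is equivalent to $\full{Q}$, this gives a random-access index.

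\textbf{Access time.} The binary search costs $\bigo(\log\len{\store{}(X)}) = \bigo(\log\size{\db})$. \csrsubaccess is called once per scheme in $\rsub(X)$, which is a constant number of times. The main obstacle is bounding each linked-list walk by $\degree_{\full{Q}}(\db)$. A list for $Z$ created in $R\nsemijoin S$ chains the tuples of $\store{S}(X_S)$ sharing a join key $\z = X_R\cap X_S$. Those tuples correspond injectively, by the length invariant, to tuples of the flat input relation $S_0$ at the child node of $J$ that share that key. Here $\z$ equals $\x_i\cap\x_j$ for the parent atom $R_i$ and child atom $R_j$, because $E$'s semijoin attributes at a node are exactly the flat attributes of the node atom and of the child's root relation. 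Hence the list length is at most $\degree_{\x_i\cap\x_j}(R_j)\le\degree_{\full{Q}}(\db)$. This yields access time $\bigo(\log\size{\db}+\degree_{\full{Q}}(\db))$ per tuple, with the constant factor $h$ absorbed since $h$ depends only on the query.
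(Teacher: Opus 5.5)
Your proposal follows the paper's proof. You build the \csr of $E(\db)$ for the \twonsa expression in linear time and add the prefix vector. You then bound access by one binary search plus $\card{\rsub(X)}$ (a constant) linked-list walks, each bounded by a join-key degree. Your additions are explicit versions of two facts the paper only calls ``not difficult to see'': the filtered-copy invariant, and the identification of the grouping key with $\x_i\cap\x_j$.

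One caveat, which the paper's argument shares: the inequality $\degree_{\x_i\cap\x_j}(R_j)\le\degree_{\full{Q}}(\db)$ requires $\x_i\cap\x_j\neq\emptyset$. If two adjacent atoms in the join tree share no attribute (a Cartesian product), every surviving child tuple lands in a single $\nxt$-list. That list's length is not covered by the definition of $\degree_{\full{Q}}(\db)$, so both your proof and the paper's implicitly assume no such edge exists.
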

\begin{proof}
  Since $\full{Q}$ is acyclic, we can compute $\full{Q}(\db)$ by means of
  \twonsa expression $\flatten(E)$, where $E$ consists of nested semijoins
  only. By our earlier reasoning we can build a \csr for $N = E(\db)$ in time
  linear in $\db$. We can then compute the prefix vector also in linear time.
  This \csr + prefix vector allows random access into $\flatten(N)$ with access
  time $\bigo(\log \size{N} + h \times \degree_Q(\db))$. The result follows by
  observing that necessarily $\size{N} = \bigo(\size{\db})$ and that $h$ is
  constant in data complexity.
\end{proof}

\smallskip\noindent\textbf{Caching optimization.} 
Above, we focused on random access for a single tuple offset $i$. 
When given a sequence $\pos = [i_1,\dots,i_k]$ of such positions to access randomly in bulk, we can apply the following \emph{caching optimization}. 
During random access to the output tuple at position $i_\ell$, we traverse the linked list starting at $\store{}(Y).\nxt[j']$. We cache the position (i.e., a pointer) at which this traversal terminates. If the subsequent random access to the output tuple at position $i_{\ell+1}$ requires traversing the same linked list and the desired item appears further along that list, we resume the traversal from the cached pointer rather than restarting from the head of the list. This strategy avoids unnecessary repeated traversals from the beginning of the linked list. \inFullVersion{Pseudocode is given in the Supplementary Material.} \inConfVersion{Pseudocode is given in the full paper version~\cite{full-version}.}

\subsection{Unchained Shredding}
\label{sec:unchained}

The \emph{unchained shredded representation} (\usr) of a nested relation
$N\colon X$ differs from \csr in that it 
stores the offsets, and the prefix vector, of tuples in the same nested
attribute value consecutively rather than chaining them together. This allows binary rather than linear search during random access, at the expense of more complicated index construction.

To that end, a \usr uses the following extra columns. Every $\store{}(Y)$ has a column
$\prefix{}$. Additionally, for every nested subscheme $Z \in Y$, $\store{}(Y)$
has three extra columns, $\start{Z}$, $\llen{Z}$, and $\weight{Z}$. Finally, if
$Y \not = X$ then $\store{Y}$ also has a column $\perm$. To illustrate,
Figure~\ref{fig:UCSR} shows a \usr for the nested relation in
Fig~\ref{fig:R-semijoin-S-semijoin-T}.

Unchained shredding works as follows: every tuple $t \in N$ is represented by
exactly one tuple in $\store{}(X)$. Let $i$ be the representations' offset.
Then $t(x) = \store{}(X).x[i]$ for every flat $x \in X$. For every nested
attribute $Y \in X$ we have that $\weightval(t(Y)) =
\store{}(X).\weight{Y}[i]$. Moreover, if $j = \store{}(X).\start{Y}[i]$ and
$l = \store{}(X).\llen{Y}[i]$, then $\store{}(Y).\perm[j:l]$ contains the
offsets of all the tuples in $\store{}(Y)$ that together represent the tuples
occurring in $t(Y)$. Here, $\perm[j:l]$ denotes the slice of $\perm$ starting at
$j$ and ending at $j+l-1$. Finally, $\store{}(Y).\prefix{}[j:l]$ is the prefix
sum of the weights of those tuples, see Figure~\ref{fig:UCSR}.  For $X$ itself,
$\store{R}(X).\prefix{}$ contains the prefix sum of all tuples.

\smallskip\noindent\textbf{Discussion.} A \usr is an
implementation, in the shredded framework, of the random-access index proposed
in \cite{DBLP:journals/tods/CarmeliZBCKS22}, with the following difference. A
\usr uses the slice columns $\start{Z}, \llen{Z}$ to index into a permutation
vector $\perm$ to encode the offsets of elements of inner nested attributes
$Z$. By contrast, Carmeli et al. leave open how these elements are to be
represented exactly, suggesting implementation wise to use an additional index
data structure on $Z$ and probing this index during random access. This is
wasteful since when building the \usr we already have the correct positional
information, avoiding extra index space and associated costs.

\smallskip\noindent\textbf{Building the representation.} Similar to the chained
case, a \usr representing $R \nsemijoin S$ can be built in linear time given
\usrs $\store{R}$ and $\store{S}$ for $R\colon X_R$ and $S\colon X_S$,
respectively. The procedure is entirely similar to the building of a $\csr$ for
$R\colon X_R$ and $S\colon X_S$ described earlier, but differs in the algorithm
used to group $S$ by the join key. We hence focus next only on the grouping
aspect.

Specifically, to group $S$ on the join attributes $\z$, we now construct a triple $(h, \perm, \prefix{})$ consisting of (i) a hash table $h$ that maps the join keys occurring in $S$ to triples of the form $(i, l, w)$, (ii) a position vector $\perm$, and (iii) a prefix vector $\prefix{}$. For every join key $k$, $\perm[i:l]$ contains the positions of all tuples having join key $k$, and $\prefix{}[i:l]$ is the prefix sum of their weights, in the order that the tuples are specified in $\perm[i:l]$. The triple $(h, \perm, \prefix{})$ is then used in the probing phase to construct the necessary columns of the \usr representation. 

Compared to Algorithm~\ref{algo:csr-grouping}, which describes the \csr
grouping, \usr grouping needs to make two hashing passes over $S$ to construct
$(h, \perm, \prefix{})$. In the first pass we count, per join key $k$, the number
of tuples $l$ with that join key. Using this information, we can determine for
each join key the pair $(i,l)$ so that the positions of all tuples with join key
$k$ can be stored in the slice $\perm[i:l]$. A second hashing pass through the
data then actually stores the positions of the tuples with join key $k$ in this
location, and computes the prefix vector.

Given that \usr requires an extra hashing pass, \usr building is expected to be slower than
\csr building even though they share the same asymptotic complexity. We return to this in 
Section~\ref{subsec:unchained-experiments}.

\newcommand{\usraccess}{{\sc usr-get}}
\newcommand{\usrsubaccess}{{\sc usr-sub}}
\begin{figure}
          \tt \small
          \begin{algorithmic}[1]
\Function{\usraccess}{\store, i}\algorithmicdo
\State result = $\{\}$ \Comment{result tuple, initially empty}
            \State \perm\  = [1, 2, ..., \card{\store{}(X)}]
            \State \usrsubaccess(\store, X, i, \store{}(X).\prefix{}, \perm)
            \State return result
            \EndFunction
            \smallskip

            \Function{\usrsubaccess}{\store, $X$, i,\prefix{}, \perm}\algorithmicdo
            \State find smallest $j$ in $0..\len{\prefix{}}$ s.t. $i < 
\prefix{}[j]$
            \State $i = i - \prefix{}[j-1]$ \Comment{assume \prefix{}[-1] = 0}
            \State j = \perm[j]
            \State result = result $\uplus  \{ a \mapsto \store{}(X).a[j] \mid a \in X\}$
            \For{each nested attr $Y \in X$}
            \State $w = $ weight\_of($\store, Y, j$); $i' = i \imod w$; $i =  i \idiv w$
            \State $(s, l) = \store{}(X)[j](\start\_{Y},\llen\_{Y})$
            \State \usrsubaccess$(\store, Y, i', \store{}(Y).\prefix{}[s:l], \store{}(Y).\perm[s:l])$
            \EndFor
            \EndFunction
          \end{algorithmic}
\caption{\usr Access}
\label{algo:usr-access}
\end{figure}

\smallskip\noindent\textbf{Random access.} The fact that the positions and prefix vector are stored consecutively per join key aids in random access complexity. The \usr random access procedure, shown in Algorithm~\ref{algo:usr-access} is similar to the \csr access procedure. However, the fact that each nested attribute (and not only the top-most scheme $X$) has a prefix vector allows us to perform binary search at every level, obviating the need to iterate linearly through the nested sets. The overall complexity is therefore $\bigo(\log \size{\store{}(X)} + h \times \log(d))$ where $d$ is the maximum cardinality of any nested set and $h = \card{\rsub(X)}$ is the number of times that \usrsubaccess \xspace is called.

\begin{proposition}
  Using nested semijoins it is possible to construct in $\bigo(\size{\db})$ time
  a \usr that can be used as a random-access index for acyclic join query
  $\full{Q}$, with access time $\bigo(\log \size{\db}))$.
\end{proposition}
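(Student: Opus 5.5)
We follow the same route as the proof of the corresponding proposition for \csr, replacing the chained grouping by the two-pass unchained grouping and the linear list traversal by binary search. Since $\full{Q}$ is acyclic, we fix a join tree and obtain the \twonsa expression $\flatten(E)$ described before Proposition~\ref{prop:probability-attribute-at-root}. Here $E$ is a nested composition of nested semijoins over the input atoms. We build a \usr for $N = E(\db)$ bottom-up along this expression, and then add the top-level prefix vector $\store{}(X).\prefix{}$.

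\emph{Construction time.} We argue by induction on $E$ that each intermediate \usr has size $\bigo(\size{\db})$ and is built in time linear in its arguments. Two facts give this.
\begin{compactenum}[(a)]
\item Each step $R \nsemijoin S$ only adds columns to, or filters rows of, the physical relation representing $R$. The representation of $S$ is reused with its join columns dropped and the $\perm$ and $\prefix{}$ columns added. Hence every $\store{}(Y)$ has at most as many rows as the input relation of the atom it originates from.
\item The two hashing passes of the unchained grouping are linear under our assumption that hash tables are built in linear time and probed in constant time. The first pass counts, per key, the number of tuples $l$; a running sum then gives the start offsets $i$. The second pass scatters positions into $\perm$ and accumulates the per-key prefix sums, where each weight is obtained by \texttt{weight\_of} in constant time, since the number of nested attributes is fixed. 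The probe phase fills the $\start{Z}$, $\llen{Z}$ and $\weight{Z}$ columns from $h$ in one pass over $R$.
\end{compactenum}
Summing over the constantly many operators in $E$ gives $\bigo(\size{\db})$.

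\emph{Access time.} We first establish correctness of \usraccess by induction on the scheme, using the order on $\flatten(N)$ fixed in the \csr discussion, adapted so that the inner tuples of $t(Y)$ are ordered as listed in $\perm[s:l]$. At each level, binary search on the prefix slice locates the nested tuple whose flattening contains the requested offset. The subtraction of $\prefix{}[j-1]$ yields the local offset. The mixed-radix decomposition~\eqref{eq:mixed-radix-direct-access} then splits this offset correctly among the nested attributes, since $\weightval(t)$ is the product of the weights of its nested attributes.

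For the cost, \usrsubaccess is invoked exactly $h = \card{\rsub(X)}$ times. Apart from the binary search, each invocation does $\bigo(1)$ work per attribute. Each binary search runs over a slice of length at most $\len{\store{}(Y)} \le \size{\db}$, so it costs $\bigo(\log \size{\db})$. Since $h$ is constant in data complexity, the total access time is $\bigo(h \log \size{\db}) = \bigo(\log \size{\db})$. This improves on the stated bound $\bigo(\log\size{\store{}(X)} + h\log d)$ only by noting $d \le \size{\db}$.

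\emph{Main obstacle.} The step needing most care is the size invariant in (a). Sharing of nested values means the $\start{}$/$\llen{}$ slices of different parent tuples may point into the same region of $\perm$, and we must check that this sharing never forces duplication of $\perm$ entries. It does not: each tuple of $S$ has exactly one join key, so it appears in exactly one slice. Consequently $\card{\perm} = \len{\store{}(Y)}$, which keeps both the representation linear and the binary-search ranges bounded by $\size{\db}$.
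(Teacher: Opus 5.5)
Your proposal is correct and takes essentially the same route as the paper. The paper justifies this proposition by analogy with the \csr case: build the \usr bottom-up along the join-tree \twonsa expression using the linear-time two-pass grouping, then access it with one binary search per level. This gives $\bigo(\log \size{\store{}(X)} + h \log d) = \bigo(\log \size{\db})$, since $h$ is constant and $\size{\store{}(X)}, d \le \size{\db}$.

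One wording slip: your final bound does not ``improve on'' $\bigo(\log\size{\store{}(X)} + h\log d)$; it is obtained from it by coarsening via $d \le \size{\db}$.
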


\smallskip\noindent\textbf{Optimization.} 
The same optimization used for \csr \emph{bulk} random access can also be applied for \usr. Specifically, for every nested attribute $Y$, we cache the result of the binary search so that, if the subsequent binary search for $Y$ is performed over the same search space, the search can resume from the point where the previous one has ended. \inFullVersion{Pseudocode is given in the Supplementary Material.} \inConfVersion{Pseudocode is given in the full paper version~\cite{full-version}.}

\section{Position sampling}
\label{sec:position-sampling}

We next describe how to construct $\pos = [i_1, \dots, i_k]$, the sequence of
tuple offsets 
that defines the output sample, and which we use to probe to the random-access
index.  To that end, we assume given a nested relation $N$ and a shredded
random-access index $\store{}$ for $N$. Throughout the section, let
$n = \card{\flatten(N)}$. Note that we can obtain $n$ from $\store{}$ in
constant time as it is the last element of the $\prefix{}$ vector of
$\store{}(X)$ in both \csr and \usr. To get insight into the position sampling
process, we first discuss how to draw $\pos$ in the simplified setting when all
tuples have the same probability. Using this insight, we turn to the general
setting where tuples have non-uniform probabilities.

\smallskip\noindent\textbf{Uniform.}  Assume that all tuples in $\flatten(N)$
have the same fixed probability $p \in [0,1]$. Then the sample space $I$ to draw
$\pos$ from is $I = \{0, \dots, n-1\}$. The \emph{uniform position sampling
  problem for $p$ and $n$} asks to construct $\pos \subseteq I$ such that each
element $i \in I$ is included in $\pos$ with probability $p$.
Observe that the expected size of $\pos$ follows a Binomial distribution with
parameters $n$ and $p$, and hence equals $np$.

The naive approach is to perform, for each $i \in I$, an independent Bernoulli trial with probability $p$, and to include $i$ in $\pos$ whenever the trial succeeds. This method, which we call $\idxnaive$, requires $\Theta(n)$ time, irrespective of $p$ and irrespective of the produced position vector length $k$. Consequently, $\idxnaive$ is suboptimal when $p$ is small, since it still performs a Bernoulli trial for every flat output tuple, even though the expected sample size is only $np$. 

Alternatively we can also construct $\pos$ by repeatedly sampling from the geometric distribution. Figure~\ref{algo:geometric} shows the pseudocode of this approach, denoted $\geom$.
Let $X$ be a stochastic variable that follows a geometric distribution with probability $p$, denoted $X \sim \mathrm{Geometric}(p)$. The value of $X$ represents the number of independent Bernoulli($p$) trials that result in failure before the first success occurs, i.e, $Pr[X = i] = p (1-p)^i$ for $i = 0, 1, \dots$.
Function \drawgeom shows how to draw from the  geometric distribution in constant time. Intuitively, we sample from the inverse of the exponential distribution and apply truncation to map the exponential distribution to the geometric distribution, see also \cite{DBLP:books/sp/Devroye86}. In $\geom$, successive calls to \drawgeom \xspace determine the gaps between sampled positions until the running index exceeds $n$. 
The complexity of \geom is $\bigo(k)$ with $k$ the size of the sample; thus the expected complexity of $\geom$ is $\bigo(np)$. We hence expect $\geom$ to outperform $\idxnaive$ for smaller sampling probabilities. However, we will 
experimentally show in Section~\ref{subsec:uniform-experiments} that for larger probabilities, $\geom$ is actually slower than $\idxnaive$. For this reason, we also define the hybrid method $\hybrid$ which uses $\geom$ when $p$ is smaller than a fixed threshold and $\idxnaive$ otherwise. We experimentally determine the threshold to be $p = 0.5$ in Section~\ref{subsec:uniform-experiments}.

  Finally, it is also possible to construct $\pos$ by first drawing $k$ from $\mathrm{Binomial}(n, p)$, the binomial distribution with parameters $n$ and $p$, and subsequently drawing a subset of size $k$ from $\{ 0 , \dots , n-1 \}$.
  The first step is $\bigo(n)$ in the worst case, but has expected runtime  $\bigo(n \times \min(p, 1-p))$~\cite{DBLP:journals/cacm/KachitvichyanukulS88}. The second step can be done in $\bigo(k)$  time~\cite{DBLP:journals/tods/CarmeliZBCKS22}. We denote this approach $\bbinom$ in what follows. 

\begin{figure}
\begin{algorithmic}[1]
  \small
  \Function{\drawgeom}{$p$}\algorithmicdo
  \State $u \sim \text{Uniform}(0,1)$
  \State \Return $\lfloor \ln(u) / \ln(1 - p) \rfloor$ 
\EndFunction
\smallskip
\Function{$\geom$}{$p, n$}\algorithmicdo
\If{$p = 0$ \Return $[]$}
\ElsIf{$p = 1$ \Return $[0, 1, 2, \dots, n-1]$}
\Else\algorithmicdo
  \State $\textsf{result} = []$; $i = \Call{DrawGeom}{p}$
  \While{$i < n$}
    \State $\textsf{result} =  \textsf{result} \textsf{++} [i]$ \Comment{append $i$ to the vector $\textsf{result}$}
    \State $j = \Call{DrawGeom}{p}$; $i = i + 1 + j$
  \EndWhile
\State \Return $\textsf{result}$
\EndIf
\EndFunction
\end{algorithmic}
\caption{$\geom$ algorithm}
\label{algo:geometric}
\end{figure}

\smallskip\noindent\textbf{Non-uniform.}
We next discuss the non-uniform case, assuming that the attribute $y$ that contains the probability that each tuple should be sampled with is a flat attribute of $N$. We may do so without loss of generality by Proposition~\ref{prop:probability-attribute-at-root}.

Every nested tuple $t$ of $N$ specifies in attribute $y$ the sampling
probability of all tuples in $\flatten(t)$, of which there are $\weightval(t)$
in total.  Hence, conceptually we can reduce the construction of $\pos$ in the
non-uniform case to $\card{N}$ uniform position sampling problems---one for each
$t \in N$.  It suffices to iterate over the nested tuples $t \in N$,
constructing a position vector $\pos_t$ for the uniform position sampling
problem with probability $t(y)$ and length $\weightval(t)$, and concatenating
these vectors---making sure to update offsets to take into account the positions
sampled for previous nested tuples. Implementation-wise, we simply iterate over the tuples of $\store{}(X)$.

\begin{sloppypar}
In what follows, when $M$ is one of the uniform position sampling methods, i.e.,
$M \in \{\idxnaive, \geom, \hybrid, \bbinom\}$, then we denote by $\pertuple{M}$ this
non-uniform position sampling method.
\end{sloppypar}

\smallskip\noindent\textbf{Asymptotic complexity of Poisson Sampling.} We close this section by establishing the asymptotic complexity of Poisson sampling over acyclic joins.

\begin{theorem}
  \label{thm:poisson-sampling-complexity}
  Poisson sampling over acyclic joins, as well as over free-connex projections of such joins, can be solved in time
  $\bigo(\insize + k \log \insize)$ in data complexity, where $\insize$ is the size
  of the input database, and $k$ the size of the resulting sample.
\end{theorem}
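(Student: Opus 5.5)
We follow the Index-and-Probe strategy and bound each of its three phases separately. First, by Proposition~\ref{prop:probability-attribute-at-root} we compute, in time depending only on the query, a two-phase expression $\flatten(E)$ equivalent to $\full{Q}$ in which the probability attribute $y$ is a flat attribute of the output scheme of $E$. We then build a \usr for $N = E(\db)$ together with its prefix vectors. By the proposition on unchained shredding this takes $\bigo(\insize)$ time, and it yields a random-access index with access time $\bigo(\log \insize)$ per position. In particular $\card{N} \le \card{\store{}(X)} = \bigo(\insize)$.

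Second, we produce the probe sequence using $\pertuple{\geom}$. We iterate over the $\card{N} = \bigo(\insize)$ tuples $t$ of $\store{}(X)$ and run $\geom(t(y), \weightval(t))$, shifting the offsets by the prefix value of the preceding tuple. The weight of $t$ is the product of the stored $\weight{Y}$ columns, so it is available in constant time, and $\drawgeom$ runs in constant time. Each call to $\geom$ therefore costs $\bigo(1 + k_t)$, where $k_t$ is the number of positions it emits. Summing over all $t$ gives $\bigo(\insize + k)$ with $k = \sum_t k_t$.

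For correctness we use the fact that successive geometric gaps of parameter $p$ reproduce exactly independent Bernoulli$(p)$ trials on $\{0,\dots,n-1\}$, since the gaps are the memoryless waiting times between successes. The per-tuple runs use independent randomness, so each flat tuple of $\flatten(t)$ is included independently with probability $t(y)$. This matches the semantics of $\bern_y$, because $y$ is constant across $\flatten(t)$. The cases $p \in \{0,1\}$ are handled explicitly by $\geom$. Third, $\idx.\access(\pos)$ costs $\bigo(k \log \insize)$, and adding the three phases gives $\bigo(\insize + k\log\insize)$.

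The main obstacle is the free-connex case, where $\bern_y$ is applied to $\pi_A(\hat{Q})$ or $\delta\pi_A(\hat{Q})$. Here the sampling unit is a tuple of the projection, so we cannot sample join tuples directly. The plan is to use the standard free-connex reduction: in linear time, via a semijoin-reduction pass bottom-up over a join tree of $Q'$ rooted at the added $A$-atom, compute an acyclic \emph{full} join over attributes $A$ only whose result equals $\delta\pi_A(\hat{Q})$.
\begin{itemize}
\item For the set version, we project the reduced relations of the connex subtree covering $A$ with duplicate elimination. The resulting relations have size $\bigo(\insize)$, and the set-projection is their full acyclic join.
\item For the bag version, we instead propagate multiplicities as an extra count attribute, aggregating the weights of the removed subtrees into the relations containing $A$. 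Each $A$-tuple then appears as many times as in $\pi_A(\hat{Q})$. This is realized by letting the index treat a tuple with count $c$ as $c$ copies: $c$ enters the weight product and $\flatten$ expands it.
\end{itemize}
In both cases $y \in A$, so we are back to the full acyclic case on a database of size $\bigo(\insize)$, and the bound follows. The delicate point to check is that these reductions are indeed linear and that $y$ can still be placed at the root, which holds because the reduced query is acyclic and Proposition~\ref{prop:probability-attribute-at-root} applies.
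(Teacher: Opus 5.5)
Your proof is correct. For full acyclic joins it follows the paper's argument exactly: Proposition~\ref{prop:probability-attribute-at-root}, a linear-time \usr, $\pertuple{\geom}$ in $\bigo(\card{N}+k)$, then $k$ probes of cost $\bigo(\log\insize)$ each. You also usefully add why geometric gaps reproduce independent Bernoulli trials. For set projection you re-derive the free-connex reduction, which the paper simply cites from Carmeli et al. Your phrase ``connex subtree covering $A$'' is loose. The natural reading is to semijoin-reduce each subtree below the virtual $A$-root and project its top atom onto $A$ with duplicate elimination. That reading is sound: distinct subtrees share only attributes of $A$, and a GYO argument shows the projected hypergraph is acyclic whenever $\hat{Q}$ is.

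The genuinely different step is the bag case. You push multiplicities into count attributes and extend the index so that a count $c$ multiplies the weight and is expanded by $\flatten$. This works, but it requires modifying \usr and checking that the counts factorize over the subtrees hanging off the root. The paper instead observes that $y \in A$ and that bag projection is a bijection on tuple occurrences, so $\bern_y(\pi_A(\hat{Q})) = \pi_A(\bern_y(\hat{Q}))$. It therefore runs the full-join algorithm unchanged and projects each of the $k$ sampled tuples in constant time; the sample size is the same. That route needs no new index machinery and does not even use free-connexity, since acyclicity of $\hat{Q}$ suffices. Yours only buys an index over the smaller $A$-only instance.
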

\begin{proof}
  Let $Q = \beta_y\left(R_1(\x_1) \Join \dots \Join R_m(\x_m)\right)$ be an
  acyclic Poisson sampling query. By
  Proposition~\ref{prop:probability-attribute-at-root} we can compute a \twonsa
  expression $\flatten(E)$ equivalent to $\full{Q}$, such that $y$ is a flat
  attribute of $E$.

  Given $E$, we can compute in $\bigo(\insize)$ time a \usr $\store{}$ of the
  nested relation $N \defeq E(\db)$, which supports single-tuple random access
  into $\flatten(N) = \full{Q}(\db)$ in $\bigo(\log \insize)$ time. Given
  $\store{}$ we can use $\pertuple{\geom}$ to compute a probe sequence $\pos$ of
  length $k$ in $\bigo(\card{N} + k) = \bigo(\insize + k)$ time, and use this to
  produce the resulting sample in $\bigo(k \log \insize)$ time by probing the
  index. Summing up the complexity of each individual step yields the claimed
  complexity.
  
  For Poisson sampling queries of the form $\beta_y(\delta \pi_A(\hat{Q}))$ with
  $\pi_A(\hat{Q})$ free-connex acyclic we reason as follows. Carmeli et
  al.~\cite{DBLP:journals/tods/CarmeliZBCKS22} observe that for every
  free-connex acyclic conjunctive query $\delta \pi_A(\hat{Q})$ and database $D$, one can
  compute in linear time a full acyclic join query $Q'$ and a database $D'$ such
  that $\delta \pi_A(\hat{Q})(D) = Q'(D')$. In other words, processing
  $\beta(\delta \pi_A(\hat{Q}))$ on $D$ is equivalent to processing $\beta(Q')$ on $D'$
  and our techniques apply to the latter. Since there is only a linear time
  overhead to obtain the latter from the former, our complexity results transfer
  to sampling over free-connex queries where the projection is set-based. For bag-based projection, our complexity results trivially hold because $\beta_y(\pi_A(\hat{Q})) = \pi_A(\beta_y(\hat{Q}))$.
\end{proof}

\section{Experimental Evaluation}
\label{sec:experiments}

\noindent\textbf{Implementation.}
We start from the Shredded Yannakakis (\sya) implementation of~\cite{DBLP:journals/pvldb/BekkersNVW25} which is implemented in Apache DataFusion, a Rust-based in-memory columnar query engine. We extend this implementation by (1) adding the \csr access method to \sya's existing \csr implementation; (2) adding the unchained \usr variant; and (3) adding the position sampling methods. All code and experiments are available in our reproducibility package~\cite{code}.

\medskip\noindent\textbf{Benchmarks.}  We evaluate our methods on two sets of
sampling queries: (1) uniform sampling queries of the form $\bern_p(\full{Q})$
where $p$ is a \emph{fixed uniform} probability $p \in [0,1]$; and (2) true
non-uniform Poisson sampling queries $\bern_y(\full{Q})$. The former allows to
gain insight into the trade-offs  across different fixed
probabilities $p$; while the latter allows to confirm these insights for Poisson sampling.

The full join queries $\full{Q}$ are drawn from the join order benchmark
\job~\cite{DBLP:journals/pvldb/LeisGMBK015} and
\stats~\cite{DBLP:journals/pvldb/HanWWZYTZCQPQZL21} which contain synthetic
queries evaluated on real-world data. For Poisson sampling, we additionally add
the real-world benchmark query $\contactquery$ from
Example~\ref{ex:epiql-contacts}, which simulates contact patterns in infectious
disease transmission scenarios and which is executed on real-world contact
probability data~\cite{Hoang2021ContactFlanders,Willem2012WeatherContacts}
modeling the population of Belgium, consisting of $1.1 \times 10^{7}$
individuals.

Both \job and \stats consist of SQL queries that perform base table filters and
equijoins, followed by a single aggregation. We omit the final aggregation step
to obtain the full join queries, all of which are acyclic.  Since Datafusion's
standard binary join method will serve as one of our baselines, we remove
queries for which the obtained full join fails to execute in Datafusion, leaving 112 \job queries and 142 \stats queries.

All of these queries are used for the uniform sampling experiments. For  Poisson sampling, however, we also need an attribute $y$ that specifies the sampling probability. Because neither \job nor \stats contain a probability attribute, we generate such an attribute as follows. 
Relation \jobtitle is central in the query graph of the \job workload~\cite{DBLP:journals/pvldb/LeisGMBK015}, and this relation also appears in all \job queries. Therefore, we add probability attribute $y$ to \jobtitle. For \stats, no single relation appears in all queries. We therefore attach probability attribute $y$ to the relation that occurs most frequently: \users,  present in 113 out of 142 queries. We populate attribute $y$ in both \jobtitle and \users according to three different distributions, denoted \low, \medium, and \high respectively:
\begin{itemize}%
    \item $\text{Beta}(a=2,b=10)$ with $\mathbb{E}[y] \approx 0.167$ and $\mathrm{Var}[y] \approx 0.011$
    \item $\text{Normal}(\mu=0.5,\sigma=0.2)$ with $\mathbb{E}[y] = 0.5$ and $\mathrm{Var}[y] = 0.04$
    \item $\text{Beta}(a=10,b=2)$ with $\mathbb{E}[y] \approx 0.833$ and $\mathrm{Var}[y] \approx 0.011$
\end{itemize}
Table~\ref{tab:benchmarks} summarizes 
the number of queries in each benchmark as well as the distribution of full join sizes in each setup. (For $\contactquery$ the benchmark has only one query, hence the distribution has only one measurement.) 

\begin{table}
\centering
\setlength{\tabcolsep}{4pt}
\resizebox{.7\linewidth}{!}{%
\begin{tabular}{@{}llrrrrr@{}}
\toprule
\textbf{Benchmark} & \textbf{Setup} & \textbf{\#Queries} & \multicolumn{4}{c}{\textbf{Full join output size}} \\
\cmidrule(lr){4-7}
 &  &  & \textbf{min} & \textbf{avg} & \textbf{median} & \textbf{max} \\
\midrule
\multirow{1}{*}{\job}   
 & \begin{tabular}[c]{@{}l@{}}uniform \\  poisson\end{tabular} & 112 & 0 & $1.39\times10^{5}$ & $4.44\times10^{2}$ & $3.71\times10^{6}$ \\
\midrule
\multirow{2}{*}{\stats} 
 & uniform & 142 & $2.00\times10^{2}$ & $5.75\times10^{7}$ & $1.48\times10^{6}$ & $2.26\times10^{9}$ \\
                   & poisson & 113 & $2.00\times10^{2}$ & $3.98\times10^{7}$ & $1.46\times10^{6}$ & $5.93\times10^{8}$ \\
\midrule
$\contactquery$ & poisson & 1 & \multicolumn{4}{c}{$1.32 \times 10^{10}$} \\  
\bottomrule
\end{tabular}
}
\caption{Overview of benchmark queries.}
\label{tab:benchmarks}
\vspace{-5ex}
\end{table}

\medskip\noindent\textbf{Setup.}
All experiments are conducted on a Ubuntu 22.04.4 LTS machine using a single thread with an Intel Core i7-11800 CPU and 32GB of RAM. All runtimes are averages of 5 different runs.
We apply caching optimization, as described in Section~\ref{sec:indexing}, when probing a chained index because it consistently improves performance on our benchmarks. In contrast, for unchained index probing, we found that the overhead of caching may slightly outweigh its benefits, as the search spaces are extremely small (see Section~\ref{subsec:unchained-experiments}). We therefore disabled unchained caching, unless explicitly specified otherwise. \inFullVersion{A detailed experimental analysis of caching is provided in the Supplementary Material.}\inConfVersion{A detailed experimental analysis of caching is provided in the full paper version~\cite{full-version}.}

\medskip\noindent\textbf{Baseline.} We consider three possible \mas implementations as potential baselines, differing in how they materialize the full join:
\begin{itemize}
\item \binsampling :  using a sequence of binary
  hashjoins.
    \item \syasamplingC :  using \csr-based \sya, i.e., by means of \csr-based nested semijoins followed by a flatten.
    \item \syasamplingU : the \usr variant of the previous method.
\end{itemize}
A per-tuple Bernoulli trial is always performed to create the sample.

Prefix M indicates that these methods fully materialize the join. 
Since Apache Datafusion lacks a join order optimizer, we use DuckDB to generate binary join plans for \binsampling and then apply the cost-based rewriting technique of \cite{DBLP:journals/pvldb/BekkersNVW25} to obtain the corresponding nested semijoin + flatten plans that are used in the other methods. Bekkers et al~\cite{DBLP:journals/pvldb/BekkersNVW25} have shown that computing full joins using chained \sya is instance-optimal, and therefore faster and more robust than using binary joins. On  our benchmarks, we observe that \syasamplingC is on average $91.4$ms faster than \binsampling for \job and $690$ms for \stats. 
This hence confirms the findings of~\cite{DBLP:journals/pvldb/BekkersNVW25}. Furthermore, we find the \syasamplingC  is faster than \syasamplingU on \job ($21.4$ms on avg.), while being competitive for \stats (median runtimes differ only $0.09$ms). A more detailed comparison of chained and unchained variants is deferred to Section~\ref{subsec:unchained-experiments}. Given that \syasamplingC is significantly faster than \binsampling and faster or competitive with \syasamplingU, we fix it as the baseline to compare $\iap$ against in what follows.

\subsection{Uniform Sampling}
\label{subsec:uniform-experiments}

\medskip\noindent\textbf{Position sampling.} 
We first analyze the efficiency of the uniform position sampling methods $\idxnaive$, $\geom$, and $\bbinom$ described in Section~\ref{sec:position-sampling}. 
Their expected runtimes are $\bigo(n)$, $\bigo(np)$, and $\bigo(n \times \min (p, 1-p) + np)$, respectively. 
$\idxnaive$ should be outperformed by $\geom$ and $\bbinom$ for smaller sampling probabilities.
Figure~\ref{fig:uniform-sampling-positions} confirms this hypothesis: the smaller $p$, the greater the improvement of $\geom$ and $\bbinom$ over $\idxnaive$. However, for larger values of $p$, $\idxnaive$ becomes consistently faster. 
$\bbinom$ follows the same trend as $\geom$, but is slightly slower due to higher constant-factor overhead. We hence discard $\bbinom$ from further discussion and experiments.

A closer look reveals that, although the expected runtime of $\idxnaive$ is independent of $p$, the observed runtime actually increases with $p$ up to approximately $p=0.5$, and decreases for larger values of $p$. We attribute this non-monotonic behavior to the effects of \emph{branch prediction}. $\idxnaive$ makes repeated random choices that depend on $p$. When $p \approx 0.5$, each branch is essentially unpredictable, causing frequent mispredictions, which slows execution. In contrast, when $p$ deviates from $0.5$, the branch outcome becomes more predictable, reducing misprediction rates and improving runtime.

For large sampling probabilities, $\geom$ requires nearly as many iterations as $\idxnaive$. 
As a $\geom$ iteration is more costly than performing a single lightweight Bernoulli trial, the constant overhead of $\geom$ dominates, making $\idxnaive$ more efficient for high values of $p$.

\begin{figure}[t]
    \centering
    \begin{tikzpicture}
        \begin{axis}[
            width=.7\columnwidth,
            height=6cm,
            ymode=log,
            xlabel={$p$},
            ylabel={Avg.\ position sampling time (ms)},
            y label style={yshift=-1em},
            label style={font=\small},
            tick label style={font=\scriptsize},
            legend style={
                font=\scriptsize,
                at={(0.5,1.05)},
                anchor=south,
                legend columns=3, 
                /tikz/every even column/.append style={column sep=0.5em},
                draw=none,
            },
            ymajorgrids=true,
            grid style=dashed,
            enlarge x limits=0.05,
            symbolic x coords={0.0001,0.001,0.01,0.1,0.2,0.3,0.4,0.5,0.6,0.7,0.8,0.9,0.99},
            xtick=data,
            xticklabels={$10^{-4}$,$10^{-3}$,$10^{-2}$,$10^{-1}$,$0.2$,$0.3$,$0.4$,$0.5$,$0.6$,$0.7$,$0.8$,$0.9$,$0.99$}, 
        ]
        \addplot+[blue, solid, thick, mark=oplus, mark options={solid, color=blue}] table [x=p, y=time_ms, col sep=comma] {./data/pos-sampling/job-Bern-pos-sampling.csv};
        \addlegendentry{$\idxnaive$ (\job)};  
        \addplot+[blue, solid, thick, mark=star, mark options={solid, color=blue}] table [x=p, y=time_ms, col sep=comma] {./data/pos-sampling/job-Geo-pos-sampling.csv};
        \addlegendentry{$\geom$ (\job)};
        \addplot+[blue, solid, thick, mark=triangle, mark options={solid, color=blue}] table [x=p, y=time_ms, col sep=comma] {./data/pos-sampling/job-Binomial-pos-sampling.csv};
        \addlegendentry{$\bbinom$ (\job)}; 
        \addplot+[red, solid, thick, mark=oplus, mark options={solid, color=red}] table [x=p, y=time_ms, col sep=comma] {./data/pos-sampling/stats-Bern-pos-sampling.csv};
        \addlegendentry{$\idxnaive$ (\stats)}
        \addplot+[red, solid, thick, mark=star, mark options={solid, color=red}] table [x=p, y=time_ms, col sep=comma] {./data/pos-sampling/stats-Geo-pos-sampling.csv};
        \addlegendentry{$\geom$ (\stats)};
        \addplot+[red, solid, thick, mark=triangle, mark options={solid, color=red}] table [x=p, y=time_ms, col sep=comma] {./data/pos-sampling/stats-Binomial-pos-sampling.csv};
        \addlegendentry{$\bbinom$ (\stats)}
        \end{axis}
    \end{tikzpicture}
    \caption{
      Position sampling efficiency in function of $p$.}
    \label{fig:uniform-sampling-positions}
\end{figure}

\medskip\noindent\textbf{End-to-end runtimes.}
We next compare the end-to-end runtimes of \iap methods for uniform sampling. 
We adopt the following naming scheme: for an index type $T \in \{\text{U(SR)}, \text{C(SR)}\}$ and uniform position sampling method $M \in \{\geom, \idxnaive\}$, let $\textsf{I}_T\textsf{-}{M}$ denote the \iap algorithm obtained by using $T$ and $M$ as index and position sampling method, respectively. 

Since $\geom$ outperforms $\idxnaive$ for $p \le 0.5$, we focus on
$\textsf{I}\textsf{-}{\geom}$ when $p \le 0.5$ and on
$\textsf{I}\textsf{-}{\idxnaive}$ otherwise.

Figure~\ref{fig:job-uniform-runtime-breakdown} shows that for \job, chained \iap is faster than the \syasampling baseline across all sampling probabilities $p$, whereas unchained \iap is consistently slower. The differences remain small: averaged over all queries, the relative speedup of chained \iap over \syasamplingC remains close to 1 for all $p$, reaching a maximum of $1.0056\times$ at $p=0.0001$. This is because the total runtime is dominated (82\%) by reading input relations from disk and base table filtering, which is the same for all methods. Moreover, as shown in Figure~\ref{fig:job-uniform-runtime-breakdown}, the remaining 18\% of the runtime is dominated by index construction, which is independent of the sampling probability. Hence the small difference between the three approaches across different values of $p$. Figure~\ref{fig:job-uniform-runtime-breakdown} also shows that chained sampling is faster than unchained  due to the dominating but faster index building phase. 

For \stats, the results in Figure~\ref{fig:statsceb-uniform-runtime-breakdown} show more pronounced differences. Chained \iap remains the fastest method up to $p \leq 0.8$ and achieves substantially larger speedups over the baseline than for \job, with a maximum speedup (averaged over all queries) of $38.79$x at $p=0.0001$.
Because \stats has larger full join sizes (see Table~\ref{tab:benchmarks}),
the overhead of materializing join tuples that are not part of the sampled output becomes more significant. 
This is confirmed in Figure~\ref{fig:statsceb-uniform-runtime-breakdown}, showing that position sampling and index probing (for $\iap$) as well as flattening all full join tuples (given an index) and performing Bernoulli trails (for $\mas$) contribute more to the total runtime compared to \job. Note that the index construction time is plotted as well, but so small that it is not visible. Moreover, chained $\iap$ is faster than the unchained variant due to the faster index probing phase. This is in contrast to \job, where chained $\iap$ was faster due to faster index construction.

For \stats, chained \iap becomes slightly slower than \syasamplingC for $p \ge 0.9$, which was not the case for \job.
To understand why, we note that the 
\syasamplingC baseline uses a flatten operation that is highly optimized for sequential memory access~\cite{DBLP:journals/pvldb/BekkersNVW25}. In contrast, the index-based methods cannot exploit such optimizations, as they must assume random access and incur additional overhead during $\access$
. 
These costs grow significant when the output is large---as in \stats, but not in \job---explaining why index-based methods become slower for \stats at high sampling probabilities while remaining faster for \job.

\begin{figure*}
    \centering
\begin{subfigure}{\textwidth}
    \begin{tikzpicture}
      \pgfplotstableread[col sep=comma]{./data/job-uniform-runtime-breakdown.csv}\datatable
      
      \begin{axis}[
          axis lines*=left, ymajorgrids,
          width=\linewidth, height=4cm,
          ymin=0,
          ybar stacked,
          bar width=5pt,
          xtick=data,
          xticklabels from table={\datatable}{method},
          xticklabel style={rotate=90,anchor=mid east,font=\footnotesize},
          ylabel={Avg.\ runtime (ms)},    
          legend style={
              at={(0.5,1)},
              anchor=south,
              draw=none,
          },
          legend columns=-1,
          /tikz/every even column/.append style={column sep=1.2em},
          draw group line={p}{0.0001}{$0.0001$}{-9ex}{4pt},
          draw group line={p}{0.001}{$0.001$}{-9ex}{4pt},
          draw group line={p}{0.01}{$0.01$}{-9ex}{4pt},
          draw group line={p}{0.1}{$0.1$}{-9ex}{4pt},
          draw group line={p}{0.2}{$0.2$}{-9ex}{4pt},
          draw group line={p}{0.3}{$0.3$}{-9ex}{4pt},
          draw group line={p}{0.4}{$0.4$}{-9ex}{4pt},
          draw group line={p}{0.5}{$0.5$}{-9ex}{4pt},
          draw group line={p}{0.6}{$0.6$}{-9ex}{4pt},
          draw group line={p}{0.7}{$0.7$}{-9ex}{4pt},
          draw group line={p}{0.8}{$0.8$}{-9ex}{4pt},
          draw group line={p}{0.9}{$0.9$}{-9ex}{4pt},
          draw group line={p}{0.99}{$0.99$}{-9ex}{4pt},
          draw group line={p}{1}{$1$}{-9ex}{4pt},
          after end axis/.append code={
              \path [anchor=base east, yshift=-10.5ex]
                  (rel axis cs:0,0) node  {$p$};
          }
      ]
      
      \addplot [
              fill=yellow!30,
              postaction={pattern=dots, pattern color=black}
          ] table [x=X, y=Index building] {\datatable};
          \addlegendentry{1. Index building}
          
          \addplot [
              fill=gray!25,
              postaction={pattern=grid, pattern color=black}
          ] table [x=X, y=Position sampling] {\datatable};
          \addlegendentry{2. Position sampling}
          
          \addplot [
              fill=green!30!white,
              postaction={pattern=horizontal lines, pattern color=black}
          ] table [x=X, y=Index probing] {\datatable};
          \addlegendentry{3. Index probing}
          
          \addplot [
              fill=red!25!white,
              postaction={pattern=north east lines, pattern color=black}
          ] table [x=X, y=Flatten+Bernoulli] {\datatable};
          \addlegendentry{4. Flatten+Bernoulli}
      
      \end{axis}
\end{tikzpicture}

    \caption{\job}
    \label{fig:job-uniform-runtime-breakdown}
\end{subfigure}

\medskip

\begin{subfigure}{\textwidth}
      \begin{tikzpicture}
        \pgfplotstableread[col sep=comma]{./data/stats-uniform-runtime-breakdown.csv}\datatable
        
        \begin{axis}[
            axis lines*=left, ymajorgrids,
            width=\linewidth, height=4cm,
            ymin=0,
            ybar stacked,
            bar width=5pt,
            xtick=data,
            xticklabels from table={\datatable}{method},
            xticklabel style={rotate=90,anchor=mid east,font=\footnotesize},
            ylabel={Avg.\ runtime (ms)},    
            legend style={
                at={(0.5,1)},
                anchor=south,
                draw=none,
            },
            legend columns=-1,
            /tikz/every even column/.append style={column sep=1.2em},
            draw group line={p}{0.0001}{$0.0001$}{-9ex}{4pt},
            draw group line={p}{0.001}{$0.001$}{-9ex}{4pt},
            draw group line={p}{0.01}{$0.01$}{-9ex}{4pt},
            draw group line={p}{0.1}{$0.1$}{-9ex}{4pt},
            draw group line={p}{0.2}{$0.2$}{-9ex}{4pt},
            draw group line={p}{0.3}{$0.3$}{-9ex}{4pt},
            draw group line={p}{0.4}{$0.4$}{-9ex}{4pt},
            draw group line={p}{0.5}{$0.5$}{-9ex}{4pt},
            draw group line={p}{0.6}{$0.6$}{-9ex}{4pt},
            draw group line={p}{0.7}{$0.7$}{-9ex}{4pt},
            draw group line={p}{0.8}{$0.8$}{-9ex}{4pt},
            draw group line={p}{0.9}{$0.9$}{-9ex}{4pt},
            draw group line={p}{0.99}{$0.99$}{-9ex}{4pt},
            draw group line={p}{1}{$1$}{-9ex}{4pt},
            after end axis/.append code={
                \path [anchor=base east, yshift=-10.5ex]
                    (rel axis cs:0,0) node  {$p$};
            }
        ]
        
        \addplot [
                fill=yellow!30,
                postaction={pattern=dots, pattern color=black}
            ] table [x=X, y=Index building] {\datatable};
            
            \addplot [
                fill=gray!25,
                postaction={pattern=grid, pattern color=black}
            ] table [x=X, y=Position sampling] {\datatable};
            
            \addplot [
                fill=green!30!white,
                postaction={pattern=horizontal lines, pattern color=black}
            ] table [x=X, y=Index probing] {\datatable};
            
            \addplot [
                fill=red!25!white,
                postaction={pattern=north east lines, pattern color=black}
            ] table [x=X, y=Flatten+Bernoulli] {\datatable};
        
        \end{axis}
\end{tikzpicture}

      \caption{\stats}
      \label{fig:statsceb-uniform-runtime-breakdown}
      \end{subfigure}
      \caption{A breakdown of the total runtime (excl.\ input reading and base table filtering) for each sampling probability $p$. Runtimes are averaged over all queries. For $\iap$ methods, the remaining execution time consists of (1) index building, (2) position sampling and (3) index probing. For $\syasamplingC$, the runtime is split into two phases: (1) index building, and (4) using the index to output the full join result (flatten) and performing a Bernoulli trial for each full join output tuple.}
    \end{figure*}

\smallskip\noindent\textbf{Conclusion.}
In summary, the choice of position sampling method should depend on the sampling probability. For small $p$, $\geom$ has clear advantages due to its lower expected number of iterations, whereas for large $p$, the simpler control flow of $\idxnaive$ results in better performance. When considering end-to-end runtimes,  chained \iap is preferable over unchained \iap, significantly outperforming full materialization for moderate and low sampling probabilities, confirming that avoiding full joins is particularly effective when only a fraction of the result is needed.

\subsection{Poisson Sampling}
\label{subsec:nonuniform-experiments}

We next turn to \iap methods for non-uniform Poisson sampling, adopting the following naming scheme: by $\textsf{I}_T\textsf{-}{M}$ we denote the \iap algorithm for Poisson sampling obtained by using $T \in  \{\text{U(SR)}, \text{C(SR)}\}$ and $M \in \{\pertuple{\geom}, \pertuple{\idxnaive}, \pertuple{\hybrid}\}$ as index and position sampling method, respectively. We omit the subscript when referring to both chained and unchained variants at the same time.
Note that the non-uniform position sampling methods require the probability attribute $y$ to appear as a flat attribute at the root of the shredded representation. We therefore rewrite the plans from Section~\ref{subsec:uniform-experiments} accordingly (see Proposition~\ref{thm:poisson-sampling-complexity}).

We observed in Section~\ref{subsec:uniform-experiments} that $\geom$ outperforms $\idxnaive$ when $p \le 0.5$. Based on this observation, we configure $\hybrid$ to use $\geom$ when $p \leq 0.5$, and $\idxnaive$ otherwise.

\begin{figure}
\includegraphics[width=.7\linewidth]{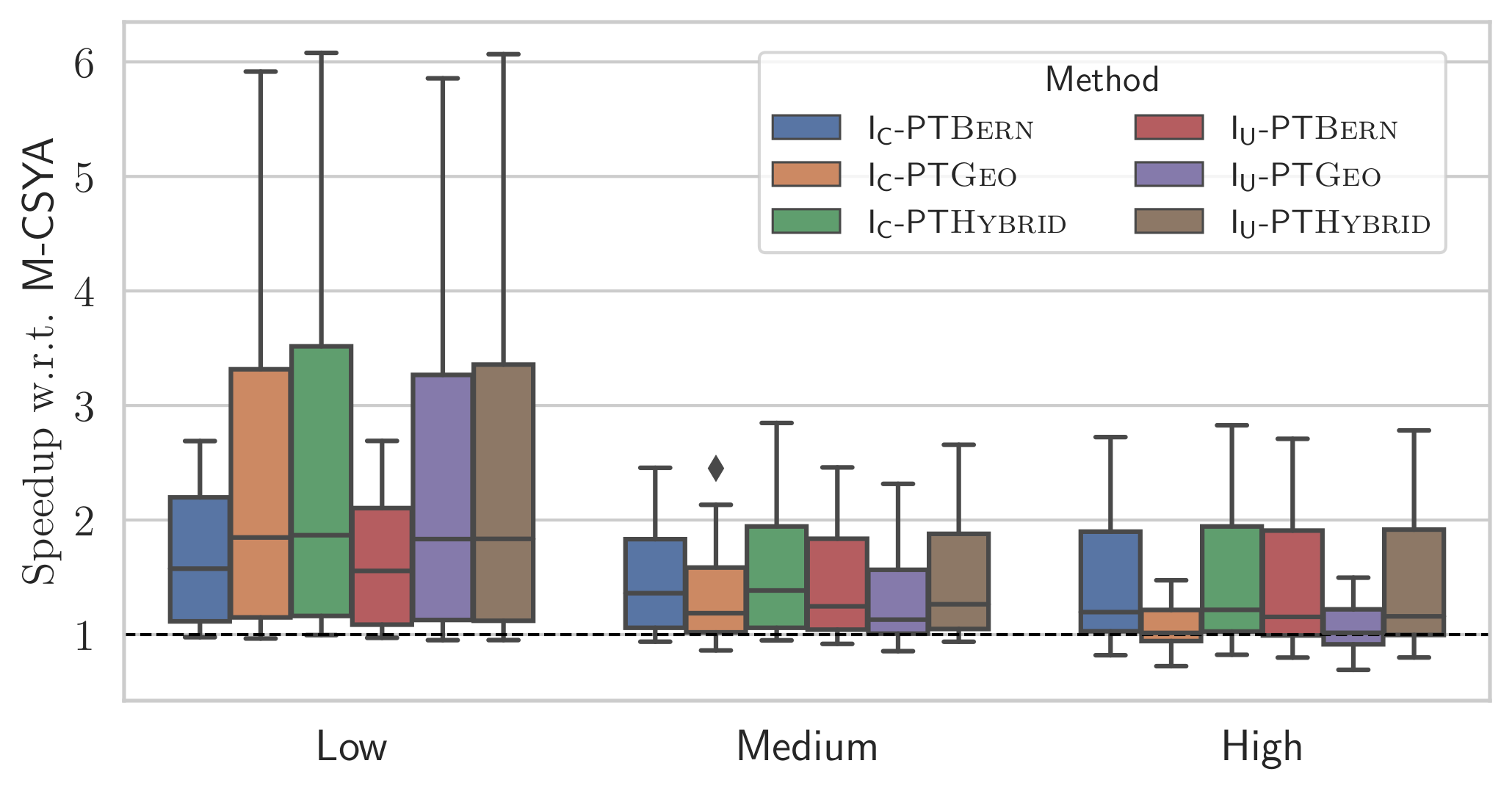}
\caption{Poisson sampling speedups w.r.t. \syasamplingC on 
  \stats for \low, \medium, and \high sampling probabilities.}
\label{fig:stats-nonuniform-speedups}
\vspace{-1ex}
\end{figure}

\smallskip\noindent\textbf{\stats.}
Figure~\ref{fig:stats-nonuniform-speedups} shows the relative end-to-end speedups obtained on \stats by the \iap methods  
compared to the \mas baseline \syasamplingC.
Overall, we observe that the chained 
variants perform slightly better than their unchained counterparts. 
For \medium and \high sampling probabilities, \nugeomsampling performs the worst, and only for \low probabilities it is competitive with \pertuplesampling. This aligns with the findings from Section~\ref{subsec:uniform-experiments}, where we observed that $\geom$ is slower than $\idxnaive$ (and hence $\hybrid$) when $p > 0.5$. 
\nunaivesampling becomes increasingly competitive with \pertuplesampling as the sampling probability grows, since both methods perform the same operations when $p>0.5$.

Overall, \pertuplesamplingC performs consistently best, when looking at the distribution of the observed speedups in both relative and absolute terms. These are (min/avg/max):
\begin{itemize}
    \item ($0.995$/$2.39$/$6.08$) x relative, and ($-4.7 \times 10^{-5}$/$0.4$/$5.54$) s absolute for \low, 
    \item ($0.95$/$1.54$/$2.85$) x  and  ($-0.0017$/$0.32$/$5.25$) s for \medium, and
    \item ($0.82$/$1.49$/$2.83$) x and  ($-0.007$/$0.28$/$5.72$) s for \high probabilities.
    \end{itemize}  
For comparison, \pertuplesamplingU \xspace achieves
(min/avg/max) speedups of
\begin{itemize}
    \item ($0.95$/$2.34$/$6.07$) x and ($-0.0006$/$0.40$/$5.53$) s for \low, 
    \item ($0.94$/$1.49$/$2.66$) x and ($-0.0021$/$0.29$/$5.12$) s for \medium, and
    \item ($0.80$/$1.45$/$2.78$) x and ($-0.8021$/$0.24$/$5.65$) s for \high probabilities.
    \end{itemize}
We conclude that on \stats, the chained and unchained variants are competitive, with chained slightly more robust w.r.t. absolute regressions for \high probabilities.

\smallskip\noindent\textbf{\job.}
\begin{table*}

\begin{subtable}{\linewidth}
  \centering
  \resizebox{.8\linewidth}{!}{%
  \begin{tabular}{@{}lrrrrrrrrrrrr@{}}
    \toprule
           & \multicolumn{3}{c}{min}                                       && \multicolumn{3}{c}{avg}                                  && \multicolumn{3}{c}{max}               \\  
           \cmidrule{2-4} \cmidrule{6-8}  \cmidrule{10-12} 
          $p$ & \geom           & \idxnaive          & \hybrid && \geom         & \idxnaive & \hybrid       && \geom  & \idxnaive         & \hybrid      \\ 
    \midrule
    \low    & \textbf{-13.0} & -13.3          & \multicolumn{1}{l}{-13.3}  && \textbf{7.3} & 6.4   & \multicolumn{1}{l}{7.2}          && 102.8 & 102.0         & \textbf{103.8}   \\
    \medium & -13.1          & \textbf{-8.2}  & \multicolumn{1}{l}{-12.6}  && 4.6          & 4.3   & \multicolumn{1}{l}{\textbf{4.8}} && 93.5  & \textbf{99.8} & 97.4             \\
    \high   & -28.4          & \textbf{-13.3} & \multicolumn{1}{l}{-14.6}  && 1.8          & 2.5   & \multicolumn{1}{l}{\textbf{2.6}} && 97.8  & 98.2          & \textbf{98.9}     \\ 
    \bottomrule
    \end{tabular}%
  }
  \caption{Chained}
  \label{tab:job-nonuniform-speedups-chained}
\end{subtable}

\begin{subtable}{\linewidth}
  \centering
  \resizebox{.8\linewidth}{!}{%
  \begin{tabular}{@{}lrrrrrrrrrrrr@{}}
    \toprule
          & \multicolumn{3}{c}{min}                         && \multicolumn{3}{c}{avg}                      && \multicolumn{3}{c}{max} \\  
           \cmidrule{2-4} \cmidrule{6-8}  \cmidrule{10-12} 
          $p$ & \geom    & \idxnaive   & \hybrid  && \geom   & \idxnaive  & \hybrid && \geom  & \idxnaive  & \hybrid  \\ 
    \midrule
    \low    & -2207.7 & -2218.3 & \multicolumn{1}{l}{-2222.2} && -108.7 & -109.5 & \multicolumn{1}{l}{-109.4} && 33.5  & 37.4   & 33.6    \\
    \medium & -2254.5 & -2253.1 & \multicolumn{1}{l}{-2258.1} && -112.1 & -112.0 & \multicolumn{1}{l}{-112.0} && 38.1  & 37.2   & 38.5    \\
    \high   & -2283.4 & -2274.2 & \multicolumn{1}{l}{-2276.9} && -115.3 & -114.3 & \multicolumn{1}{l}{-114.2} && 45.0  & 44.9   & 44.3    \\ 
    \bottomrule
    \end{tabular}%
  }
  \caption{Unchained}
  \label{tab:job-nonuniform-speedups-unchained}
\end{subtable}

\caption{Minimum, average, and maximum absolute Poisson sampling speedups (ms) on \job.}
\label{tab:job-nonuniform-speedups}

\end{table*}
Table~\ref{tab:job-nonuniform-speedups} reports on the distribution of the \emph{absolute} end-to-end time differences obtained on \job compared to the \syasamplingC baseline. Positive numbers are improvements; negative numbers are regressions. Best-performing method is highlighted in bold. 

We observe that chained methods behave more robust than their unchained counterparts: the largest regressions (in the min column) are significantly smaller for chained compared to unchained (100 times or more). Furthermore, on average (in the avg column) queries exhibit a slight chained speedup but still an unchained regression. Finally, the largest chained speedups are approximately three times that of the unchained speedups.

Overall, the speedups and regressions exhibited by chained methods are only
marginal in terms of end-to-end runtime---on the order of only a few ms on
average to a hundred ms at most, while total runtimes are in the range of
seconds. By contrast the maximum regression exhibited by the unchained methods
are on the same order (seconds), hence significant, while speedups are
outperformed by the chained methods.  

As far as the choice of  position sampling method is concerned we see that there is little absolute difference between the three methods for the same index type. While $\pertuplesampling$ is not always the best method in absolute terms, it is always competitive with the best method---differing only a few ms. This is consistent with our earlier observations.

\smallskip\noindent\textbf{\epiql.}
Figure~\ref{fig:epiql-total-runtimes} reports total runtimes on $\contactquery$ for varying population sizes, where each size corresponds to a subset of the Belgian population. Method \binsampling runs out of memory for the full population size. In contrast, the instance-optimality of \syafull allowed us to materialize the full join output for 11M people within available memory. For a population size of 1M individuals, \syasamplingC is $4.2$x (or $130.1$s) faster than \binsampling. 

Among \iap methods, the unchained variants are consistently outperformed by
their chained counterparts.  The best-performing \iap method across all
population sizes is \pertuplesamplingC, which achieves an end-to-end runtime
improvement of $415.6$s ($5.3$x) compared to the \syasamplingC baseline for 11M
individuals.
For comparison, \pertuplesamplingU \xspace is only $382.4$s ($3.9$x) faster than the baseline.
These results demonstrate that avoiding full join materialization
can be highly effective---especially in real-world Monte Carlo simulations where
such queries must be executed repeatedly and efficiency gains accumulate.

We next compare the runtimes of the position sampling methods in
isolation. $\hybrid$ is the most efficient, closely followed by $\geom$, while
$\idxnaive$ performs worst. 
For a population size of 11M individuals, position sampling using $\geom$ is 34
s (or $3.4$x) faster than $\idxnaive$, and $\hybrid$ is even 36 s (or $3.7$x)
faster than $\idxnaive$.  The explanation is twofold. First, the probability of
two people having contact is in practice very low. In particular, the average
sampling probability is only $2.4\%$.
Second, 
on 11M people the output of $\contactquery$ is $1.3 \times 10^{10}$. This large
full join size in combination with the small sampling probabilities accounts for
the observed differences between position sampling methods.

\begin{figure}
  \centering
\begin{tikzpicture}
  \begin{loglogaxis}
    [
      xlabel={Population size},
      ylabel={Total runtime (s)},
      legend style={
        at={(1.02,0.5)},
        anchor=west,
        draw=none,
      },
      clip=false,
    ]
    \addplot+[solid] table [x=popsize_int, y=duration(s), col sep=comma] {./data/epiql-nonuniform/BinaryJoin+Bernoulli_runtimes.csv};
    \addlegendentry{$\binsampling$};  
    \addplot+[solid] table [x=popsize_int, y=duration(s), col sep=comma] {./data/epiql-nonuniform/SYA+Flatten+Bernoulli_runtimes.csv};
    \addlegendentry{$\syasamplingC$};  
    \addplot+[solid] table [x=popsize_int, y=duration(s), col sep=comma] {./data/epiql-nonuniform/USYA+Flatten+Bernoulli_runtimes.csv};
    \addlegendentry{$\syasamplingU$};  
    \addplot+[solid] table [x=popsize_int, y=duration(s), col sep=comma] {./data/epiql-nonuniform/IdxGeom-Chained_runtimes.csv};
    \addlegendentry{$\geomsamplingC$};  
    \addplot+[solid] table [x=popsize_int, y=duration(s), col sep=comma] {./data/epiql-nonuniform/IdxGeom-Unchained_runtimes.csv};
    \addlegendentry{$\geomsamplingU$};  
    \addplot+[solid] table [x=popsize_int, y=duration(s), col sep=comma] {./data/epiql-nonuniform/IdxNaive-Chained_runtimes.csv};
    \addlegendentry{$\naivesamplingC$};  
    \addplot+[solid] table [x=popsize_int, y=duration(s), col sep=comma] {./data/epiql-nonuniform/IdxNaive-Unchained_runtimes.csv};
    \addlegendentry{$\naivesamplingU$};  
    \addplot+[solid] table [x=popsize_int, y=duration(s), col sep=comma] {./data/epiql-nonuniform/PerNestedTuple-Chained_runtimes.csv};
    \addlegendentry{$\pertuplesamplingC$};  
    \addplot+[solid] table [x=popsize_int, y=duration(s), col sep=comma] {./data/epiql-nonuniform/PerNestedTuple-Unchained_runtimes.csv};
    \addlegendentry{$\pertuplesamplingU$};  
  \end{loglogaxis}
\end{tikzpicture}
\caption{Total runtime on $\contactquery$.\vspace{-1ex}}
\label{fig:epiql-total-runtimes}
\end{figure}

\smallskip\noindent\textbf{Conclusion.}
Across all benchmarks, \pertuplesamplingC consistently achieves the best performance among the Poisson sampling methods even though its asymptotic complexity is worse than that of \pertuplesamplingU. 

\begin{table*}[t]
        \centering

            \setlength{\tabcolsep}{3pt}
            \resizebox{\linewidth}{!}{%
            \begin{tabular}{@{}lllllllllrrrrrrrr@{}} \toprule
                        &  \multicolumn{6}{c}{$d$}   & & \multicolumn{2}{c}{average}                       &           & \multicolumn{2}{c}{median}                    &          & \multicolumn{2}{c}{maximum}                   \\
                        \cmidrule{2-7}  \cmidrule{9-10}  \cmidrule{12-13}  \cmidrule{15-16}
                Benchmark             & min & 25\% & 50\% & 75\% & 95\% & max  & & chained                  & unchained              &           & chained               & unchained             &          & chained                   & unchained         \\
                \midrule
                \job (uniform)        & 1 & 1 & 2 & 4 & 12 & 342381          & & \textbf{4.30}            & 4.76                   &           & 0.54                  & \textbf{0.51}         &          & \textbf{168.28}           & 193.40            \\
                \job (poisson)        & 1 & 2 & 2 & 6 & 9 & 15              & & 6.53                     & \textbf{6.07}          &           & \textbf{0.46}         & \textbf{0.46}         &          & \textbf{207.13}           & 214.53            \\ 
                \stats (uniform)      & 1 & 3 & 5 & 7 & 21 & 55              & & \textbf{202.57}          & 226.31                 &           & \textbf{1.90}         & \textbf{1.90}              &          & \textbf{18866.44}         & 20584.97          \\
                \stats (poisson)      & 2 & 4 & 11 & 15 & 59 & 137           & & \textbf{129.91}          & 149.76                 &           & \textbf{1.90}          & 2.14                  &          & \textbf{4902.71}          & 6222.82           \\ 
                \hline
                $\contactquery$ (poisson)    &             \multicolumn{6}{c}{$d=50$}  & & \multicolumn{8}{c}{$\mathbf{4.4 \times 10^3}$ \textbf{(chained)} \quad $21.1 \times 10^3$ (unchained)} \\
                \bottomrule
                \end{tabular}%
            }
            \caption{Time in ms to probe a chained vs.\ unchained index.}
            \label{tab:probe-time}

        \end{table*}

        \begin{table*}[t]
            \centering

                \centering
                \resizebox{.56\linewidth}{!}{%
                \begin{tabular}{@{}llrrrr@{}}\toprule
                    Benchmark & Method        & min   & avg    & med     & max     \\ \midrule
                    \job      & chained SYA   & 18.83                & 1211.04              & 1270.75              & 3028.94              \\
                    \job      & unchained SYA & 19.28                & 1232.21              & 1279.84              & 3113.61              \\
                    \job      & binary join & 125.41          & 1304.88        & 1295.94        & 3164.28              \\
                    \stats    & chained SYA   & 1.286                & 212.59               & 14.87                & 5249.68              \\
                    \stats    & unchained SYA & 1.227                & 186.35               & 14.78                & 4728.56              \\
                    \stats    & binary join  & 1.551          & 1178.65        & 26.47          & 46024.57   \\
                    \bottomrule
                    \end{tabular}
                   }
                \caption{End-to-end full join runtimes (ms).}
                \label{tab:fulljoin-chained-unchained}

            \end{table*}

\subsection{Chained vs.\@ unchained}
\label{subsec:unchained-experiments}

\noindent\textbf{Index building.}
We next compare \usr and \csr in more depth.
We begin by comparing the index-building time. Recall that \usr construction requires two hashing passes, and is therefore expected to be slower than \csr construction. 
{\sloppy
This is confirmed in our experiments: for \job, the (min/avg/max) \csr build-time speedups compared to \usr are ($0.91$/$1.3$/$2.17$)x or ($-9.9$/$69$/$131.3$)ms; for \stats they are ($0.82$/$1.02$/$1.36$)x or ($-8.1$/$0.14$/$3.7$)ms. For $\contactquery$ with population size 11M, \csr building is $1.17$x ($12.8$s) faster.
\par}

\medskip\noindent\textbf{Index probing.}  Table~\ref{tab:probe-time} reports the
avg/med/max 
probe times for both index types,
along with a distribution of the largest join degree $d$ over the entire set of queries.
Uniform runtimes are aggregated
across all sampling probabilities. 
We observe that \csr probing  is often faster than \usr probing.  
This is because $d$ remains low for most of the queries, and for such $d$ the binary search is empirically slower than linear search. One \job query exhibits an extremely high max degree $d$ (=342381). However, its full join output size is extremely low (=10), hence there is not much to be probed.  

\medskip\noindent\textbf{Full join processing.}  We next analyze the end-to-end
runtimes of computing full joins using \usr-based \sya versus \csr-based \sya. Once the index is built, these
methods use the flatten operator (\flatten) to produce the join output instead
of probing the index. 
In Table~\ref{tab:fulljoin-chained-unchained} we observe that on \job, computing
full joins using \csr is faster due to the dominating but faster
index construction time. However, for \stats, computing full joins using \csr 
is actually slower. We observe that the flatten operation for \usr 
is on average slightly faster than for \csr. 
We observe the same for $\contactquery$; the unchained variant computes the full join $1.16$x faster ($60.7$s) than the chained variant.
\inFullVersion{A detailed analysis of the flatten operator is provided in the Supplementary Material. } \inConfVersion{A detailed analysis of the flatten operator is provided in the full paper version~\cite{full-version}.} 
When considering \emph{median}
runtimes of both the flatten operator and the end-to-end time, however, the
chained and unchained variants are competitive.

\smallskip\noindent\textbf{Chained or unchained?} When are unchained methods
preferable over the chained variants? To answer this question, we conducted an
additional experiment on synthetic data, focusing on the binary join
$\beta_p \left(S(x,y) \Join T(y,z)\right)$ with $T$ appearing in the right-hand side
  of the corresponding nested semijoin and $p \in [10^{-4},10^{-1},0.5,0.9]$. We consider a setup that allows us to modulate the degree $d$, as this controls  the theoretical complexity difference between \csr and \usr probing.  We
  consider three scenarios, fixing the join output size $O$ to be $10^5, 10^6$, and
  $10^7$, respectively.  Per fixed value of $O$ and $p$, the number of probed tuples, which is a rough measure for the amount of work that needs to be done, hence remains constant. For each output size $O = 10^o$, we consider all
  configurations of inputs $(S,T)$ such that $|S| = 10^s$ and
  $\deg_y(T) = 10^{o-s}$ for $1 \leq s < o$, ensuring that each $S$-tuple
  joins with $\deg_y(T)$ tuples in $T$ to get the desired output size. $S$ is populated such that each key value appears only once in $S$, and hence $O = |T|$. Because each tuple in $S$ joins with the same number of tuples in $T$, the maximum join degree $d$ equals $\deg_y(T)$. 
  Tuples in $T$ are randomly permuted to ensure that tuples with the same
  $y$-value are not consecutive in the chained representation. We use $\hybrid$ for position sampling. 
  We enable caching for both \csr and \usr as it consistently improves performance for both on this experiment.

   \inFullVersion{Detailed measurements for all configurations and probabilities are given in the Supplemental Material}\inConfVersion{Detailed measurements for all configurations and probabilities are given in the full paper version~\cite{full-version}}; here we summarize the main trends.
   We observe that for $O = |T| = 10^5$ the \csr \nxt \xspace
  array is small enough to fit in L2 CPU cache and for this reason, \csr
  probing (as well as total execution time) is faster than \usr probing for all $p$, even
  for large degree $d = 10^4$.
  When $O = |T|$ increases to $10^6$,  \csr
  probing becomes slower than  \usr probing for all $p$, even for low degree $d = 10^1$. 
  However, the total \csr runtime remains faster than \usr for
  all join degrees (up to $10^5$) due to the faster build time.
  When $O = |T|$ further increases to $10^7$, \csr probing remains slower than \usr probing but also the total runtime becomes slower for the majority of configurations. However, there remain  configurations and values of $p$ where  \csr's faster index construction time results in lower total runtime for \csr.

In conclusion, this synthetic experiment shows that there are cases where the theoretically faster \usr probe time is also experimentally faster, even when the degree is relatively small. However, this does not necessarily translate to a faster overall runtime, due to the interaction with (i) CPU caching effects; (ii) the index construction time; and (iii) the sampling probability. On our real-world benchmarks, chained \iap methods are more robust compared to their unchained counterparts. Since both variants are competitive for full join materialization we recommend the chained index as the preferred choice for both sampling and full join computation.

\section{Related Work}
\label{sec:related-work}

There is extensive
work on \emph{uniform sampling} over acyclic
joins~\cite{DBLP:conf/sigmod/ChaudhuriMN99,DBLP:conf/sigmod/AcharyaGPR99a,DBLP:conf/sigmod/ZhaoC0HY18,DBLP:journals/tods/CarmeliZBCKS22,DBLP:journals/sigmod/DaiHY25}, differing in the subclasses of acyclic joins that are supported, from binary
joins~\cite{DBLP:conf/sigmod/ChaudhuriMN99}, over foreign-key
joins~\cite{DBLP:conf/sigmod/AcharyaGPR99a} to arbitrary acyclic
joins~\cite{DBLP:conf/sigmod/ZhaoC0HY18,DBLP:journals/tods/CarmeliZBCKS22,DBLP:journals/sigmod/DaiHY25}
as well as the specific kind of uniform sampling considered---from arbitrary
uniform
sampling~\cite{DBLP:conf/sigmod/ChaudhuriMN99,DBLP:conf/sigmod/AcharyaGPR99a,DBLP:conf/sigmod/ZhaoC0HY18},
over fixed-size sampling~\cite{DBLP:journals/tods/CarmeliZBCKS22} to fixed-size
sampling in a streaming context~\cite{DBLP:journals/sigmod/DaiHY25}. We consider
arbitrary Poisson sampling without fixing a sample size.

Existing works also differ in whether sampling is performed with replacement~\cite{DBLP:conf/sigmod/ChaudhuriMN99,DBLP:conf/sigmod/AcharyaGPR99a,DBLP:conf/sigmod/ZhaoC0HY18} or, like our work, without replacement~\cite{DBLP:journals/tods/CarmeliZBCKS22,DBLP:journals/sigmod/DaiHY25}.
Any sampling-with-replacement algorithm can be adapted to sampling without replacement by rejecting duplicates; however, as the sample size grows, repeated resampling leads to increasing rejection overhead, making this approach inefficient compared to methods designed for sampling without replacement~\cite{DBLP:journals/tods/CarmeliZBCKS22}.
The fixed-size sampling approach taken by Carmeli et al~\cite{DBLP:journals/tods/CarmeliZBCKS22} is the closest to our work. However, we consider non-uniform sampling with a focus on implementation in column stores and show that from a non-asymptotic viewpoint the theoretically best method is not the practically most efficient.
Prior work on \emph{subset sampling}~\cite{10.1145/355744.355749,DBLP:journals/pvldb/ZhangJ023,huang2023subsetsamplingextensions}—where each element of a set has its own sampling probability—addresses a simpler setting, as it does not involve the challenge of avoiding materializing full joins.

In parallel to our own work, Esmailpour et al.~\cite{esmailpour2025subsetsamplingjoins} have recently also studied Poisson sampling over acyclic joins. Compared to our work, they focus on a theoretical perspective, also considering the setting where the sampling probability is not in a single relation, but computed from attributes from multiple relations. By contrast, we focus on the practical implementation in column stores, limiting the probability to come from attributes of a single relation.

\section{Conclusions}
\label{sec:conclusions}

We introduced the problem of Poisson sampling over acyclic joins, where each join tuple specifies the probability with which it is to be included in the sample. This setting generalizes classical uniform sampling. We presented a nearly instance-optimal algorithm based on the Index-and-Probe paradigm.
We have shown how to efficiently implement this approach in column stores by building upon the \sya algorithm from~\cite{DBLP:journals/pvldb/BekkersNVW25}. Additionally, we compared two different index structures, \csr and \usr. While \usr achieves the theoretically optimal access time, we found empirically that \csr offers the best end-to-end performance. Since \csr and \usr are competitive for computing full joins, it is possible to adopt \sya with \csr as a uniform basis for both sampling and join processing.

\inFullVersion{An interesting direction for future work is to extend Poisson sampling to \emph{cyclic} joins.
While uniform sampling over cyclic queries has received attention~\cite{DBLP:conf/sigmod/ZhaoC0HY18, DBLP:journals/sigmod/DaiHY25}, efficient Poisson sampling over cyclic queries remains unexplored. }

\begin{acks}
Liese Bekkers and Stijn Vansummeren were supported by the Bijzonder
Onderzoeksfonds (BOF) of Hasselt University (Belgium) under Grants
No. BOF22DOC07 and BOF20ZAP02. This research was further supported by  Research Foundation Flanders (FWO) under Grant No. G0B9623N.
\end{acks}

\bibliographystyle{ACM-Reference-Format}
\bibliography{references}

\inFullVersion{
\clearpage
\appendix
\section*{Supplementary Material for Section~\ref{sec:indexing}}

\subsection*{Optimized \csr random access}

\newcommand{\offsets}{\textit{offsets}}
\newcommand{\subpos}{\textit{subpos}}
\newcommand{\previ}{\textit{i$_{\mathrm{prev}}$}}
\newcommand{\prevj}{\textit{j$_{\mathrm{prev}}$}}

For a given nested relation $N$ and an offset $i$ into $\flatten(N)$, Figure~\ref{algo:csr-access} describes the procedure for accessing the tuple at offset $i$ directly from the \csr $\store{}$ of $N$. When accessing multiple offsets $\pos = [i_1, \ldots, i_k]$ in bulk, the procedure can be optimized by reusing computations from previous offsets, an optimization we refer to as \emph{caching optimization}. The procedure for accessing multiple offsets in bulk with caching optimization is shown in Figure~\ref{algo:csr-access-batched}. We next highlight the key differences compared to the original \csr access algorithm from Figure~\ref{algo:csr-access}.
\begin{compactenum}
    \item First, \csraccess \xspace now takes a list of offsets $\pos$ instead of a single offset $i$ [line 1].  
    \item Identify, \emph{for each offset $i$ in $\pos$}, the nested tuple that produces the tuple at offset $i$ in $\flatten(N)$, and the corresponding sub-offset within that nested tuple [lines 4-7]. Store these in the lists $\offsets$ and $\subpos$ respectively. Originally, this was done for a single offset $i$.
    \item Next, call the helper function  \csrsubaccess \xspace  with the lists $\offsets$ and $\subpos$ instead of a single offset $i$ within $\flatten(N)$ and offset $j$ within the nested tuple. [line 8] 
    \item Copy, for every $j \in \offsets$, the values of all flat attributes to the output result [line 11]. This was originally done for a single offset $j$. 
    \item For each nested attribute $Y$, instead of accessing a single tuple, we need to access \emph{all} output tuples by iterating over all pairs $(j, i)$ in $(\offsets, \subpos)$ [line 15]. Offsets $i'$ within the nested attribute $Y$ are computed as before, i.e.\ by means of the mixed-radix indexing schema. However, since we compute $i'$ for all output tuples in bulk, we can apply caching optimization: if the current offset $j$ is the same as the previous offset \prevj \xspace and the current sub-offset $i$ is not smaller than the previous sub-offset \previ, meaning that the tuple will be found later in the same linked list, we can resume the linked list traversal from where we left off in the previous iteration instead of starting from the head of the linked list [lines 16-20]. Otherwise, we start from the head of the linked list as before [lines 21-25].
\end{compactenum}

\begin{figure}[h]
    \tt \small
    \begin{algorithmic}[1]
        \Function{\csraccess}{$\store{}, X, \pos$}
            \State result $\gets \emptyset$ \Comment{output tuples}
            \State $\offsets \gets [\,]$; $\subpos \gets [\,]$ 
        
            \For{each $i \in \pos$}
                \State find smallest $j$ in 0..\card{\store{}(X)} s.t. $i <  \prefix{}[j]$
                \State append $j$ to $\offsets$
                \State append $i - \prefix[j-1]$ to $\subpos$ \Comment{assume $\prefix[-1]=0$}
            \EndFor
        
            \State \Call{\csrsubaccess}{$\store{}, X, \offsets, \subpos$}
            \State \Return result
        \EndFunction
        \smallskip
        
        \Function{\csrsubaccess}{$\store{}, X, \offsets, \subpos$}
            \State result $\gets$ result $\uplus 
                \{ a \mapsto \store{}(X).a[j] \mid a \in X,\, j \in \offsets \}$
            
            \For{each nested attribute $Y \in X$}
                \State $\offsets_Y \gets [\,]$; $\subpos_Y \gets [\,]$
                \State $\previ \gets -1$; $\prevj \gets -1$
                \For{each pair $(j, i) \in (\offsets, \subpos)$}
                    \If{$j == \prevj$ \textbf{and} $i \ge \previ$}
                        \State  \Comment{resume previous linked list iteration}
                        \State reuse $j', w$ from previous iteration
                        \State $i' \gets i \bmod w$ 
                        \State $i' \gets i' - $ sum\_weights 
                    \Else
                        \State  \Comment{start at beginning of linked list}
                        \State $(j', w) \gets \store{}(X)[j](\hol{Y}, \weight{Y})$ 
                        \State $i' \gets i \bmod w$
                        \State sum\_weights $\gets 0$
                    \EndIf

                    \State $i \gets i$ div $w$ \Comment{overwrite $i$ in $\subpos$}
                    \State curr\_weight $\gets$ weight\_of$(\store{}, Y, j')$
                    \While{$j' \ge 0$ \textbf{and} $i' \ge$ curr\_weight}
                        \State sum\_weights $\gets$ sum\_weights $+$ curr\_weight
                        \State $i' \gets i' -$ curr\_weight
                        \State $j' \gets \store{}(Y).\nxt[j']$
                        \State curr\_weight $\gets$ weight\_of$(\store{}, Y, j')$
                    \EndWhile
        
                    \State append $j'$ to $\offsets_Y$
                    \State append $i'$ to $\subpos_Y$
                    \State $\previ \gets i$; $\prevj \gets j$
                \EndFor
        
                \State \Call{\csrsubaccess}{$\store{}, Y, \offsets_Y, \subpos_Y$}
            \EndFor
        \EndFunction
        \end{algorithmic}
\caption{Optimized \csr random access.}
\label{algo:csr-access-batched}
\end{figure}

\subsection*{Optimized \usr random access}

\newcommand{\slices}{\textit{slices}}
\newcommand{\prevs}{\textit{s$_{\mathrm{prev}}$}}
\newcommand{\prevl}{\textit{l$_{\mathrm{prev}}$}}

The same caching optimization can be applied to \usr random access as well. The optimized variant of the procedure in Figure~\ref{algo:usr-access} is shown in Figure~\ref{algo:usr-access-batched}. The differences compared to the original \usr access algorithm from Figure~\ref{algo:usr-access} are as follows.

\begin{compactenum}
    \item \usraccess \xspace now takes a list of offsets $\pos$ instead of a single offset $i$ [line 1]. Similarly, the helper function \usrsubaccess \xspace now takes lists of offsets $\pos$ and, since each offsets may correspond to a different slice of the \prefix{} and \perm \xspace arrays, a list of pairs $\slices$ [line 7] where each pair $(s, l)$ indicates the start and length of the relevant portion of the \prefix{} and \perm \xspace arrays for that offset. 
    \item Next, for each offset $i$ in $\pos$ and its corresponding slice $(s, l)$ in $\slices$, we perform a binary search to find the position $j$ within the slice where the tuple at offset $i$ can be found [lines 10-22]. Since we do this for all $i \in \pos$ in bulk, it is possible to apply caching optimization as follows:
    \begin{compactenum}[a)]
        \item If two subsequent offsets require a binary search in the same slice, the search can be resumed from where the previous search left off [lines 10-14]. To facilitate this, we keep track of the previous offset $\previ$, its corresponding slice $(\prevs, \prevl)$ and binary search result $\prevj$ [lines 9,18,22]. Binary search is now performed in $\prefix[s+\prevj:l]$ instead of the full slice $\prefix[s:l]$.
        \item Otherwise, a new binary search is performed in the full slice $\prefix[s:l]$ [lines 15-17].
    \end{compactenum}
\end{compactenum}

\begin{figure}
    \tt \small
    \begin{algorithmic}[1]
\Function{\usraccess}{\store, \pos}\algorithmicdo
      \State result $\gets \emptyset$ \Comment{output tuples, initially empty}
      \State $\perm \gets [1, 2, ..., \card{\store{}(X)}]$
      \State $\slices \gets$ array of length $|\pos|$, each entry $(1,\, |\store{}(X)|)$
      \State \usrsubaccess(\store, X, pos, \slices, \store{}(X).\prefix{}, \perm)
      \State return result
      \EndFunction
      \smallskip

      \Function{\usrsubaccess}{\store, $X$, \pos, \slices, \prefix{}, \perm}\algorithmicdo

      \State $I \gets [\,]$; $J \gets [\,]$ 
      \State $\previ \gets -1$; $\prevs \gets -1$; $\prevl \gets -1$; $\prevj \gets -1$
      \For{each pair $(i,(s,l)) \in (\pos,\slices)$}
        \If{$s == \prevs$ and $l == \prevl$ and $i \ge \previ$}
            \State \Comment{Optimization: resume previous binary search}
            \State \Comment{search only in \prefix{}[(s+\prevj):l]}
            \State find smallest $j$ in $0..\len{\prefix{}[s:l]}$ s.t. $i < \prefix{}[s+j]$
        \Else
            \State \Comment{No optimization: binary search in original slice}
            \State find smallest $j$ in $0..\len{\prefix{}[s:l]}$ s.t. $i < \prefix{}[s+j]$
        \EndIf
        \State $\prevj \gets j$
        \State $i \gets i - \prefix{}[s+j-1]$ \Comment{assume \prefix{}[-1] = 0}
        \State $j \gets \perm[s+j]$
        \State append $i$ to $I$; append $j$ to $J$
        \State $\previ \gets i$; $\prevs \gets s$; $\prevl \gets l$
    \EndFor
      \State result = result $\uplus  \{ a \mapsto \store{}(X).a[j] \mid a \in X,\, j \in J \}$
      \For{each nested attr $Y \in X$}
       \State $\pos_Y \gets [\,]$; $\slices_Y \gets [\,]$
      \For{each pair $(i, j) \in (I, J)$}
        \State $w \gets $ weight\_of($\store, Y, j$)
        \State $i' \gets i \imod w$; append $i'$ to $\pos_Y$
        \State $i \gets  i \idiv w$ \Comment{overwrite $i$ in $I$}
        \State $(s, l) \gets \store{}(X)[j](\start\_{Y},\llen\_{Y})$
        \State append $(s, l)$ to $\slices_Y$
    \EndFor
      \State \usrsubaccess$(\store, Y, \pos_Y, \slices_Y, \store{}(Y).\prefix{}, \store{}(Y).\perm)$
      \EndFor
      \EndFunction
    \end{algorithmic}
\caption{Optimized \usr random access.}
\label{algo:usr-access-batched}
\end{figure}
 \section*{Supplementary Material for Section~\ref{sec:experiments}}
\subsection*{Speedups of \iap over \syasamplingC for uniform sampling}

Table~\ref{tab:uniform-total-runtimes} shows the end-to-end runtime speedups of \iap over \syasamplingC for the \job and \stats workloads when varying the sampling probability $p$ for uniform sampling. Relative and absolute (in ms) speedups are reported for both chained and unchained indexes.
\begin{table*}[]
\centering
\resizebox{\textwidth}{!}{%
\begin{tabular}{@{}llrrrrrrrrrrr@{}} \toprule
    &                               & \multicolumn{5}{c}{\textbf{\job}}                              && \multicolumn{5}{c}{\textbf{\stats}}                            \\
    \cmidrule{3-7} \cmidrule{9-13} 
    &                               & \multicolumn{2}{c}{chained speedups} && \multicolumn{2}{c}{unchained speedups} && \multicolumn{2}{c}{chained speedups} && \multicolumn{2}{c}{unchained speedups} \\
    \cmidrule{3-4} \cmidrule{6-7} \cmidrule{9-10} \cmidrule{12-13}
p      & method                        & Rel.               & Abs.            && Rel.               & Abs.               && Rel.              & Abs.             && Rel.                & Abs.             \\ \midrule
0,0001 & \geomsampling  & 1.0056                         & 6.76                        && 0.9884                         & -14.26                         && 38.7854                       & 457.79                       && 38.7063                         & 457.77                       \\
0,001  & \geomsampling  & 1.0051                         & 6.20                        && 0.9882                         & -14.49                         && 35.5493                       & 460.74                       && 35.3258                         & 460.66                       \\
0,01   & \geomsampling  & 1.0057                         & 6.86                        && 0.9884                         & -14.21                         && 19.9837                       & 471.33                       && 19.5483                         & 470.77                       \\
0,1    & \geomsampling  & 1.0045                         & 5.44                        && 0.9873                         & -15.57                         && 4.3527                        & 465.13                       && 4.1934                          & 459.86                       \\
0,2    & \geomsampling  & 1.0037                         & 4.48                        && 0.9866                         & -16.46                         && 2.4768                        & 412.45                       && 2.3871                          & 401.95                       \\
0,3    & \geomsampling  & 1.0035                         & 4.23                        && 0.9862                         & -16.95                         && 1.7905                        & 347.01                       && 1.7342                          & 332.75                       \\
0,4    & \geomsampling  & 1.0025                         & 3.08                        && 0.9855                         & -17.87                         && 1.4292                        & 266.14                       && 1.3894                          & 248.37                       \\
0,5    & \geomsampling  & 1.0021                         & 2.57                        && 0.9844                         & -19.20                         && 1.1962                        & 157.64                       && 1.1554                          & 129.26                       \\
0,6    & \naivesampling & 1.0013                         & 1.63                        && 0.9836                         & -20.18                         && 1.0773                        & 68.44                        && 1.0307                          & 28.40                        \\
0,7    & \naivesampling & 1.0010                         & 1.22                        && 0.9832                         & -20.69                         && 1.0506                        & 44.21                        && 0.9982                          & -1.65                        \\
0,8    & \naivesampling & 1.0011                         & 1.28                        && 0.9831                         & -20.86                         && 1.0221                        & 19.16                        && 0.9643                          & -32.78                       \\
0,9    & \naivesampling & 1.0003                         & 0.40                        && 0.9826                         & -21.47                         && 0.9943                        & -4.96                        && 0.9325                          & -62.41                       \\
0,99   & \naivesampling & 1.0005                         & 0.56                        && 0.9827                         & -21.38                         && 0.9742                        & -22.47                       && 0.9112                          & -82.69                       \\ 
\bottomrule
\end{tabular}
}
\caption{Average end-to-end runtime speedups, both relative and absolute (ms), compared to \syasamplingC for different values of $p$.}
\label{tab:uniform-total-runtimes}
\end{table*} %

\subsection*{Caching}

In this section we motivate why in Sections~\ref{subsec:uniform-experiments} and Section~\ref{subsec:nonuniform-experiments} we enabled caching for the chained index but disabled it for the unchained index. In particular, we compare the effect of caching on both indexes by running the benchmarks from Section~\ref{sec:experiments} with and without caching enabled.
For uniform sampling, $p$ varies from $0.01\%$ to $99\%$ where we use $\geomsampling$ when $p \le 0.5$ and $\naivesampling$ otherwise. For Poisson sampling, we use $\pertuplesampling$ with $\low$, $\medium$, and $\high$ sampling probabilities. Each experiment is repeated 5 times, and reported runtimes are averages over these 5 runs.
The results for \job and \stats are summarized in Table~\ref{tab:caching}. 
For the unchained index, caching increases runtime. While caching reduces the search space, it also introduces additional bookkeeping overhead. In our benchmarks, this overhead outweighs any potential benefit because the search spaces are already small. As shown in Table~\ref{tab:probe-time}, the maximum join degree $d$, and hence the size of the search space, remains low throughout all experiments, leaving little opportunity for caching to prune substantial portions of the search space.

This begs the question of why caching consistently improves performance for the chained index, even when the search spaces are small. The key difference lies in memory access behavior. Probing a chained index involves traversing linked lists whose elements may reside at non-contiguous locations in main memory, resulting in more cache misses and actual random memory access. In this setting, caching reduces the number of such costly accesses. In contrast, binary search in the unchained index operates over contiguous memory, which is already cache-friendly. Consequently, the marginal benefit of caching is minimal, while its overhead remains.

Caching on a chained index yields more significant performance improvements for the \stats benchmark compared to \job, and the differences become more pronounced as the sampling probability $p$ increases. For \job, the full join output sizes are relatively small (see Table~\ref{tab:benchmarks}), and the total runtime is therefore dominated by index construction rather than probing as shown in Figure~\ref{fig:job-uniform-runtime-breakdown}. As a result, caching optimization during the probing phase has only a limited impact on overall performance. In contrast, \stats exhibits much larger full join outputs, causing a greater fraction of the total runtime to be spent on index probing (see Figure~\ref{fig:statsceb-uniform-runtime-breakdown}). Consequently, improving the probe time by means of caching has a more substantial impact on the total runtime.

Running query $\contactquery$ on a population of $1.1 \times 10^7$ individuals is $492$ms faster (chained) and $572$ms slower (unchained) with caching optimization.

Based on these observations, we conclude that enabling caching for the chained index and disabling it for the unchained index results in the best possible runtimes in our benchmarks.
\begin{table*}[]
    \centering
\resizebox{\textwidth}{!}{%
    \begin{tabular}{@{}lrrrrrrrrrrrr@{}} \toprule
        &  & \multicolumn{5}{c}{\job}                                       &  & \multicolumn{5}{c}{\stats}                                         \\
        \cmidrule{3-7} \cmidrule{9-13}
        $p$&  & \chainedNoOpt & \chainedOpt    &  & \unchainedNoOpt       & \unchainedOpt &  & \chainedNoOpt & \chainedOpt    &  & \unchainedNoOpt       & \unchainedOpt     \\
        \midrule
0.0001  &  & 1.7644  & \textbf{1.7627} &  & \textbf{1.7914} & 1.7930        &  & 0.0114  & \textbf{0.0112} &  & 0.01142          & \textbf{0.01137} \\
0.001   &  & 1.7645  & \textbf{1.7623} &  & \textbf{1.7916} & 1.7939        &  & 0.0131  & \textbf{0.0125} &  & \textbf{0.01268} & 0.01270          \\
0.01    &  & 1.7665  & \textbf{1.7639} &  & \textbf{1.7926} & 1.7946        &  & 0.0294  & \textbf{0.0239} &  & \textbf{0.0245}  & 0.0247           \\
0.1     &  & 1.7677  & \textbf{1.7660} &  & \textbf{1.7931} & 1.7955        &  & 0.1920  & \textbf{0.1370} &  & \textbf{0.1423}  & 0.1445           \\
0.2     &  & 1.7699  & \textbf{1.7662} &  & \textbf{1.7945} & 1.7963        &  & 0.3880  & \textbf{0.2774} &  & \textbf{0.2872}  & 0.2920           \\
0.3     &  & 1.7721  & \textbf{1.7681} &  & \textbf{1.7981} & 1.7986        &  & 0.6035  & \textbf{0.4362} &  & \textbf{0.4496}  & 0.4571           \\
0.4     &  & 1.7751  & \textbf{1.7688} &  & \textbf{1.7977} & 1.7998        &  & 0.8381  & \textbf{0.6174} &  & \textbf{0.6330}  & 0.6432           \\
0.5     &  & 1.7758  & \textbf{1.7700} &  & \textbf{1.8005} & 1.8018        &  & 1.0770  & \textbf{0.7997} &  & \textbf{0.8219}  & 0.8342           \\
0.6     &  & 1.7782  & \textbf{1.7703} &  & \textbf{1.8005} & 1.8029        &  & 1.2068  & \textbf{0.8746} &  & \textbf{0.9113}  & 0.9272           \\
0.7     &  & 1.7804  & \textbf{1.7720} &  & \textbf{1.8027} & 1.8053        &  & 1.2523  & \textbf{0.8646} &  & \textbf{0.9047}  & 0.9224           \\
0.8     &  & 1.7825  & \textbf{1.7733} &  & \textbf{1.8025} & 1.8047        &  & 1.3044  & \textbf{0.8612} &  & \textbf{0.9040}  & 0.9242           \\
0.9     &  & 1.7855  & \textbf{1.7736} &  & \textbf{1.8040} & 1.8054        &  & 1.3623  & \textbf{0.8639} &  & \textbf{0.9095}  & 0.9329           \\
0.99    &  & 1.7909  & \textbf{1.7746} &  & \textbf{1.8053} & 1.8077        &  & 1.4185  & \textbf{0.8679} &  & \textbf{0.9141}  & 0.9396           \\
        &  &         & \textbf{}       &  & \textbf{}       &               &  &         & \textbf{}       &  & \textbf{}        &                  \\
\low    &  & 1.3380  & \textbf{1.3350} &  & \textbf{1.4540} & 1.4550        &  & 0.4110  & \textbf{0.1290} &  & \textbf{0.1330}  & 0.1340           \\
\medium &  & 1.3470  & \textbf{1.3420} &  & \textbf{1.4610} & 1.4620        &  & 2.1350  & \textbf{0.3700} &  & \textbf{0.3950}  & 0.3960           \\
\high   &  & 1.3480  & \textbf{1.3390} &  & \textbf{1.4590} & 1.4600        &  & 3.2050  & \textbf{0.4310} &  & \textbf{0.4650}  & 0.4660           \\
\bottomrule
\end{tabular}
}
\caption{Comparison of total runtimes (s) with and without caching optimization for both the chained and unchained index.}
    \label{tab:caching}
\end{table*}

\subsection*{Flatten}

\begin{figure*}
    \centering
    \begin{tikzpicture}
        \begin{loglogaxis}[
            xlabel={\syasamplingC (s)},
            ylabel={\syasamplingU (s)},
            legend pos=north west,
            xmin=1e-6, xmax=1e2,
            ymin=1e-6, ymax=1e2,
        ]       
        \addplot[
            only marks,
            mark=o,
            mark size=1pt,
            color=blue
        ]
        table [x={Yannakakis (chained)}, y={Yannakakis (unchained)}, col sep=comma]
        {./data/job-flatten.csv};
        \addlegendentry{\job};
        \addplot[
            only marks,
            mark=o,
            mark size=1pt,
            color=red
        ]
        table [x={Yannakakis (chained)}, y={Yannakakis (unchained)}, col sep=comma]
        {./data/statsceb-flatten.csv};
        \addplot [domain=1e-6:1e2, solid, color=gray, on layer=axis background] {x};
        \addlegendentry{\stats};
        \end{loglogaxis}
    \end{tikzpicture}
    \caption{Comparison of flatten times for chained and unchained \sya.}
    \label{fig:flattentimes}
\end{figure*}

Figure~\ref{fig:flattentimes} shows a log–log scatterplot comparing the flatten times for each query in \job and \stats. Each dot corresponds to a single query; dots below (above) the diagonal indicate that the unchained flatten is faster (slower) than the chained flatten for that query. On average, the unchained flatten is $26$ ms faster for \stats, while for \job the average difference is below $1$ ms. Since all points lie on or very close to the diagonal, we conclude that the chained and unchained flatten are competitive in practice.

\subsection*{Synthetic benchmark}

This appendix provides all figures for the synthetic benchmark detailed in Section~\ref{subsec:unchained-experiments}. The runtime breakdowns for $O = |T| = 10^5$, $10^6$, and $10^7$ are given in Figures~\ref{fig:synth-100k}, \ref{fig:synth-1000k}, and \ref{fig:synth-10000k} respectively. The labels on the x-axis refer to the three methods that we compare. From left to right: chained index with caching optimization, unchained index without caching optimization, and unchained index with caching optimization. For this synthetic benchmark, caching optimization has a significant effect on the performance of both indexes, in contrast to the other benchmarks (\job, \stats) where \usr caching has only a marginal (negative) effect (see Table~\ref{tab:caching}).

\begin{figure*}
    \centering
    \ref{namedlegend}
    \vspace{1em} 
\begin{subfigure}{.48\textwidth}

    \begin{tikzpicture}
      \pgfplotstableread[col sep=comma]{./data/synthetic-paper-data/s_10_d_10000.csv}\datatable
      
      \begin{axis}[
          axis lines*=left, ymajorgrids,
          width=\linewidth, height=4cm,
          ymin=0,
          ybar stacked,
          bar width=7pt,
          xtick=data,
          xticklabels from table={\datatable}{method},
          xticklabel style={rotate=90,anchor=mid east,font=\scriptsize},
          ylabel={Avg.\ runtime (ms)},    
          legend to name=namedlegend,
          legend columns=-1,
          legend style={draw=none, /tikz/every even column/.append style={column sep=1.2em}},
          draw group line={p}{0.0001}{$0.0001$}{-12ex}{4pt},
          draw group line={p}{0.001}{$0.001$}{-12ex}{4pt},
          draw group line={p}{0.01}{$0.01$}{-12ex}{4pt},
          draw group line={p}{0.1}{$0.1$}{-12ex}{4pt},
          draw group line={p}{0.2}{$0.2$}{-12ex}{4pt},
          draw group line={p}{0.3}{$0.3$}{-12ex}{4pt},
          draw group line={p}{0.4}{$0.4$}{-12ex}{4pt},
          draw group line={p}{0.5}{$0.5$}{-12ex}{4pt},
          draw group line={p}{0.6}{$0.6$}{-12ex}{4pt},
          draw group line={p}{0.7}{$0.7$}{-12ex}{4pt},
          draw group line={p}{0.8}{$0.8$}{-12ex}{4pt},
          draw group line={p}{0.9}{$0.9$}{-12ex}{4pt},
          draw group line={p}{0.99}{$0.99$}{-12ex}{4pt},
          draw group line={p}{1}{$1$}{-12ex}{4pt},
          after end axis/.append code={
              \path [anchor=base east, yshift=-13.5ex]
                  (rel axis cs:0,0) node  {$p$};
          }
      ]
        \addplot [
            fill=green!30!white,
            postaction={pattern=horizontal lines, pattern color=black}
        ] table [x=X, y=Index probing] {\datatable};
      
        \addplot [
            fill=gray!25,
            postaction={pattern=grid, pattern color=black}
        ] table [x=X, y=Position sampling] {\datatable};
    
        \addplot [
            fill=yellow!30,
            postaction={pattern=dots, pattern color=black}
        ] table [x=X, y=Index building] {\datatable};

          \addlegendentry{Index probing}
          \addlegendentry{Position sampling}
          \addlegendentry{Index building}
      
      \end{axis}
\end{tikzpicture}

    \caption{$|S| = 10^1$, $\deg_y(T) = 10^4$}
\end{subfigure}
\hfill
\begin{subfigure}{.48\textwidth}

    \begin{tikzpicture}
      \pgfplotstableread[col sep=comma]{./data/synthetic-paper-data/s_100_d_1000.csv}\datatable
      
      \begin{axis}[
          axis lines*=left, ymajorgrids,
          width=\linewidth, height=4cm,
          ymin=0,
          ybar stacked,
          bar width=7pt,
          xtick=data,
          xticklabels from table={\datatable}{method},
          xticklabel style={rotate=90,anchor=mid east,font=\scriptsize},
          ylabel={Avg.\ runtime (ms)},    
          legend style={
              at={(0.5,1)},
              anchor=south,
              draw=none,
          },
          legend columns=-1,
          /tikz/every even column/.append style={column sep=1.2em},
          draw group line={p}{0.0001}{$0.0001$}{-12ex}{4pt},
          draw group line={p}{0.001}{$0.001$}{-12ex}{4pt},
          draw group line={p}{0.01}{$0.01$}{-12ex}{4pt},
          draw group line={p}{0.1}{$0.1$}{-12ex}{4pt},
          draw group line={p}{0.2}{$0.2$}{-12ex}{4pt},
          draw group line={p}{0.3}{$0.3$}{-12ex}{4pt},
          draw group line={p}{0.4}{$0.4$}{-12ex}{4pt},
          draw group line={p}{0.5}{$0.5$}{-12ex}{4pt},
          draw group line={p}{0.6}{$0.6$}{-12ex}{4pt},
          draw group line={p}{0.7}{$0.7$}{-12ex}{4pt},
          draw group line={p}{0.8}{$0.8$}{-12ex}{4pt},
          draw group line={p}{0.9}{$0.9$}{-12ex}{4pt},
          draw group line={p}{0.99}{$0.99$}{-12ex}{4pt},
          draw group line={p}{1}{$1$}{-12ex}{4pt},
          after end axis/.append code={
              \path [anchor=base east, yshift=-13.5ex]
                  (rel axis cs:0,0) node  {$p$};
          }
      ]
      
      \addplot [
        fill=green!30!white,
        postaction={pattern=horizontal lines, pattern color=black}
    ] table [x=X, y=Index probing] {\datatable};
  
    \addplot [
        fill=gray!25,
        postaction={pattern=grid, pattern color=black}
    ] table [x=X, y=Position sampling] {\datatable};

    \addplot [
        fill=yellow!30,
        postaction={pattern=dots, pattern color=black}
    ] table [x=X, y=Index building] {\datatable};

      \end{axis}
\end{tikzpicture}

\caption{$|S| = 10^2$, $\deg_y(T) = 10^3$}
\end{subfigure}

\medskip

\begin{subfigure}{.48\textwidth}

    \begin{tikzpicture}
      \pgfplotstableread[col sep=comma]{./data/synthetic-paper-data/s_1000_d_100.csv}\datatable
      
      \begin{axis}[
          axis lines*=left, ymajorgrids,
          width=\linewidth, height=4cm,
          ymin=0,
          ybar stacked,
          bar width=7pt,
          xtick=data,
          xticklabels from table={\datatable}{method},
          xticklabel style={rotate=90,anchor=mid east,font=\scriptsize},
          ylabel={Avg.\ runtime (ms)},    
          legend style={
              at={(0.5,1)},
              anchor=south,
              draw=none,
          },
          legend columns=-1,
          /tikz/every even column/.append style={column sep=1.2em},
          draw group line={p}{0.0001}{$0.0001$}{-12ex}{4pt},
          draw group line={p}{0.001}{$0.001$}{-12ex}{4pt},
          draw group line={p}{0.01}{$0.01$}{-12ex}{4pt},
          draw group line={p}{0.1}{$0.1$}{-12ex}{4pt},
          draw group line={p}{0.2}{$0.2$}{-12ex}{4pt},
          draw group line={p}{0.3}{$0.3$}{-12ex}{4pt},
          draw group line={p}{0.4}{$0.4$}{-12ex}{4pt},
          draw group line={p}{0.5}{$0.5$}{-12ex}{4pt},
          draw group line={p}{0.6}{$0.6$}{-12ex}{4pt},
          draw group line={p}{0.7}{$0.7$}{-12ex}{4pt},
          draw group line={p}{0.8}{$0.8$}{-12ex}{4pt},
          draw group line={p}{0.9}{$0.9$}{-12ex}{4pt},
          draw group line={p}{0.99}{$0.99$}{-12ex}{4pt},
          draw group line={p}{1}{$1$}{-12ex}{4pt},
          after end axis/.append code={
              \path [anchor=base east, yshift=-13.5ex]
                  (rel axis cs:0,0) node  {$p$};
          }
      ]
      
      \addplot [
        fill=green!30!white,
        postaction={pattern=horizontal lines, pattern color=black}
    ] table [x=X, y=Index probing] {\datatable};
  
    \addplot [
        fill=gray!25,
        postaction={pattern=grid, pattern color=black}
    ] table [x=X, y=Position sampling] {\datatable};

    \addplot [
        fill=yellow!30,
        postaction={pattern=dots, pattern color=black}
    ] table [x=X, y=Index building] {\datatable};

      \end{axis}
\end{tikzpicture}

\caption{$|S| = 10^3$, $\deg_y(T) = 10^2$}
\end{subfigure}
\hfill
\begin{subfigure}{.48\textwidth}

    \begin{tikzpicture}
      \pgfplotstableread[col sep=comma]{./data/synthetic-paper-data/s_10000_d_10.csv}\datatable
      
      \begin{axis}[
          axis lines*=left, ymajorgrids,
          width=\linewidth, height=4cm,
          ymin=0,
          ybar stacked,
          bar width=7pt,
          xtick=data,
          xticklabels from table={\datatable}{method},
          xticklabel style={rotate=90,anchor=mid east,font=\scriptsize},
          ylabel={Avg.\ runtime (ms)},    
          legend style={
              at={(0.5,1)},
              anchor=south,
              draw=none,
          },
          legend columns=-1,
          /tikz/every even column/.append style={column sep=1.2em},
          draw group line={p}{0.0001}{$0.0001$}{-12ex}{4pt},
          draw group line={p}{0.001}{$0.001$}{-12ex}{4pt},
          draw group line={p}{0.01}{$0.01$}{-12ex}{4pt},
          draw group line={p}{0.1}{$0.1$}{-12ex}{4pt},
          draw group line={p}{0.2}{$0.2$}{-12ex}{4pt},
          draw group line={p}{0.3}{$0.3$}{-12ex}{4pt},
          draw group line={p}{0.4}{$0.4$}{-12ex}{4pt},
          draw group line={p}{0.5}{$0.5$}{-12ex}{4pt},
          draw group line={p}{0.6}{$0.6$}{-12ex}{4pt},
          draw group line={p}{0.7}{$0.7$}{-12ex}{4pt},
          draw group line={p}{0.8}{$0.8$}{-12ex}{4pt},
          draw group line={p}{0.9}{$0.9$}{-12ex}{4pt},
          draw group line={p}{0.99}{$0.99$}{-12ex}{4pt},
          draw group line={p}{1}{$1$}{-12ex}{4pt},
          after end axis/.append code={
              \path [anchor=base east, yshift=-13.5ex]
                  (rel axis cs:0,0) node  {$p$};
          }
      ]
      \addplot [
        fill=green!30!white,
        postaction={pattern=horizontal lines, pattern color=black}
    ] table [x=X, y=Index probing] {\datatable};
  
    \addplot [
        fill=gray!25,
        postaction={pattern=grid, pattern color=black}
    ] table [x=X, y=Position sampling] {\datatable};

    \addplot [
        fill=yellow!30,
        postaction={pattern=dots, pattern color=black}
    ] table [x=X, y=Index building] {\datatable};

      \end{axis}
\end{tikzpicture}

\caption{$|S| = 10^4$, $\deg_y(T) = 10^1$}
\end{subfigure}

      \caption{$O = |T| = 10^5$}
      \label{fig:synth-100k}
    \end{figure*}

    \begin{figure*}
        \centering
        \ref{namedlegend}
    \begin{subfigure}{.48\textwidth}
    
        \begin{tikzpicture}
          \pgfplotstableread[col sep=comma]{./data/synthetic-paper-data/s_10_d_100000.csv}\datatable
          
          \begin{axis}[
              axis lines*=left, ymajorgrids,
              width=\linewidth, height=4cm,
              ymin=0,
              ybar stacked,
              bar width=7pt,
              xtick=data,
              xticklabels from table={\datatable}{method},
              xticklabel style={rotate=90,anchor=mid east,font=\scriptsize},
              ylabel={Avg.\ runtime (ms)},    
              legend to name=namedlegend,
              legend columns=-1,
              legend style={draw=none, /tikz/every even column/.append style={column sep=1.2em}},
              /tikz/every even column/.append style={column sep=1.2em},
              draw group line={p}{0.0001}{$0.0001$}{-12ex}{4pt},
              draw group line={p}{0.001}{$0.001$}{-12ex}{4pt},
              draw group line={p}{0.01}{$0.01$}{-12ex}{4pt},
              draw group line={p}{0.1}{$0.1$}{-12ex}{4pt},
              draw group line={p}{0.2}{$0.2$}{-12ex}{4pt},
              draw group line={p}{0.3}{$0.3$}{-12ex}{4pt},
              draw group line={p}{0.4}{$0.4$}{-12ex}{4pt},
              draw group line={p}{0.5}{$0.5$}{-12ex}{4pt},
              draw group line={p}{0.6}{$0.6$}{-12ex}{4pt},
              draw group line={p}{0.7}{$0.7$}{-12ex}{4pt},
              draw group line={p}{0.8}{$0.8$}{-12ex}{4pt},
              draw group line={p}{0.9}{$0.9$}{-12ex}{4pt},
              draw group line={p}{0.99}{$0.99$}{-12ex}{4pt},
              draw group line={p}{1}{$1$}{-12ex}{4pt},
              after end axis/.append code={
                  \path [anchor=base east, yshift=-13.5ex]
                      (rel axis cs:0,0) node  {$p$};
              }
          ]
            \addplot [
                fill=green!30!white,
                postaction={pattern=horizontal lines, pattern color=black}
            ] table [x=X, y=Index probing] {\datatable};
          
            \addplot [
                fill=gray!25,
                postaction={pattern=grid, pattern color=black}
            ] table [x=X, y=Position sampling] {\datatable};
        
            \addplot [
                fill=yellow!30,
                postaction={pattern=dots, pattern color=black}
            ] table [x=X, y=Index building] {\datatable};
    
              \addlegendentry{Index probing}
              \addlegendentry{Position sampling}
              \addlegendentry{Index building}
          
          \end{axis}
    \end{tikzpicture}
    
        \caption{$|S| = 10^1$, $\deg_y(T) = 10^5$}
    \end{subfigure} 

    \bigskip

    \begin{subfigure}{.48\textwidth}
    
        \begin{tikzpicture}
          \pgfplotstableread[col sep=comma]{./data/synthetic-paper-data/s_100_d_10000.csv}\datatable
          
          \begin{axis}[
              axis lines*=left, ymajorgrids,
              width=\linewidth, height=4cm,
              ymin=0,
              ybar stacked,
              bar width=7pt,
              xtick=data,
              xticklabels from table={\datatable}{method},
              xticklabel style={rotate=90,anchor=mid east,font=\scriptsize},
              ylabel={Avg.\ runtime (ms)},    
              legend style={
                  at={(0.5,1)},
                  anchor=south,
                  draw=none,
              },
              legend columns=-1,
              /tikz/every even column/.append style={column sep=1.2em},
              draw group line={p}{0.0001}{$0.0001$}{-12ex}{4pt},
              draw group line={p}{0.001}{$0.001$}{-12ex}{4pt},
              draw group line={p}{0.01}{$0.01$}{-12ex}{4pt},
              draw group line={p}{0.1}{$0.1$}{-12ex}{4pt},
              draw group line={p}{0.2}{$0.2$}{-12ex}{4pt},
              draw group line={p}{0.3}{$0.3$}{-12ex}{4pt},
              draw group line={p}{0.4}{$0.4$}{-12ex}{4pt},
              draw group line={p}{0.5}{$0.5$}{-12ex}{4pt},
              draw group line={p}{0.6}{$0.6$}{-12ex}{4pt},
              draw group line={p}{0.7}{$0.7$}{-12ex}{4pt},
              draw group line={p}{0.8}{$0.8$}{-12ex}{4pt},
              draw group line={p}{0.9}{$0.9$}{-12ex}{4pt},
              draw group line={p}{0.99}{$0.99$}{-12ex}{4pt},
              draw group line={p}{1}{$1$}{-12ex}{4pt},
              after end axis/.append code={
                  \path [anchor=base east, yshift=-13.5ex]
                      (rel axis cs:0,0) node  {$p$};
              }
          ]
          
          \addplot [
            fill=green!30!white,
            postaction={pattern=horizontal lines, pattern color=black}
        ] table [x=X, y=Index probing] {\datatable};
      
        \addplot [
            fill=gray!25,
            postaction={pattern=grid, pattern color=black}
        ] table [x=X, y=Position sampling] {\datatable};
    
        \addplot [
            fill=yellow!30,
            postaction={pattern=dots, pattern color=black}
        ] table [x=X, y=Index building] {\datatable};

          \end{axis}
    \end{tikzpicture}
    
    \caption{$|S| = 10^2$, $\deg_y(T) = 10^4$}
    \end{subfigure}
    \hfill 
    \begin{subfigure}{.48\textwidth}
    
        \begin{tikzpicture}
          \pgfplotstableread[col sep=comma]{./data/synthetic-paper-data/s_1000_d_1000.csv}\datatable
          
          \begin{axis}[
              axis lines*=left, ymajorgrids,
              width=\linewidth, height=4cm,
              ymin=0,
              ybar stacked,
              bar width=7pt,
              xtick=data,
              xticklabels from table={\datatable}{method},
              xticklabel style={rotate=90,anchor=mid east,font=\scriptsize},
              ylabel={Avg.\ runtime (ms)},    
              legend style={
                  at={(0.5,1)},
                  anchor=south,
                  draw=none,
              },
              legend columns=-1,
              /tikz/every even column/.append style={column sep=1.2em},
              draw group line={p}{0.0001}{$0.0001$}{-12ex}{4pt},
              draw group line={p}{0.001}{$0.001$}{-12ex}{4pt},
              draw group line={p}{0.01}{$0.01$}{-12ex}{4pt},
              draw group line={p}{0.1}{$0.1$}{-12ex}{4pt},
              draw group line={p}{0.2}{$0.2$}{-12ex}{4pt},
              draw group line={p}{0.3}{$0.3$}{-12ex}{4pt},
              draw group line={p}{0.4}{$0.4$}{-12ex}{4pt},
              draw group line={p}{0.5}{$0.5$}{-12ex}{4pt},
              draw group line={p}{0.6}{$0.6$}{-12ex}{4pt},
              draw group line={p}{0.7}{$0.7$}{-12ex}{4pt},
              draw group line={p}{0.8}{$0.8$}{-12ex}{4pt},
              draw group line={p}{0.9}{$0.9$}{-12ex}{4pt},
              draw group line={p}{0.99}{$0.99$}{-12ex}{4pt},
              draw group line={p}{1}{$1$}{-12ex}{4pt},
              after end axis/.append code={
                  \path [anchor=base east, yshift=-13.5ex]
                      (rel axis cs:0,0) node  {$p$};
              }
          ]
          
          \addplot [
            fill=green!30!white,
            postaction={pattern=horizontal lines, pattern color=black}
        ] table [x=X, y=Index probing] {\datatable};
      
        \addplot [
            fill=gray!25,
            postaction={pattern=grid, pattern color=black}
        ] table [x=X, y=Position sampling] {\datatable};
    
        \addplot [
            fill=yellow!30,
            postaction={pattern=dots, pattern color=black}
        ] table [x=X, y=Index building] {\datatable};

          \end{axis}
    \end{tikzpicture}
    
    \caption{$|S| = 10^3$, $\deg_y(T) = 10^3$}
    \end{subfigure}

    \bigskip

    \begin{subfigure}{.48\textwidth}
    
        \begin{tikzpicture}
          \pgfplotstableread[col sep=comma]{./data/synthetic-paper-data/s_10000_d_100.csv}\datatable
          
          \begin{axis}[
              axis lines*=left, ymajorgrids,
              width=\linewidth, height=4cm,
              ymin=0,
              ybar stacked,
              bar width=7pt,
              xtick=data,
              xticklabels from table={\datatable}{method},
              xticklabel style={rotate=90,anchor=mid east,font=\scriptsize},
              ylabel={Avg.\ runtime (ms)},    
              legend style={
                  at={(0.5,1)},
                  anchor=south,
                  draw=none,
              },
              legend columns=-1,
              /tikz/every even column/.append style={column sep=1.2em},
              draw group line={p}{0.0001}{$0.0001$}{-12ex}{4pt},
              draw group line={p}{0.001}{$0.001$}{-12ex}{4pt},
              draw group line={p}{0.01}{$0.01$}{-12ex}{4pt},
              draw group line={p}{0.1}{$0.1$}{-12ex}{4pt},
              draw group line={p}{0.2}{$0.2$}{-12ex}{4pt},
              draw group line={p}{0.3}{$0.3$}{-12ex}{4pt},
              draw group line={p}{0.4}{$0.4$}{-12ex}{4pt},
              draw group line={p}{0.5}{$0.5$}{-12ex}{4pt},
              draw group line={p}{0.6}{$0.6$}{-12ex}{4pt},
              draw group line={p}{0.7}{$0.7$}{-12ex}{4pt},
              draw group line={p}{0.8}{$0.8$}{-12ex}{4pt},
              draw group line={p}{0.9}{$0.9$}{-12ex}{4pt},
              draw group line={p}{0.99}{$0.99$}{-12ex}{4pt},
              draw group line={p}{1}{$1$}{-12ex}{4pt},
              after end axis/.append code={
                  \path [anchor=base east, yshift=-13.5ex]
                      (rel axis cs:0,0) node  {$p$};
              }
          ]
          \addplot [
            fill=green!30!white,
            postaction={pattern=horizontal lines, pattern color=black}
        ] table [x=X, y=Index probing] {\datatable};
      
        \addplot [
            fill=gray!25,
            postaction={pattern=grid, pattern color=black}
        ] table [x=X, y=Position sampling] {\datatable};
    
        \addplot [
            fill=yellow!30,
            postaction={pattern=dots, pattern color=black}
        ] table [x=X, y=Index building] {\datatable};

          \end{axis}
    \end{tikzpicture}
    
    \caption{$|S| = 10^4$, $\deg_y(T) = 10^2$}
    \end{subfigure}
    \hfill
    \begin{subfigure}{.48\textwidth}
    
        \begin{tikzpicture}
          \pgfplotstableread[col sep=comma]{./data/synthetic-paper-data/s_100000_d_10.csv}\datatable
          
          \begin{axis}[
              axis lines*=left, ymajorgrids,
              width=\linewidth, height=4cm,
              ymin=0,
              ybar stacked,
              bar width=7pt,
              xtick=data,
              xticklabels from table={\datatable}{method},
              xticklabel style={rotate=90,anchor=mid east,font=\scriptsize},
              ylabel={Avg.\ runtime (ms)},    
              legend style={
                  at={(0.5,1)},
                  anchor=south,
                  draw=none,
              },
              legend columns=-1,
              /tikz/every even column/.append style={column sep=1.2em},
              draw group line={p}{0.0001}{$0.0001$}{-12ex}{4pt},
              draw group line={p}{0.001}{$0.001$}{-12ex}{4pt},
              draw group line={p}{0.01}{$0.01$}{-12ex}{4pt},
              draw group line={p}{0.1}{$0.1$}{-12ex}{4pt},
              draw group line={p}{0.2}{$0.2$}{-12ex}{4pt},
              draw group line={p}{0.3}{$0.3$}{-12ex}{4pt},
              draw group line={p}{0.4}{$0.4$}{-12ex}{4pt},
              draw group line={p}{0.5}{$0.5$}{-12ex}{4pt},
              draw group line={p}{0.6}{$0.6$}{-12ex}{4pt},
              draw group line={p}{0.7}{$0.7$}{-12ex}{4pt},
              draw group line={p}{0.8}{$0.8$}{-12ex}{4pt},
              draw group line={p}{0.9}{$0.9$}{-12ex}{4pt},
              draw group line={p}{0.99}{$0.99$}{-12ex}{4pt},
              draw group line={p}{1}{$1$}{-12ex}{4pt},
              after end axis/.append code={
                  \path [anchor=base east, yshift=-13.5ex]
                      (rel axis cs:0,0) node  {$p$};
              }
          ]
          \addplot [
            fill=green!30!white,
            postaction={pattern=horizontal lines, pattern color=black}
        ] table [x=X, y=Index probing] {\datatable};
      
        \addplot [
            fill=gray!25,
            postaction={pattern=grid, pattern color=black}
        ] table [x=X, y=Position sampling] {\datatable};
    
        \addplot [
            fill=yellow!30,
            postaction={pattern=dots, pattern color=black}
        ] table [x=X, y=Index building] {\datatable};
          
          \end{axis}
    \end{tikzpicture}
    
        \caption{$|S| = 10^5$, $\deg_y(T) = 10^1$}
    \end{subfigure}
    
          \caption{$O = |T| = 10^6$}
          \label{fig:synth-1000k}
        \end{figure*}

\begin{figure*}
    \centering
    \ref{namedlegend}
\begin{subfigure}{.48\textwidth}

    \begin{tikzpicture}
        \pgfplotstableread[col sep=comma]{./data/synthetic-paper-data/s_10_d_1000000.csv}\datatable
        
        \begin{axis}[
            axis lines*=left, ymajorgrids,
            width=\linewidth, height=4cm,
            ymin=0,
            ybar stacked,
            bar width=7pt,
            xtick=data,
            xticklabels from table={\datatable}{method},
            xticklabel style={rotate=90,anchor=mid east,font=\scriptsize},
            ylabel={Avg.\ runtime (ms)},    
            legend to name=namedlegend,
            legend columns=-1,
            legend style={draw=none, /tikz/every even column/.append style={column sep=1.2em}},
            /tikz/every even column/.append style={column sep=1.2em},
            draw group line={p}{0.0001}{$0.0001$}{-12ex}{4pt},
            draw group line={p}{0.001}{$0.001$}{-12ex}{4pt},
            draw group line={p}{0.01}{$0.01$}{-12ex}{4pt},
            draw group line={p}{0.1}{$0.1$}{-12ex}{4pt},
            draw group line={p}{0.2}{$0.2$}{-12ex}{4pt},
            draw group line={p}{0.3}{$0.3$}{-12ex}{4pt},
            draw group line={p}{0.4}{$0.4$}{-12ex}{4pt},
            draw group line={p}{0.5}{$0.5$}{-12ex}{4pt},
            draw group line={p}{0.6}{$0.6$}{-12ex}{4pt},
            draw group line={p}{0.7}{$0.7$}{-12ex}{4pt},
            draw group line={p}{0.8}{$0.8$}{-12ex}{4pt},
            draw group line={p}{0.9}{$0.9$}{-12ex}{4pt},
            draw group line={p}{0.99}{$0.99$}{-12ex}{4pt},
            draw group line={p}{1}{$1$}{-12ex}{4pt},
            after end axis/.append code={
                \path [anchor=base east, yshift=-13.5ex]
                    (rel axis cs:0,0) node  {$p$};
            }
        ]
        \addplot [
            fill=green!30!white,
            postaction={pattern=horizontal lines, pattern color=black}
        ] table [x=X, y=Index probing] {\datatable};
        
        \addplot [
            fill=gray!25,
            postaction={pattern=grid, pattern color=black}
        ] table [x=X, y=Position sampling] {\datatable};
    
        \addplot [
            fill=yellow!30,
            postaction={pattern=dots, pattern color=black}
        ] table [x=X, y=Index building] {\datatable};

            \addlegendentry{Index probing}
            \addlegendentry{Position sampling}
            \addlegendentry{Index building}
        
        \end{axis}
\end{tikzpicture}

    \caption{$|S| = 10^1$, $\deg_y(T) = 10^6$}
\end{subfigure}
\hfill    
\begin{subfigure}{.48\textwidth}

    \begin{tikzpicture}
        \pgfplotstableread[col sep=comma]{./data/synthetic-paper-data/s_100_d_100000.csv}\datatable
        
        \begin{axis}[
            axis lines*=left, ymajorgrids,
            width=\linewidth, height=4cm,
            ymin=0,
            ybar stacked,
            bar width=7pt,
            xtick=data,
            xticklabels from table={\datatable}{method},
            xticklabel style={rotate=90,anchor=mid east,font=\scriptsize},
            ylabel={Avg.\ runtime (ms)},    
            legend style={
                at={(0.5,1)},
                anchor=south,
                draw=none,
            },
            legend columns=-1,
            /tikz/every even column/.append style={column sep=1.2em},
            draw group line={p}{0.0001}{$0.0001$}{-12ex}{4pt},
            draw group line={p}{0.001}{$0.001$}{-12ex}{4pt},
            draw group line={p}{0.01}{$0.01$}{-12ex}{4pt},
            draw group line={p}{0.1}{$0.1$}{-12ex}{4pt},
            draw group line={p}{0.2}{$0.2$}{-12ex}{4pt},
            draw group line={p}{0.3}{$0.3$}{-12ex}{4pt},
            draw group line={p}{0.4}{$0.4$}{-12ex}{4pt},
            draw group line={p}{0.5}{$0.5$}{-12ex}{4pt},
            draw group line={p}{0.6}{$0.6$}{-12ex}{4pt},
            draw group line={p}{0.7}{$0.7$}{-12ex}{4pt},
            draw group line={p}{0.8}{$0.8$}{-12ex}{4pt},
            draw group line={p}{0.9}{$0.9$}{-12ex}{4pt},
            draw group line={p}{0.99}{$0.99$}{-12ex}{4pt},
            draw group line={p}{1}{$1$}{-12ex}{4pt},
            after end axis/.append code={
                \path [anchor=base east, yshift=-13.5ex]
                    (rel axis cs:0,0) node  {$p$};
            }
        ]
        
        \addplot [
        fill=green!30!white,
        postaction={pattern=horizontal lines, pattern color=black}
    ] table [x=X, y=Index probing] {\datatable};
    
    \addplot [
        fill=gray!25,
        postaction={pattern=grid, pattern color=black}
    ] table [x=X, y=Position sampling] {\datatable};

    \addplot [
        fill=yellow!30,
        postaction={pattern=dots, pattern color=black}
    ] table [x=X, y=Index building] {\datatable};

        \end{axis}
\end{tikzpicture}

\caption{$|S| = 10^2$, $\deg_y(T) = 10^5$}
\end{subfigure}

\bigskip  

\begin{subfigure}{.48\textwidth}

    \begin{tikzpicture}
        \pgfplotstableread[col sep=comma]{./data/synthetic-paper-data/s_1000_d_10000.csv}\datatable
        
        \begin{axis}[
            axis lines*=left, ymajorgrids,
            width=\linewidth, height=4cm,
            ymin=0,
            ybar stacked,
            bar width=7pt,
            xtick=data,
            xticklabels from table={\datatable}{method},
            xticklabel style={rotate=90,anchor=mid east,font=\scriptsize},
            ylabel={Avg.\ runtime (ms)},    
            legend style={
                at={(0.5,1)},
                anchor=south,
                draw=none,
            },
            legend columns=-1,
            /tikz/every even column/.append style={column sep=1.2em},
            draw group line={p}{0.0001}{$0.0001$}{-12ex}{4pt},
            draw group line={p}{0.001}{$0.001$}{-12ex}{4pt},
            draw group line={p}{0.01}{$0.01$}{-12ex}{4pt},
            draw group line={p}{0.1}{$0.1$}{-12ex}{4pt},
            draw group line={p}{0.2}{$0.2$}{-12ex}{4pt},
            draw group line={p}{0.3}{$0.3$}{-12ex}{4pt},
            draw group line={p}{0.4}{$0.4$}{-12ex}{4pt},
            draw group line={p}{0.5}{$0.5$}{-12ex}{4pt},
            draw group line={p}{0.6}{$0.6$}{-12ex}{4pt},
            draw group line={p}{0.7}{$0.7$}{-12ex}{4pt},
            draw group line={p}{0.8}{$0.8$}{-12ex}{4pt},
            draw group line={p}{0.9}{$0.9$}{-12ex}{4pt},
            draw group line={p}{0.99}{$0.99$}{-12ex}{4pt},
            draw group line={p}{1}{$1$}{-12ex}{4pt},
            after end axis/.append code={
                \path [anchor=base east, yshift=-13.5ex]
                    (rel axis cs:0,0) node  {$p$};
            }
        ]
        
        \addplot [
        fill=green!30!white,
        postaction={pattern=horizontal lines, pattern color=black}
    ] table [x=X, y=Index probing] {\datatable};
    
    \addplot [
        fill=gray!25,
        postaction={pattern=grid, pattern color=black}
    ] table [x=X, y=Position sampling] {\datatable};

    \addplot [
        fill=yellow!30,
        postaction={pattern=dots, pattern color=black}
    ] table [x=X, y=Index building] {\datatable};

        \end{axis}
\end{tikzpicture}

\caption{$|S| = 10^3$, $\deg_y(T) = 10^4$}
\end{subfigure}
\hfill
\begin{subfigure}{.48\textwidth}

    \begin{tikzpicture}
        \pgfplotstableread[col sep=comma]{./data/synthetic-paper-data/s_10000_d_1000.csv}\datatable
        
        \begin{axis}[
            axis lines*=left, ymajorgrids,
            width=\linewidth, height=4cm,
            ymin=0,
            ybar stacked,
            bar width=7pt,
            xtick=data,
            xticklabels from table={\datatable}{method},
            xticklabel style={rotate=90,anchor=mid east,font=\scriptsize},
            ylabel={Avg.\ runtime (ms)},    
            legend style={
                at={(0.5,1)},
                anchor=south,
                draw=none,
            },
            legend columns=-1,
            /tikz/every even column/.append style={column sep=1.2em},
            draw group line={p}{0.0001}{$0.0001$}{-12ex}{4pt},
            draw group line={p}{0.001}{$0.001$}{-12ex}{4pt},
            draw group line={p}{0.01}{$0.01$}{-12ex}{4pt},
            draw group line={p}{0.1}{$0.1$}{-12ex}{4pt},
            draw group line={p}{0.2}{$0.2$}{-12ex}{4pt},
            draw group line={p}{0.3}{$0.3$}{-12ex}{4pt},
            draw group line={p}{0.4}{$0.4$}{-12ex}{4pt},
            draw group line={p}{0.5}{$0.5$}{-12ex}{4pt},
            draw group line={p}{0.6}{$0.6$}{-12ex}{4pt},
            draw group line={p}{0.7}{$0.7$}{-12ex}{4pt},
            draw group line={p}{0.8}{$0.8$}{-12ex}{4pt},
            draw group line={p}{0.9}{$0.9$}{-12ex}{4pt},
            draw group line={p}{0.99}{$0.99$}{-12ex}{4pt},
            draw group line={p}{1}{$1$}{-12ex}{4pt},
            after end axis/.append code={
                \path [anchor=base east, yshift=-13.5ex]
                    (rel axis cs:0,0) node  {$p$};
            }
        ]
        \addplot [
        fill=green!30!white,
        postaction={pattern=horizontal lines, pattern color=black}
    ] table [x=X, y=Index probing] {\datatable};
    
    \addplot [
        fill=gray!25,
        postaction={pattern=grid, pattern color=black}
    ] table [x=X, y=Position sampling] {\datatable};

    \addplot [
        fill=yellow!30,
        postaction={pattern=dots, pattern color=black}
    ] table [x=X, y=Index building] {\datatable};

        \end{axis}
\end{tikzpicture}

\caption{$|S| = 10^4$, $\deg_y(T) = 10^3$}
\end{subfigure}

\bigskip

\begin{subfigure}{.48\textwidth}

    \begin{tikzpicture}
        \pgfplotstableread[col sep=comma]{./data/synthetic-paper-data/s_100000_d_100.csv}\datatable
        
        \begin{axis}[
            axis lines*=left, ymajorgrids,
            width=\linewidth, height=4cm,
            ymin=0,
            ybar stacked,
            bar width=7pt,
            xtick=data,
            xticklabels from table={\datatable}{method},
            xticklabel style={rotate=90,anchor=mid east,font=\scriptsize},
            ylabel={Avg.\ runtime (ms)},    
            legend style={
                at={(0.5,1)},
                anchor=south,
                draw=none,
            },
            legend columns=-1,
            /tikz/every even column/.append style={column sep=1.2em},
            draw group line={p}{0.0001}{$0.0001$}{-12ex}{4pt},
            draw group line={p}{0.001}{$0.001$}{-12ex}{4pt},
            draw group line={p}{0.01}{$0.01$}{-12ex}{4pt},
            draw group line={p}{0.1}{$0.1$}{-12ex}{4pt},
            draw group line={p}{0.2}{$0.2$}{-12ex}{4pt},
            draw group line={p}{0.3}{$0.3$}{-12ex}{4pt},
            draw group line={p}{0.4}{$0.4$}{-12ex}{4pt},
            draw group line={p}{0.5}{$0.5$}{-12ex}{4pt},
            draw group line={p}{0.6}{$0.6$}{-12ex}{4pt},
            draw group line={p}{0.7}{$0.7$}{-12ex}{4pt},
            draw group line={p}{0.8}{$0.8$}{-12ex}{4pt},
            draw group line={p}{0.9}{$0.9$}{-12ex}{4pt},
            draw group line={p}{0.99}{$0.99$}{-12ex}{4pt},
            draw group line={p}{1}{$1$}{-12ex}{4pt},
            after end axis/.append code={
                \path [anchor=base east, yshift=-13.5ex]
                    (rel axis cs:0,0) node  {$p$};
            }
        ]
        \addplot [
        fill=green!30!white,
        postaction={pattern=horizontal lines, pattern color=black}
    ] table [x=X, y=Index probing] {\datatable};
    
    \addplot [
        fill=gray!25,
        postaction={pattern=grid, pattern color=black}
    ] table [x=X, y=Position sampling] {\datatable};

    \addplot [
        fill=yellow!30,
        postaction={pattern=dots, pattern color=black}
    ] table [x=X, y=Index building] {\datatable};

        \end{axis}
\end{tikzpicture}

\caption{$|S| = 10^5$, $\deg_y(T) = 10^2$}
\end{subfigure}
\hfill
\begin{subfigure}{.48\textwidth}

    \begin{tikzpicture}
        \pgfplotstableread[col sep=comma]{./data/synthetic-paper-data/s_1000000_d_10.csv}\datatable
        
        \begin{axis}[
            axis lines*=left, ymajorgrids,
            width=\linewidth, height=4cm,
            ymin=0,
            ybar stacked,
            bar width=7pt,
            xtick=data,
            xticklabels from table={\datatable}{method},
            xticklabel style={rotate=90,anchor=mid east,font=\scriptsize},
            ylabel={Avg.\ runtime (ms)},    
            legend style={
                at={(0.5,1)},
                anchor=south,
                draw=none,
            },
            legend columns=-1,
            /tikz/every even column/.append style={column sep=1.2em},
            draw group line={p}{0.0001}{$0.0001$}{-12ex}{4pt},
            draw group line={p}{0.001}{$0.001$}{-12ex}{4pt},
            draw group line={p}{0.01}{$0.01$}{-12ex}{4pt},
            draw group line={p}{0.1}{$0.1$}{-12ex}{4pt},
            draw group line={p}{0.2}{$0.2$}{-12ex}{4pt},
            draw group line={p}{0.3}{$0.3$}{-12ex}{4pt},
            draw group line={p}{0.4}{$0.4$}{-12ex}{4pt},
            draw group line={p}{0.5}{$0.5$}{-12ex}{4pt},
            draw group line={p}{0.6}{$0.6$}{-12ex}{4pt},
            draw group line={p}{0.7}{$0.7$}{-12ex}{4pt},
            draw group line={p}{0.8}{$0.8$}{-12ex}{4pt},
            draw group line={p}{0.9}{$0.9$}{-12ex}{4pt},
            draw group line={p}{0.99}{$0.99$}{-12ex}{4pt},
            draw group line={p}{1}{$1$}{-12ex}{4pt},
            after end axis/.append code={
                \path [anchor=base east, yshift=-13.5ex]
                    (rel axis cs:0,0) node  {$p$};
            }
        ]
        \addplot [
        fill=green!30!white,
        postaction={pattern=horizontal lines, pattern color=black}
    ] table [x=X, y=Index probing] {\datatable};
    
    \addplot [
        fill=gray!25,
        postaction={pattern=grid, pattern color=black}
    ] table [x=X, y=Position sampling] {\datatable};

    \addplot [
        fill=yellow!30,
        postaction={pattern=dots, pattern color=black}
    ] table [x=X, y=Index building] {\datatable};

        \end{axis}
\end{tikzpicture}

\caption{$|S| = 10^6$, $\deg_y(T) = 10^1$}
\end{subfigure}

    \caption{$O = |T| = 10^7$}
    \label{fig:synth-10000k}
\end{figure*}

 }

\end{document}